\providecommand{\U}[1]{\protect\rule{.1in}{.1in}}
\newtheorem{theorem}{Theorem}[section]
\newtheorem{corollary}[theorem]{Corollary}
\newtheorem{lemma}[theorem]{Lemma}
\newtheorem{remark}[theorem]{Remark}
\newenvironment{proof}[1][Proof]{\noindent \textbf{#1.} }{\  \rule{0.5em}{0.5em}}
\begin{document}
	
	\title{Robust adaptive variable selection in ultra-high dimensional linear regression models}
	\author{ Abhik Ghosh \textsuperscript{a}, María Jaenada \textsuperscript{b}		and Leandro Pardo \textsuperscript{b} 
	\footnote{A. Ghosh. Email: abhik.ghosh@isical.ac.in\\ 
		M. Jaenada. Email: mjaenada@ucm.es\\
		L. Pardo. Email: lpardo@mat.ucm.es}} 
	\affil{\textsuperscript{a} Indian Statistical Institute, Kolkata, India. \textsuperscript{b} Complutense University of Madrid, Spain.}
	\maketitle

	\begin{abstract}
		We consider the problem of simultaneous variable selection and parameter estimation in an ultra-high dimensional linear regression model.
		The adaptive penalty functions are used in this regard to achieve the oracle variable selection property with simpler assumptions and lesser computational burden. Noting the non-robust nature of the usual adaptive procedures (e.g., adaptive LASSO) based on the squared error loss function against data contamination, quite frequent with modern large-scale data sets (e.g., noisy gene expression data, spectra and spectral data), in this paper, we present a new adaptive  regularization procedure using a robust loss function based on the density power divergence (DPD) measure under a general class of error distributions. We theoretically prove that  the proposed adaptive DPD-LASSO estimator of the regression coefficients is highly robust, consistent, asymptotically normal and leads to robust oracle-consistent
		variable selection under easily verifiable assumptions. Numerical illustrations are provided for the mostly used normal and heavy-tailed error densities. Finally, the proposal is applied to analyze an interesting spectral dataset, in the field of chemometrics, regarding the electron-probe X-ray microanalysis (EPXMA) of archaeological glass vessels from the 16th and 17th centuries.
	\end{abstract}
	
	\textbf{keywords:}
		High-dimensional linear regression models; 
		Adaptive LASSO estimator; 
		non-polynomial dimensionality; 
		oracle property; 
		density power divergence.

	\textbf{AMS Code:}
		MSC primary 62F35; MSEC secondary 62J07
		
	\section{Introduction}\label{SEC:intro}
	
	Let us consider the standard linear regression model (LRM) given by 
	\begin{equation}
		\boldsymbol{y}=\boldsymbol{X\beta }+\boldsymbol{\varepsilon },  \label{1.1}
	\end{equation}%
	where $\boldsymbol{y}=\left( y_{1},...,y_{n}\right) ^{T}$ denotes the vector of observations
	from a response variable ${Y}$, $\boldsymbol{X}=\left( 
	\boldsymbol{x}_{1},...,\boldsymbol{x}_{n}\right) ^{T}$ is the design matrix
	containing associated values of the explanatory variable $\mathbb{X}=(X_{1},...,X_{p})\mathbb{\in R}^{p}$, $\boldsymbol{\beta }=(\beta_{1},...,\beta _{p})$ is the regression coefficient vector,  and $\boldsymbol{\varepsilon =}\left( \varepsilon
	_{1},...,\varepsilon _{n}\right) ^{T}$ follows $n$-variate normal distribution with mean vector $\boldsymbol{0}_n$ and
	variance-covariance matrix $\sigma ^{2}\boldsymbol{I}_{n}.$
	High-dimensional statistics under the LRM (\ref{1.1}) refers to the situation where the number of unknown parameters ($p$) is of much larger order than the sample size $n.$ 	Here, we consider the case where  $p$ grows exponentially with $n,$ i.e., $p=O(e^{n^{l}})$ for some $l\in \left( 0,1\right) $; such case is  often referred to as ultra-high dimensional or of non-polynomial dimensionality.
	
	In recent years, several statistical methods, algorithms and theories have been
	developed to perform high-dimensional data analysis. Penalized least square
	methods have become popular for the high-dimensional LRM since Tibshirami  \cite{ten3} introduced the least absolute shrinkage and selection operator (LASSO) estimate. Under the LRM (\ref{1.1}), the LASSO estimate of $\boldsymbol{\beta}$ is defined as 
	\begin{equation} 
		\widehat{\boldsymbol{\beta }}_{\mathrm{L}}=\widehat{\boldsymbol{\beta }}_{
			\mathrm{L}}(\lambda )=\arg \min_{\boldsymbol{\beta }\in \boldsymbol{\ 
				\mathbb{R}
			}^{p}}\left( \left\Vert \boldsymbol{y}-\boldsymbol{X\beta }\right\Vert
		_{2}^{2}+\lambda \left\Vert \boldsymbol{\beta }\right\Vert _{1}\right) ,
		\label{1.2}
	\end{equation}%
	where $\left\Vert \boldsymbol{\beta }\right\Vert_{1}=\sum\limits_{i=1}^{p}\left\vert \beta _{i}\right\vert $ is the $\ell_{1}$-norm, 
	$\left\Vert y-\boldsymbol{X\beta }\right\Vert_{2}^{2}=\sum\limits_{i=1}^{n}\left( y_{i}-\boldsymbol{x}_{i}^{T}\boldsymbol{\beta }\right)^{2}$ 
	is the least-squares loss ($\ell _{2}$-norm), and $\lambda > 0$ is a (penalty) regularization parameter. This estimator is proved to perform  variable selection as some components of $\widehat{\boldsymbol{\beta}}$ become exactly zero depending on the choice of $\lambda$. 
	
	Let $\boldsymbol{\beta}_0 = (\beta_{01}, \ldots, \beta_{0p})^T$  be the true
	parameters in the LRM  (\ref{1.1}). We denote by $S_0=\left\{ j : \beta_{0j} \neq 0\right\} $ the subset of true non-zero coefficients and 
	 assume that the cardinality of $S_0$ is $s_0 \ll n$ (sparsity condition). 
	Let $\widehat{\boldsymbol{\beta }}$ be an estimator of $\boldsymbol{\beta },$ we say that a procedure (or estimator) enjoys variable selection consistency, 
	if $\left\{ j: \widehat{\beta }_{j}\neq 0\right\} =S_0$ with probability tending to one. It is well known that the LASSO tends to  provide biased estimate of regression coefficients and also requires strong conditions (see, e.g., \cite{buhlmann}) for model selection consistency, which often do not hold in practice \cite{fan1, zou}. 
	
	To overcome the drawback with the oracle properties,  Fan and Li  \cite{fan1} proposed the use of some general non-concave penalty function, specifically the SCAD penalty, instead of  LASSO. However, such non-concave penalties increase the computational burden significantly as the dimension ($p$) increases. Alternatively, Zou  \cite{zou} introduced a computationally faster adaptive version of the LASSO estimator, which assigns different weights to different coefficients within the $\ell_1$-penalty. 
	Precisely, the adaptive LASSO estimator of $\boldsymbol\beta$ in the LRM (\ref{1.1}) is defined as

	\begin{equation}
		\widehat{\boldsymbol{\beta }}_{\mathrm{AL}}=\widehat{\boldsymbol{\beta }}_{%
			\mathrm{AL}}(\lambda )=\arg \min_{\boldsymbol{\beta }\in \boldsymbol{\ 
				\mathbb{R}
			}^{p}}\left( \left\Vert \boldsymbol{y}-\boldsymbol{X\beta }\right\Vert
		_{2}^{2}+\lambda \mathop{\textstyle \sum }_{j=1}^{p}\frac{|\beta_j|}{|\widetilde{\beta}_j| + \delta_n\operatorname{I}(\widetilde{\beta}_j = 0)}\right),
		\label{1.21}
	\end{equation}
	where $\widetilde{\beta }_{j}$ is any (initial) consistent estimator of $\beta _{j}$ for $j=1, \ldots, p$, black and  $\delta_n$ is a  very small positive number chosen to avoid division by zero.
	Note that, for any fixed $\lambda $, the components having nonzero initial estimates get relatively  lower penalty than the zero (initial) components (for which the penalty weight goes to infinity). Consequently, the adaptive LASSO estimator is able to reduce the estimation bias and improves
	variable selection accuracy.
	For fixed $p,$  Zou  \cite{zou} proved that the adaptive LASSO has the
	oracle property and for $p>n,$ Huang et al. \cite{Huang} established that, under the partial orthogonality and certain other conditions, 
	the adaptive LASSO estimators are consistent and also efficient when the marginal regression estimators are used as the initial estimators. 
	The convexity of adaptive LASSO criterion ensures that these desired properties are global instead of local
	and further this estimator can be easily computed using the same efficient algorithms 
	that are used for LASSO, namely the  least angle regression (LARS Efron et al. \cite{efron}).
	
	However, the above methods are based on the squared-error loss function whose lack of robustness is well known. 
	Outlying values of $\boldsymbol{x}_{i}$ (leverage point) or extreme values of $\left(\boldsymbol{x}_{i},Y_{i}\right)$ (influence points) 
	have significant influence in the regularization procedures based on $\ell _{2}$ loss function. 
	For this purpose, the penalized M-estimators have been developed, which replaces the loss function by a convex function of the form
	$
	\mathop{\textstyle \sum }\limits_{i=1}^{n}\rho \left( y_{i} - \boldsymbol{x}_{i}^{T}\boldsymbol{\beta }\right).  
	$
	Wang et al. \cite{wang} proposed the least absolute deviation (LAD) loss with  
	$\rho \left( x\right) =\left\vert x\right\vert $ along with the LASSO penalty (LAD-LASSO method). 
	This LAD loss does not work well for small errors as it strongly penalizes the small residuals 
	(Owen \cite{Owen} and Lambert-Lacroix and Zwald \cite{La}). 
	Lambert-Lacroix and Zwald \cite{La} alternatively  proposed using the so called Huber loss, given by
	\begin{equation*}
		\rho \left( x\right) =\left\{ 
		\begin{array}{cc}
			x^{2} & \left\vert x\right\vert \leq M \\ 
			2M\left\vert x\right\vert -M^{2} & \left\vert x\right\vert >M%
		\end{array}%
		\right. .
	\end{equation*}%
	This function is quadratic in small values of $x$ but grows linearly for large values of $x$, and a data-adaptive choice of $M$ was discussed in Fan et al. \cite{Fan2017}. 
	Robust LASSO with Huber's loss function (Huber \cite{Huber}) 
	is robust to outliers in the response variable but not  for leverage points  
	(outliers in the covariates).  Arslan \cite{ar} presented a weighted version of the LAD-LASSO method which is also robust with respect to the leverage points. Other approaches that are robust with respect to leverage point were presented in Khan et al. \cite{khan}, Li et al. \cite{Li} and Alfons et al. \cite{alfons}. In Chang et al. \cite{Le}, 
	a penalized method is presented using Tukey's biweight loss that is resistant to outliers in both the response and the covariates.
	The theory of adaptive LASSO along with the quantile regression loss function has been developed in Fan et al. \cite{Fan/etc:2014}
	for the ultra-high dimensional set-up.  Qi et al. \cite{qi} considered, to gain robustness and efficiency simultaneously, 
	a data-driven convex combination of adaptive LASSO  and LAD-LASSO methods. 
	
	Recently,  Zhang et al. \cite{zhang} and Ghosh and Majumdar  \cite{ghosh2} have considered a robust loss function  
	based on the density power divergence (DPD) of   Basu et al. \cite{basu0}  along with grouped LASSO penalty  
	and the general class of non-concave penalty functions, respectively. 
	This DPD based loss function is seen to provide significantly improved performance, with better trade-offs between efficiency and robustness,	under classical low-dimensional set-ups  (see, e.g., Basu et al. \cite{basu1}, Ghosh and Basu \cite{ghosh1}, among many others)	as well as in high-dimensional set-up with non-concave penalties (Ghosh and Majumdar \cite{ghosh2}).
	In brief, the great properties of the DPD that led us to choose this particular loss function in the present paper include 
	its high-robustness with only a small loss of efficiency, bounded influence functions with decreasing gross-error sensitivities,
	simple interpretation as a intuitive robust generalization of the classical MLE and	the possibility to avoid the complications of non-parametric smoothing while still using a density based divergence; 	see Section \ref{SEC:def} for some brief descriptions and refer to Basu et al. \cite{basu1} for further details and examples.	Further, it has also been shown that the minimum DPD estimator of the regression coefficient can achieve 50\% asymptotic breakdown point 
	under appropriate conditions for the low-dimensional linear regression models and are robust with respect to both outliers and leverage points 	(Ghosh and Basu \cite{ghosh1}). 	Such strong robustness properties of the DPD based loss function are also seen to translate for high-dimensional regression models 	in Zhang et al. \cite{zhang}, Ghosh and Majumdar  \cite{ghosh2} and Ghosh \cite{ghosh}.

	Although the general non-concave penalized DPD-based procedure (DPD-ncv) of  Ghosh and Majumdar \cite{ghosh2} is seen to yield significantly improved robust 	and efficient performance  both in terms of variable selection and parameter estimation and satisfies the oracle variable selection property 	under appropriate conditions, the method appears to be computationally challenging in ultra-high dimensional set-ups with growing number of covariates. 	A typical way to address the computational challenges in ultra-high dimensional scenarios involves employing a two-step process that combines a screening procedure with variable selection. However, this approach sacrifices the inherent advantages of regularization methods, namely the simultaneous robust variable selection and parameter estimation, and may discard truly related variables due to sample contamination in the screening step (suitable robust screening method, e.g., \cite{GhoshThoresen2021}, may be used to avoid the effects of contamination in the screening stage but it would add own computational challenges over the usual simple screening). More importantly, screening methods may have a higher bias than regularization methods, affecting the asymptotic properties of the final parameter estimates if the screening procedure is employed before applying the regularized estimation and selection procedure.

	Therefore, the main purpose of the present paper is to develop a robust yet simpler theoretically sound  regularization procedure for ultra high-dimensional  LRM (\ref{1.1})  that would be able to produce competitive estimation and variable selection performances as the DPD-ncv procedure with lower computational costs. For this purpose, we would stick with the DPD-based loss function but considering an appropriately adaptive penalty function. 
	This is motivated by the fact that the classical adaptive procedures significantly reduce the computational burden compared to a non-concave penalized procedures.	So, in this paper, we combine the advantages from these two fronts, the adaptive penalty and the robust DPD-based loss function, to present a new robust adaptive regularization procedure under the ultra-high dimensional LRM which will have the desired oracle property. In the following, we summarize the main contributions of the present paper, highlighting its difference from the closely related work of \cite{ghosh2}, as follows: 

\begin{itemize}
	
	\item Firstly, the present work differs from the work of \cite{ghosh2} mainly in the choice of the penalty function, which greatly simplifies the assumptions required for theoretical derivations as well as the practical computations. 	We use a general class of data-driven adaptive penalty functions instead of non-concave penalties so the stochastic nature of the penalty made our procedures much simple	without significance compromise in the model selection and estimation performances. As a particular case of our proposed robust adaptive procedures, we additionally provide a computationally faster approximation to the DPD-ncv method of \cite{ghosh2}. 
	\item Secondly, although the methodological idea is a simple modification from \cite{ghosh2}, the theoretical derivations of the oracle consistency, asymptotic distributions and the robustness of the proposed procedure required extensive non-trivial extensions of the existing literature, which are the major contributions of the present paper. The required assumptions in our theoretical derivations are much weaker than those used in \cite{ghosh2}  for LASSO penalty	and, hence, they hold more easily in real-life applications.
	\item Finally, we have provided extensive empirical illustrations comparing the performance and computational costs	(run times) of our proposed procedures 	with the DPD-ncv from \cite{ghosh2} and other existing robust and non-robust model penalized procedures. 	It has been illustrated in  detail that our simple and computationally efficient proposal of adaptively penalized DPD based estimates of the regression coefficients performs well under data contaminations (competitive to the computationally extensive and
	complex DPD-ncv method of \cite{ghosh2}).
	\noindent
	Therefore, in terms of both theoretical and computational aspects, the present work provides a significant improvement over the existing work of \cite{ghosh2} by considering (data) adaptive penalty function in defining the objective function along with the robust DPD-based loss function. 
\end{itemize}

	At this point, we would like to emphasis that we are considering robustness against pre-specified parametric family of model distributions
	and not non-parametric robustness with fully unspecified model distribution. That is, we always assume a known model distribution,
	the error distribution in the LRM (\ref{1.1}) in the present paper, that is followed by the majority of the observed data
	and talk about robust estimation and variable selection against contamination in parts of the data (e.g., outliers). 
	Such parametrically robust procedures provide significantly improved efficiency compared to the fully non-parametric methods
	when a parametric model can be assumed for the majority of the data except possibly for a part of it being misspecified or containing outliers;	we refer readers to Basu et al.~\cite{basu1} for more details about this approach of robust and (highly) efficient parametric inference. 	Accordingly, our proposed estimation and variable section procedure would be dependent on the pre-assumed form of the error density in the  LRM (\ref{1.1});	we will first start the description of our proposal by assuming normal error density but subsequently extend the definitions for general error distributions from a location-scale family (where the density still have to be pre-specified) in Section \ref{SEC:def}. As the DPD loss function is already seen to provide better trade-off between efficiency and robustness compared to other such parametrically robust procedures	\cite{zhang, ghosh2, ghosh}, in the present paper, we mainly compare the performance of DPD-based loss function with different choices of penalty function to propose and justify a computationally simpler approximation to the general non-concave penalties as mentioned above.

	The rest of the paper is organized as follows: We start with the definition of the DPD loss function and the corresponding robust regularized estimator with adaptively weighted penalty functions in the LRM (Section \ref{SEC:def}). The robustness of the proposed adaptive procedure is examined theoretically via its influence function analysis (Section \ref{SEC:if}). 
	Further,  large sample oracle model selection property of our proposed estimator is derived along with its consistency and asymptotic normality  (Section \ref{SEC:oracle}). 	Subsequently, an efficient computational algorithm has been discussed (Section \ref{SEC:computation})	and the finite-sample performance of the proposal is illustrated through simulation studies, 	comparing its performance with that of other existing robust and non-robust regularization  methods (Section \ref{sec:simulation}). 
	Finally, the proposed method is applied to a real high-dimensional data set from the field of chemometrics, 	highlighting the advantage of our proposal in the presence of leverage points (Section \ref{SEC:realdata}). For brevity, the proofs of all the results, details of the computational algorithm 	and additional numerical results are presented in the Appendix, available in the Online Supplementary Material.    
	
	\section{The proposed estimation procedure based on the density power divergence loss and the	adaptive LASSO penalty}
	\label{SEC:def}
	
	The DPD family  represents a rich class of density-based divergences.  
	Given two densities $g$ and  $f$ with respect to some common dominating measure, the DPD between them is defined, as a
	function of a nonnegative tuning parameter $\gamma $, as 
	\begin{equation}
		d_{\gamma }(g,f)=\left\{ 
		\begin{array}{ll}
			\int \left\{ f^{1+\gamma }(x)-\left( 1+\frac{1}{\gamma }\right) f^{\gamma
			}(x)g(x)+\frac{1}{\gamma }g^{1+\gamma }(x)\right\} dx, & \text{for}\mathrm{~}%
			\gamma >0, \\[2ex] 
			\int g(x)\log \frac{g(x)}{f(x)} dx, & \text{for}\mathrm{~}%
			\gamma =0.%
		\end{array}%
		\right.  \label{1.5}
	\end{equation}
	The quantities defined in Equation (\ref{1.5}) are genuine divergences in the sense that 
	$d_{\gamma }(g,f)\geq 0$ for any densities $g$ and $f$ and all $\gamma \geq 0$, 
	and $d_{\gamma }(g,f)$ is equal to zero if and only if the densities $g$ and $f$ are identically equal.

	For a brief background, let us consider the parametric model family of densities $\{f_{\boldsymbol{\theta }}:%
	\boldsymbol{\theta }\in \Theta \subset {\mathbb{R}}^{p}\}$as the model for a population having true density $g$ and  
	true distribution function $G$. The minimum DPD
	functional $T_{\gamma }(G)$ at $G$ is  defined  as $d_{\gamma
	}(g,f_{T_{\gamma }(G)})=\min_{\boldsymbol{\theta }\in \Theta }d_{\gamma
	}(g,f_{\boldsymbol{\theta }})$. Clearly, the third  term $\int g^{1+\gamma }(x)dx$ has
	no role in the minimization of $d_{\gamma }(g,f_{\boldsymbol{\theta }})$
	over $\boldsymbol{\theta }\in \Theta $. Thus, the essential objective
	function for the minimum DPD functional $%
	T_{\gamma }(G)$ reduces to 
	\begin{equation}
		\int \left\{ f_{\boldsymbol{\theta }}^{1+\gamma }(x)-\left( 1+\frac{1}{%
			\gamma }\right) f_{\boldsymbol{\theta }}^{\gamma }(x)g(x)\right\} dx=\int f_{%
			\boldsymbol{\theta }}^{1+\gamma }(x)dx-\left( 1+\frac{1}{\gamma }\right)
		\int f_{\boldsymbol{\theta }}^{\gamma }(x)dG(x).
		\label{EQ:DPDF_obj}
	\end{equation}%
	Thus, given a random sample $y_{1},\ldots ,y_{n}$ from the distribution $G$, 
	we can estimate the above objective function by replacing $G$ with its empirical estimate $G_{n}$. 
	For a given tuning parameter $\gamma $, therefore, the minimum DPD estimator (MDPDE) 
	$\widehat{\boldsymbol{\theta }}_{\gamma }$ of $\boldsymbol{\theta }$ can be obtained by minimizing the objective function 
	\begin{equation}
		\begin{aligned}
			H_{n,\gamma }(\boldsymbol{\theta )} &= \int f_{\boldsymbol{%
					\theta }}^{1+\gamma }(x)dx-\left( 1+\frac{1}{\gamma }\right) \int f_{%
				\boldsymbol{\theta }}^{\gamma }(x)dG_{n}(x) 
			\\
			&=\int f_{\boldsymbol{\theta }}^{1+\gamma }(x)dx
			-\left( 1+\frac{1}{\gamma }%
			\right) \frac{1}{n}\sum_{i=1}^{n}f_{\boldsymbol{\theta }}^{\gamma }(y_{i}) 
		\end{aligned}  \label{1.6} 
	\end{equation}
	over $\boldsymbol{\theta }\in \Theta $. 
	Importantly, unlike many other divergence based estimation procedures, 
	the  minimization problem associated with the MDPDE does not require the use of a non-parametric density estimate. 
	
	Durio and Isaia \cite{da} extended the concept of the MDPDE to the problem of robust estimation in the LRM with $n> p$ (low dimensional settings).  
	For such classical LRM, $f_{\boldsymbol{\theta}}(y_i)$ is a normal density with mean $\boldsymbol{x}_i^T\boldsymbol{\beta}$
	and variance $\sigma^2$. It is a simple exercise to establish that, in this case 
	the corresponding DPD loss function  has the form
	\begin{equation}
		L_{n,\gamma }(\boldsymbol{\beta},\sigma)=\frac{1}{\left( 2\pi \right)
			^{\gamma /2}\sigma ^{\gamma }}\left( \frac{1}{\sqrt{\gamma +1}}-\frac{\gamma
			+1}{\gamma }\frac{1}{n}\sum\limits_{i=1}^{n}\exp \left\{ -\gamma \frac{%
			\left( y_{i}-\boldsymbol{x}_{i}^{T}\boldsymbol{\beta })\right) ^{2}}{2\sigma
			^{2}}\right\} \right) +\frac{1}{\gamma }.  \label{1.7}
	\end{equation}%
	The term $1/\gamma$ has been introduced to get the log-likelihood function as a limiting case as $\gamma \downarrow 0 $,	and therefore the corresponding MDPDE (at $\gamma=0$) is nothing but the usual maximum likelihood estimator (MLE).	When both $Y$ and $\mathbb{X}$ are random but we only assume the parametric model for the conditional distribution of $Y$ given $\mathbb{X}$, 	the loss function $L_{n,\gamma }(\boldsymbol{\beta},\sigma)$ in (\ref{1.7}) can be seen as an empirical estimate of the expectation (with respect to the unknown covariate distribution) of the DPD objective function (\ref{EQ:DPDF_obj}) 	between the conditional data and model densities of $Y$ given $\mathbb{X}$. 	However, under the fixed design setup with non-stochastic covariates $\boldsymbol{x}_{i}$, $i=1,...,n,$ the only random observations $y_{1},...,y_{n}$ are independent but non-homogeneous. 
	Ghosh and Basu \cite{ghosh1}  studied this problem and suggested to minimize the average DPD measure between 	the data and the model densities for each given covariate value. 	Interestingly, this approach also leads to the same loss function given in (\ref{1.7}). 
	Therefore the loss function $L_{n,\gamma }(\boldsymbol{\beta},\sigma)$ in (\ref{1.7}), referred to as the DPD loss function (with tuning parameter $\gamma $), 
	can be used to obtain robust MDPDE under the LRM (\ref{1.1}) with normal errors for both stochastic and fixed design matrices.

	In this paper, we first consider the penalized objective function constructed with the DPD-loss function and the standard adaptive lasso penalty from (\ref{1.21}), as given by
	\begin{equation}
		Q_{n,\gamma ,\lambda }(\boldsymbol{\beta},\sigma)=L_{n,\gamma }(\boldsymbol{\beta},\sigma)+ \lambda \sum_{j=1}^p \frac{|\beta_j|}{|\widetilde{\beta}_j| + \delta_n\operatorname{I}(\widetilde{\beta}_j = 0)},  \label{EQ:DPD_AL}
	\end{equation}
	where $\lambda$ is the usual regularization parameter and $\gamma>0$  is a robustness tuning parameter.
	As we will show later, we need this initial estimator $\widetilde{\beta }_{j}$ to be robust to achieve the robustness of the final estimator, 
	say $(\widehat{\boldsymbol{\beta}}_{\gamma,	\lambda}, \widehat{\sigma}_{\gamma, \lambda})$, 
	obtained by minimizing $Q_{n,\gamma ,\lambda }(\boldsymbol{\beta},\sigma)$. 
	We will refer to these final estimators $(\widehat{\boldsymbol{\beta}}_{\lambda,\gamma}, \widehat{\sigma}_{\lambda,\gamma})$ 
	as the adaptive DPD-LASSO (Ad-DPD-LASSO) estimator of $(\boldsymbol{\beta}, \sigma)$. 

	Note that, we consider the minimization of $Q_{n,\gamma ,\lambda }(\boldsymbol{\beta},\sigma)$ 
	with respect to both $\boldsymbol{\beta}$ and $\sigma$ for their simultaneous estimation; 
	the additional adaptive LASSO penalty on $\boldsymbol{\beta}$ helps sparse selection of the components of $\boldsymbol{\beta}$ 
	with oracle (variable selection) consistency. At $\gamma = 0$, 	the Ad-DPD-LASSO  leads to the non-robust penalized MLEs of	$\boldsymbol{\beta}$ and $\sigma$. Furthermore, for known $\sigma$, it leads to the classical adaptive LASSO  in (\ref{1.21}). Thus, for $\gamma>0$, the minimization of $Q_{n,\gamma ,\lambda}(\boldsymbol{\beta}, \sigma )$ provides 
	a generalization of the adaptive LASSO estimator $\widehat{\boldsymbol{\beta }}_{\mathrm{AL}}$ 
	with the additional benefit of robustness against data contamination. 
 
	Before starting the theoretical analysis of our proposal, 
	let us further extend its definition  by considering the more general class of adaptively weighted LASSO penalty  and a location-scale family of error distribution, as in  Ghosh and Majumdar \cite{ghosh2}. 	Letting the random errors $\epsilon_i$s in the model (\ref{1.1}) be independent and identically distributed 	having a location-scale density of the form $\frac{1}{\sigma} f\left(\frac{\epsilon}{\sigma}\right)$ with location $0$ and scale $\sigma$ ($f$ being an univariate distribution with mean 0 and variance 1), 	the corresponding DPD loss function is given by (generalizing from (\ref{1.7})) 
	\begin{align}
		L_{n, \gamma}(\boldsymbol{\beta}, \sigma) = \frac{1}{\sigma^{\gamma}}%
		\left(M_f^{(\gamma)} - \frac{1+\gamma}{\gamma} \frac{1}{n} \sum_{i=1}^n
		f^{\gamma}\left(\frac{y_i - \boldsymbol{x}_i^T\boldsymbol{\beta}}{\sigma}%
		\right)\right) + \frac{1}{\gamma},  \label{EQ:DPD_lossGen}
	\end{align}
	where $M_f^{(\gamma)} = \int f(\epsilon)^{1+\gamma}d\epsilon$ is assumed to exist finitely. 
	Then, the generalized adaptively weighted DPD-LASSO (AW-DPD-LASSO) estimators of $(\boldsymbol{\beta}, \sigma)$ 
	is the minimizer of
	\begin{equation}
		Q_{n,\gamma,\lambda }(\boldsymbol{\beta }, \sigma)=L_{n,\gamma }(\boldsymbol{%
			\beta }, \sigma) + \lambda\sum\limits_{j=1}^{p} w\left(\left\vert \widetilde{\beta}_{j}\right\vert\right)\left\vert \beta_{j}\right\vert,
		\label{EQ:DPD_AL_Gen}
	\end{equation}
	with $L_{n,\gamma }(\boldsymbol{\beta }, \sigma)$ being now given by (\ref{EQ:DPD_lossGen}) and $w$ is a suitable weight function. 	Note that, the general objective function (\ref{EQ:DPD_AL_Gen}) coincides with (\ref{EQ:DPD_AL})	for a hard-thresholding weight function $w(s) = (s + \delta_nI(s =  0))^{-1}$ for some very small positive number $\delta_n$	and standard normal error density. 	Although the weights are generally assumed to be stochastic depending on the initial estimators,	they could also be non-stochastic; for example, if $w(s)=1$ for all $s$, the general objective function (\ref{EQ:DPD_AL_Gen})	coincides with that of  the DPD-LASSO.
	
	In the rest of the paper, unless otherwise mentioned, we will denote by $L_{n, \gamma}(\boldsymbol{\beta},\sigma)$ 
	and $Q_{n,\gamma ,\lambda }(\sigma ,\boldsymbol{\beta })$ the generalized quantities in (\ref{EQ:DPD_lossGen}) and (\ref{EQ:DPD_AL_Gen}),
	respectively, and derive the theoretical results for the general class of AW-DPD-LASSO estimators under the ultra-high dimensional set-up.
	The simplification of all the results for the Ad-DPD-LASSO estimator (and some others) will also be provided as special cases.
	All the theoretical results presented here are valid for any general location-scale family of error distributions as long as it satisfies the associated assumptions.

	\begin{remark}[Connection with the Non-concave penalized DPD estimators]\label{REM:DPD-NCV}
		{{\rm   
				Given a penalty function $p_{\lambda}(\cdot)$, with  regularization parameter $\lambda$, 
				the DPD-ncv estimator is the minimizer of the penalized objective function 
				$L_{n,\gamma }(\boldsymbol{\beta}, \sigma) + \lambda\sum_{j=1}^{p} p_{\lambda}\left(| \beta_{j}|\right),$
				which is computationally challenging in higher dimensions. 
				However, in view of our general AW-DPD-LASSO estimator and its objective function (\ref{EQ:DPD_AL_Gen}),
				we can obtain an easily computable approximation of the DPD-ncv estimator. 
				Following the idea from Zou and Li \cite{Zou3} and Fan et al. \cite{Fan/etc:2014},
				 given a good initial estimator $\widetilde{\boldsymbol{\beta}}$,
				we can use the following approximation:
				$$
				p_{\lambda}(|\beta_j|) \approx p_{\lambda}(|\widetilde{\beta}_{j}|) 
				+ p_{\lambda}'(|\widetilde{\beta}_j|) (|\beta_j|- |\widetilde{\beta}_j|).
				$$
				Therefore, an AW-DPD-LASSO estimator with weight function $w = p_{\lambda}'$,
				which is much easier to compute even in ultra-high dimension, 
				is expected to work as a substitute for  the corresponding DPD-ncv estimator. 
				We will verify its performance both theoretically and empirically  in the subsequent sections for the popular SCAD penalty function, 
				for which 
				\begin{eqnarray}
					w(s) = p_{\lambda}'(s) =I(s\leq \lambda) + \frac{(a\lambda - s)_{+}}{(a-1)\lambda_n}I(s>\lambda), 
					\label{EQ:SCAD}
				\end{eqnarray}
				with $a>2$ being a tuning constant suggested from the common literature as $a=3.7$. We will refer to (\ref{EQ:SCAD}) as the SCAD weight function.
		}}
	\end{remark}

	\section{Robustness: Influence Function Analyses}
	\label{SEC:if}
	
	The influence function (IF) is a classical tool for measuring (local) robustness 	which indicates the possible asymptotic bias in the estimation functional due to an infinitesimal	contamination (Hampel et al. \cite{Hampel/etc:1986}). The concept has been extended	rigorously for the penalized estimators by  Avella-Medina \cite{Avella-Medina:2017}, 	and this extended IF has been further used in the high-dimensional context by Ghosh and Majumdar \cite{ghosh2}.	Here, we will derive the IF for our AW-DPD-LASSO estimators 	$(\widehat{\boldsymbol{\beta}}_{\gamma, \lambda}, \widehat{\sigma}_{\gamma, \lambda})$, 	to examine their theoretical (local) robustness against data contamination.
	
	In order to define the IF, we first extend the definition of the AW-DPD-LASSO estimator as a statistical functional.	Assuming the true joint distribution of $(Y, \mathbb{X})$, to be $G(y, \boldsymbol{x})$, 	the statistical functional $\boldsymbol{T}_{\gamma,\lambda}(G)=(\boldsymbol{T}_{\gamma,\lambda}^{\beta}(G), T_{\gamma, \lambda}^{\sigma}(G))$ 	corresponding to $(\widehat{\boldsymbol{\beta}}_{\gamma, \lambda}, \widehat{\sigma}_{\gamma, \lambda})$	is defined as the minimizer of 
	\begin{eqnarray}
		Q_{\gamma, \lambda}(\boldsymbol{\beta}, \sigma) = \int L_{\gamma}^\ast((y, 
		\boldsymbol{x});\boldsymbol{\beta}, \sigma) dG(y,\boldsymbol{x}) +
		\lambda\sum\limits_{j=1}^{p} w\left(\left\vert
		U_{j}(G)\right\vert\right)\left\vert \beta_{j}\right\vert,
		\label{EQ:penalDPD_lossFunc}
	\end{eqnarray}
	with respect to $\boldsymbol{\theta} = (\boldsymbol{\beta}, \sigma)$, where $%
	\boldsymbol{U}(G) = (U_1(G), \ldots, U_p(G))$ is the statistical functional
	corresponding to the initial estimator $(\widetilde{\beta }_{j})_{j=1,
		\ldots, p}$ and 
	\begin{eqnarray}
		L_{\gamma}^\ast((y, \boldsymbol{x});\boldsymbol{\beta}, \sigma) &=&  \frac{1%
		}{\sigma^{\gamma}}\left(M_f^{(\gamma)} - \frac{1+\gamma}{\gamma}
		f^{\gamma}\left(\frac{y - \boldsymbol{x}^T\boldsymbol{\beta}}{\sigma}%
		\right)\right) + \frac{1}{\gamma}.
	\end{eqnarray}
	It is straightforward that the objective function in (\ref{EQ:penalDPD_lossFunc})
	coincides with 
	the empirical objective function (\ref{EQ:DPD_AL_Gen}) when $G$ is substituted by 
	$G_n$, and
	hence $(\boldsymbol{T}_{\gamma,\lambda}^{\beta}(G_n), T_{\gamma,
		\lambda}^{\sigma}(G_n)) =(\widehat{\boldsymbol{\beta}}_{\gamma, \lambda}, 
	\widehat{\sigma}_{\gamma, \lambda})$.
	
	Note that, by definition, our AW-DPD-LASSO estimator also	belongs to the class of M-estimators considered in  Avella-Medina \cite{Avella-Medina:2017},	with their $L(Z, \theta)$ function coinciding with our $L_{\gamma}^\ast((Y, \boldsymbol{X});\boldsymbol{\theta})$. 	We will apply their  extended definition of IF  to derive the IF of our estimator. 
	The major problem for extending the theory of IFs  is the non-differentiability of the penalty function at zero 
	(for most common weight function including the one for adaptive LASSO); 
	Following Avella-medina \cite{Avella-Medina:2017}, we can rather consider a sequence of continuous and infinitely differentiable penalties, $\{p_{m, \lambda}(s, t(G))\}_{m\geq 1}$,	which converges to our adaptively weighted LASSO penalty $\lambda w\left(\left\vert t(G)\right\vert\right)\left\vert s\right\vert $ 	in the Sobolev space as $m\rightarrow\infty$; 
	note that they possibly depend on the functional $t(G) = {U}_j(G)$, the $j$-th component of the initial estimator $\boldsymbol{U}(G)$. 
	Correspondingly, we define the statistical functionals $\boldsymbol{T}_{m, \gamma,\lambda}(G) 
	= (\boldsymbol{T}_{m,\gamma,\lambda}^{\beta}(G), T_{m, \gamma, \lambda}^{\sigma}(G))$, for $m=1, 2, \ldots$, 
	as the minimizer of 
	\begin{eqnarray}
		Q_{m,\gamma, \lambda}(\boldsymbol{\beta}, \sigma) = \int L_{\gamma}^\ast((y, 
		\boldsymbol{x});\boldsymbol{\beta}, \sigma) dG(y,\boldsymbol{x}) +
		\sum\limits_{j=1}^{p} p_{m, \lambda}(\beta_j, U_j(G)),
		\label{EQ:penalDPD_lossFunc_m}
	\end{eqnarray}
	with respect to $\boldsymbol{\theta} = (\boldsymbol{\beta}, \sigma)$. 
	One can then define the IF of the AW-DPD-LASSO estimator, at the contamination points $(y_t, \boldsymbol{x}_t)$, as the limit of the IFs of 
	$\boldsymbol{T}_{m, \gamma,\lambda}(G)$ as 
	\begin{eqnarray}
		\mathcal{IF}({(y_t, \boldsymbol{x}_t)}, \boldsymbol{T}_{\gamma,\lambda}, G) =
		\lim_{m\rightarrow \infty}\mathcal{IF}({(y_t, \boldsymbol{x}_t)}, 
		\boldsymbol{T}_{m, \gamma, \lambda}, G).  \label{EQ:IF_DefLim}
	\end{eqnarray}

	Now, to compute the IFs of $\boldsymbol{T}_{m, \gamma,\lambda}(G)$ at the contamination points $(y_t, \boldsymbol{x}_t)$, 
	we consider its estimating equation obtained by equating the derivatives of $Q_{m,\gamma, \lambda}(\boldsymbol{\beta}, \sigma)$, 
	with respect to the parameters $\boldsymbol{\theta}=(\boldsymbol{\beta}, \sigma)$, to zero. 
	Then, by some standard calculations, these estimating equations can be simplified as 
	\begin{eqnarray}
		\left. 
		\begin{array}{lll}
			\frac{(1+\gamma)}{\sigma^{\gamma+1}} \int \psi_{1, \gamma}\left(\frac{y - 
				\boldsymbol{x}^T\boldsymbol{\beta}}{\sigma}\right)\boldsymbol{x}dG(y, 
			\boldsymbol{x}) & + \boldsymbol{P}_{m,\lambda}^{\ast}(\boldsymbol{\beta}, 
			\boldsymbol{U}(G)) & = \boldsymbol{0}_{p}, \\ 
			\frac{(1+\gamma)}{\sigma^{\gamma+1}} \int \psi_{2, \gamma}\left(\frac{y - 
				\boldsymbol{x}^T\boldsymbol{\beta}}{\sigma}\right)dG(y, \boldsymbol{x}) &  & 
			= 0.%
		\end{array}
		\right\}  \label{EQ:Est_Eqnm}
	\end{eqnarray}
	where $\psi_{1, \gamma}(s)=u(s)f^{\gamma}(s)$, $\psi_{2, \gamma}(s) = \{s
	u(s)+1\} f^{\gamma}(s) -\frac{\gamma}{{\gamma+1}} M_f^{(\gamma)}$, $%
	u=f^{\prime }/f$ with $f^{\prime }$ denoting the derivative of $f$ and $\boldsymbol{P}_{m,\lambda}^{\ast}(\boldsymbol{\beta}, \boldsymbol{U}(G))$ is
	a $p$-vector having $j$-th element as $\frac{\partial}{\partial\beta_j}p_{m,
		\lambda}(\beta_j, U_j(G))$. Now, we substitute the contaminated distribution 
	$G_{\epsilon}=(1-\epsilon)G+\epsilon\wedge_{(y_t, \boldsymbol{x}_t)}$, with $%
	\epsilon$ and $\wedge_{(y_t, \boldsymbol{x}_t)}$ being the contamination
	proportion and the degenerate contamination distribution at $(y_t, 
	\boldsymbol{x}_t)$, respectively, in place of $G$ in the above estimating
	equations (\ref{EQ:Est_Eqnm}) and differentiate with respect to $\epsilon$
	at $\epsilon=0$. Collecting terms after some algebra, and assuming all the
	relevant integrals exist finitely, we get the influence function of $%
	\boldsymbol{T}_{m, \gamma,\lambda}$ as given by 
	\begin{eqnarray}
		&& \mathcal{IF}({(y_t, \boldsymbol{x}_t)}, \boldsymbol{T}_{m, \gamma, \lambda}, G)
		= - (\boldsymbol{J}_{\gamma}^{\ast})^{-1} 
		\begin{bmatrix}
			\begin{array}{c}
				\hspace{-0.2cm} \frac{(1+\gamma)}{\sigma^{\gamma+1}}\psi_{1, \gamma}\left(\frac{y_t - 
					\boldsymbol{x}_t^T\boldsymbol{\beta}}{\sigma}\right)\boldsymbol{x}_t + \boldsymbol{P}^{\ast }_{m,\lambda}(\boldsymbol{\beta},\boldsymbol{U}(G))\\
				~~~~~~~~~~~~~~ +  \boldsymbol{P}_{m,\lambda}^{\ast(2)}(\boldsymbol{\beta}, \boldsymbol{U}(G))%
				\mathcal{IF}({(y_t, \boldsymbol{x}_t)}, \boldsymbol{U}, G) \\ 
				\hspace{-3cm}\frac{(1+\gamma)}{\sigma^{\gamma+1}}\psi_{2,\gamma}\left(\frac{y_t - 
					\boldsymbol{x}_t^T\boldsymbol{\beta}}{\sigma}\right)%
			\end{array}%
		\end{bmatrix}%
		.  \notag  \label{EQ:IF_DPDLASSOm}
	\end{eqnarray}
	where $\mathcal{IF}({(y_t, \boldsymbol{x}_t)}, \boldsymbol{U}, G)$ denote
	the IF of the initial estimator $\boldsymbol{U}$, $\boldsymbol{P}%
	_{m,\lambda}^{\ast(k)}(\boldsymbol{\beta}, \boldsymbol{U}(G))$ is a $p\times
	p$ diagonal matrix with $j$-th diagonal being the derivative of the $j$-th
	component of $\boldsymbol{P}_{m,\lambda}^{\ast}(\boldsymbol{\beta}, 
	\boldsymbol{U}(G))$ with respect to its $k$-th argument for $k=1,2$, and the 
	$(p+1)\times (p+1)$ matrix $\boldsymbol{J}_{\gamma}^{\ast} = \boldsymbol{J}_{\gamma}^{\ast}(G;\boldsymbol{%
		\beta}, \sigma) = \boldsymbol{J}_{\gamma}(G;\boldsymbol{\beta}, \sigma) + %
	\mbox{diag}\left\{\boldsymbol{P}_{m,\lambda}^{\ast(1)}(\boldsymbol{\beta}, 
	\boldsymbol{U}(G)), 0 \right\}$ with 
	\begin{align*}
		\boldsymbol{J}_{\gamma}(G;\boldsymbol{\beta}, \sigma) 
		=E_G\left[\frac{\partial^2L_{\gamma}^\ast((y, \boldsymbol{x});\boldsymbol{\beta}, \sigma) }{\partial\boldsymbol{\beta}\partial\boldsymbol{\beta}^T}\right]
		= - \frac{(1+\gamma)}{\sigma^{\gamma+2}} E_G%
		\begin{bmatrix}
			\begin{array}{cc}
				J_{11, \gamma}\left(\frac{y - \boldsymbol{x}^T\boldsymbol{\beta}}{\sigma}%
				\right)\boldsymbol{x}\boldsymbol{x}^T & J_{12, \gamma}\left(\frac{y - 
					\boldsymbol{x}^T\boldsymbol{\beta}}{\sigma}\right)\boldsymbol{x} \\ 
				J_{12, \gamma}\left(\frac{y - \boldsymbol{x}^T\boldsymbol{\beta}}{\sigma}%
				\right)\boldsymbol{x}^T & J_{22, \gamma}\left(\frac{y - \boldsymbol{x}^T%
					\boldsymbol{\beta}}{\sigma}\right)%
			\end{array}%
		\end{bmatrix}%
		, 
	\end{align*}
	where $J_{11, \gamma}(s) = \{\gamma u^2(s) + u^{\prime}(s)\}f^{\gamma}(s)$, $J_{12,
		\gamma}(s) = \{(1+\gamma) u(s) - \gamma s u^2(s) +su^{\prime \gamma}(s)$ and 
	$J_{22, \gamma}(s) = \{(1+\gamma)(1+2su(s)) + s^2 u^{\prime 2
	}u^2(s)\}f^{\gamma}(s) - \gamma M_f^{(\gamma)}$.
	Throughout this paper, we will assume that $f$ is such that 
	$J_{ii, \gamma}(s)>0$ for all $s$ and $i=1,2$.
	
	It has been shown in Proposition 1 of Avella-Medina \cite{Avella-Medina:2017} that under certain conditions including the existence of the above IF of $\boldsymbol{T}_{m, \gamma,\lambda}$, compactness of the parameter space $\Theta$ and the
	continuity of the relevant functions in $\boldsymbol{\theta}=(\boldsymbol{\beta}, \sigma)$, the limit of $\mathcal{IF}({(y_t, \boldsymbol{x}_t)},	\boldsymbol{T}_{m, \gamma,\lambda}, G)$ exists as $m\rightarrow\infty$ and	the limit is also independent of the choice of the differentiable penalty sequence $p_{m, \lambda}(s).$ Therefore, we can uniquely define the IF of our AW-DPD-LASSO estimator 
	$\boldsymbol{T}_{\gamma,\lambda}$ by \eqref{EQ:IF_DefLim} with any appropriate penalty sequence and the resulting
	IF will, in fact, be the distributional derivative of $\boldsymbol{T}_{\gamma,\lambda}(G_{\epsilon})$ with respect to $\epsilon$ at $\epsilon=0.$	So, we  take a particular choice of differentiable penalty functions as $p_{m, \lambda}(s, U_j(G)) =
	\lambda w\left(h_m(U_j(G))\right)h_m(s) $ where the infinitely	differentiable function 
	\begin{equation*}
		h_m(s) = \frac{2}{m}\log(e^{sm}+1) - s \rightarrow |s|, ~~~~\mbox{ as }
		m\rightarrow \infty. 
	\end{equation*}
	This particular choice of $p_m$ satisfies all the required regularity conditions for convergence, stated in Avella-Medina \cite{Avella-Medina:2017}, of the sequence of IFs  (of $\boldsymbol{T}_{m,\gamma,\lambda}$) to the unique limiting IF (of $\boldsymbol{T}_{\gamma, \lambda}$). Therefore, calculating the IFs of $\boldsymbol{T}_{m, \gamma,\lambda}$ with this
	particular penalty function for each $m$ and taking limit as $%
	m\rightarrow\infty$, we have derived the IF of our AW-DPD-LASSO estimator $\boldsymbol{T}_{\gamma,\lambda}$ which is presented in
	the following theorem.
	
	\begin{theorem}
		\label{THM:IFlim_MNPDPDE} Consider the above-mentioned set-up with the
		general error density $f$ and the true parameter value $\boldsymbol{\theta}^g = (\boldsymbol{\beta}^g, \sigma^g) = T_{\gamma, \lambda}(G)$,
		where $\boldsymbol{\beta}^g$ is \textit{sparse} with only $s (<n)$ non-zero
		components (recall $p>>n$). Without loss of generality, assume $%
		\boldsymbol{\beta}^g=(\boldsymbol{\beta}_1^{gT}, \boldsymbol{0}_{p-s}^T)^T$,
		where $\boldsymbol{\beta}_1^{g}$ contains all and only $s$-non-zero elements
		of $\boldsymbol{\beta}^{g}$. Let us denote $\boldsymbol{x}_1$, $\boldsymbol{x%
		}_{1,t}$ and $\boldsymbol{\beta}_1$ to be the $s$-vectors of the first $s$
		elements of the $p$-vectors $\boldsymbol{x}$, $\boldsymbol{x}_t$ and $%
		\boldsymbol{\beta}$, respectively, and the corresponding partition of our
		functional as $\boldsymbol{T}_{\gamma,\lambda}(G) = (\boldsymbol{T}%
		_{\gamma,\lambda}^{\beta_1}(G)^T, \boldsymbol{T}_{\gamma,\lambda}^{%
			\beta_2}(G)^T, {T}_{\gamma,\lambda}^{\sigma}(G))^T$. Then, whenever the
		associated quantities exist finitely,  the influence function of $%
		\boldsymbol{T}_{\gamma,\lambda}^{\beta_2}$ is identically zero at the true
		distribution $G$ and  that of $(\boldsymbol{T}_{\gamma,\lambda}^{\beta_1}, {T%
		}_{\gamma}^{\sigma})$ at $G$ is given by  
		\begin{align}
		&	\mathcal{IF}((y_t,  \boldsymbol{x}_t), (\boldsymbol{T}_{\gamma,\lambda}^{%
				\beta_1}, {T}_{\gamma,\lambda}^{\sigma}), G)  \notag =  \\
			& \hspace{1cm}-\boldsymbol{S}_{\gamma}^{-1}
			\begin{bmatrix}
				\begin{array}{c}
					\hspace{-0.3cm}\frac{(1+\gamma)}{(\sigma^g)^{\gamma+1}}\psi_{1, \gamma}\left(\frac{y_t - 
						\boldsymbol{x}_{1,t}^T\boldsymbol{\beta}^g}{\sigma^g}\right)\boldsymbol{x}_{1,t}
					+\lambda \boldsymbol{P}^{\ast }(\boldsymbol{\beta},\boldsymbol{U}(G))\\
					~~~~~~~~~~~~~~~~~~ + \lambda \boldsymbol{P}^{\ast(2)}(\boldsymbol{\beta}, \boldsymbol{U}(G))\mathcal{IF}({%
						(y_t, \boldsymbol{x}_t)}, \boldsymbol{U}^{(1)}, G) \\ 
					\hspace{-3.2cm} \frac{(1+\gamma)}{(\sigma^g)^{\gamma+1}}\psi_{2,\gamma}\left(\frac{y_t - 
						\boldsymbol{x}_{1,t}^T\boldsymbol{\beta}^g}{\sigma^g}\right) 
				\end{array}%
			\end{bmatrix}	,  \label{EQ:IF_MNPDPDE_Sparse}
		\end{align}
		where $\boldsymbol{U}^{(1)}$ is the first $s$ elements of $\boldsymbol{U}$,
		$\boldsymbol{P}^{\ast}(\boldsymbol{\beta}, \boldsymbol{U}(G))$ is an $s-$dimensional vector  having $j$-th element as $ w(|U_j(G)|)\operatorname{sign}(\beta_j)$ for $j=1, \ldots, s$  and
		$\boldsymbol{P}^{\ast(2)}(\boldsymbol{\beta}, \boldsymbol{U}(G))$ is an $s\times s$
		diagonal matrix  having $j$-th diagonal entry as $w^{\prime }(|U_j(G)|)%
		\mbox{sign}(U_j(G)\beta_j)$ for $j=1, \ldots, s$, with $w^{\prime }(s)$
		denoting the derivative of $w(s)$ in $s$,  and the $(s+1)\times(s+1)$ matrix 
		$\boldsymbol{S}_{\gamma}= \boldsymbol{S}_{\gamma}(G;\beta, \sigma)$ is defined as   
		\begin{align}
			\boldsymbol{S}_{\gamma}(G;\beta, \sigma) = - \frac{(1+\gamma)}{%
				\sigma^{\gamma+2}} E_G%
			\begin{bmatrix}
				\begin{array}{cc}
					J_{11, \gamma}\left(\frac{y - \boldsymbol{x}^T\boldsymbol{\beta}}{\sigma}%
					\right)\boldsymbol{x}_1\boldsymbol{x}_1^T & J_{12, \gamma}\left(\frac{y - 
						\boldsymbol{x}^T\boldsymbol{\beta}}{\sigma}\right)\boldsymbol{x}_1 \\ 
					J_{12, \gamma}\left(\frac{y - \boldsymbol{x}^T\boldsymbol{\beta}}{\sigma}%
					\right)\boldsymbol{x}_1^T & J_{22, \gamma}\left(\frac{y - \boldsymbol{x}^T%
						\boldsymbol{\beta}}{\sigma}\right)%
				\end{array}%
			\end{bmatrix}%
			. 
		\end{align}
	\end{theorem}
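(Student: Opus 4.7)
The plan is to work directly with the smoothed functional $\boldsymbol{T}_{m,\gamma,\lambda}(G)$ defined via the infinitely differentiable penalty sequence $p_{m,\lambda}(s,U_j(G)) = \lambda w(h_m(U_j(G)))\,h_m(s)$, derive its influence function as displayed just before the theorem, and then pass to the limit as $m\to\infty$ using the explicit form of $h_m$. The key computational facts are
\[
h_m'(s) = \tanh(ms/2) \longrightarrow \mathrm{sign}(s), \qquad h_m''(s) = \tfrac{m}{2}\,\mathrm{sech}^{2}(ms/2),
\]
so $h_m'(s)\to\mathrm{sign}(s)$ pointwise for $s\neq 0$ (with $h_m'(0)=0$), while $h_m''(s)\to 0$ for $s\neq 0$ but $h_m''(0)\to \infty$. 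This captures, in a smooth way, both the sign structure of the $\ell_1$-penalty and the sparsity-inducing blow-up at zero.

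First I would compute the two kinds of partial derivatives of $p_{m,\lambda}$ that appear in the IF formula for $\boldsymbol{T}_{m,\gamma,\lambda}$. The first-order derivative matrix $\boldsymbol{P}_{m,\lambda}^{\ast(1)}(\boldsymbol{\beta},\boldsymbol{U}(G))$ is diagonal with $j$th entry $\lambda w(h_m(U_j(G)))\,h_m''(\beta_j)$, and the cross-derivative $\boldsymbol{P}_{m,\lambda}^{\ast(2)}(\boldsymbol{\beta},\boldsymbol{U}(G))$ is diagonal with $j$th entry $\lambda w'(h_m(U_j(G)))\,h_m'(U_j(G))\,h_m'(\beta_j)$. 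Evaluated at the sparse truth $\boldsymbol{\beta}^g=(\boldsymbol{\beta}_1^{gT},\boldsymbol{0}^T)^T$, the first-order diagonal entries tend to $0$ for $j\le s$ and to $+\infty$ for $j>s$, while the cross-derivative entries tend to $\lambda w'(|U_j(G)|)\,\mathrm{sign}(U_j(G)\beta_j)$ for $j\le s$.

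Next I would analyze the limit of $\boldsymbol{J}_\gamma^\ast(G;\boldsymbol{T}_{m,\gamma,\lambda}(G))^{-1}$. Partitioning coordinates into (active, inactive, scale), the matrix $\boldsymbol{J}_\gamma^\ast$ becomes a sum of the unpenalized $(p+1)\times(p+1)$ matrix $\boldsymbol{J}_\gamma(G;\boldsymbol{\beta}^g,\sigma^g)$ plus a diagonal perturbation whose inactive block diverges to $+\infty$ while the active block and the scale entry remain bounded. Applying the block-matrix (Schur complement) inversion formula and letting $m\to\infty$, the inactive rows and columns of the inverse vanish, so the IF component along the inactive coordinates is identically zero, proving the first assertion. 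For the active-plus-scale block, the limit of $\boldsymbol{J}_\gamma^\ast$ restricted to those $(s+1)$ coordinates is precisely $\boldsymbol{S}_\gamma(G;\boldsymbol{\beta}^g,\sigma^g)$, because the active-block first-order penalty term vanishes while the off-diagonal coupling with the inactive block is killed by the Schur complement. Combining this with the limiting form of $\boldsymbol{P}_{m,\lambda}^{\ast(2)}$ applied only to the first $s$ coordinates of $\mathcal{IF}((y_t,\boldsymbol{x}_t),\boldsymbol{U},G)$, and noting that $\boldsymbol{x}_t$ appears in the active block only through its first $s$ entries $\boldsymbol{x}_{1,t}$, yields exactly \eqref{EQ:IF_MNPDPDE_Sparse}.

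The main obstacle I expect is rigorously justifying the Schur-complement limit: one must show that the divergence of the inactive diagonal is strong enough to zero out the inactive row/columns of the inverse uniformly in the other quantities, without spoiling the $(s+1)\times(s+1)$ active-scale block. This is done by writing $\boldsymbol{J}_\gamma^\ast = \boldsymbol{J}_\gamma + D_m$ with $D_m$ diagonal and $(D_m)_{jj}\to\infty$ for $j>s$, inverting via Schur decomposition, and passing to the limit term by term; this also implicitly invokes the existence and uniqueness of the limiting IF guaranteed by Proposition~1 of Avella-Medina (2017, \cite{Avella-Medina:2017}), so that the answer does not depend on the chosen smoothing $h_m$. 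Assuming the finite-integral regularity conditions stated in the theorem, all remaining steps reduce to routine algebraic simplification.
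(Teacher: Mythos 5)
Your proposal is correct and follows essentially the same route as the paper: the paper also derives the IF of the smoothed functionals $\boldsymbol{T}_{m,\gamma,\lambda}$ from the estimating equations with the penalty sequence $p_{m,\lambda}(s,U_j(G))=\lambda w(h_m(U_j(G)))h_m(s)$, and then passes to the limit $m\rightarrow\infty$, with the divergence of $h_m''(0)$ on the inactive coordinates annihilating the corresponding rows and columns of $\boldsymbol{J}_{\gamma}^{\ast-1}$ and the active-plus-scale block converging to $\boldsymbol{S}_{\gamma}(G;\boldsymbol{\theta}^g)$. Your computations of $h_m'$, $h_m''$, and the limits of $\boldsymbol{P}_{m,\lambda}^{\ast(1)}$ and $\boldsymbol{P}_{m,\lambda}^{\ast(2)}$ are accurate, and the appeal to Proposition~1 of Avella-Medina (2017) for existence and smoothing-independence of the limit is exactly the justification the paper uses.
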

	
	For the LRM in (\ref{1.1}) with conditional density of $Y$ given $\mathbb{X}=\boldsymbol{x}$ given by $\frac{1}{\sigma}%
	f\left(\frac{y - \boldsymbol{x}^T\boldsymbol{\beta}}{\sigma}\right)$, let us
	denote the corresponding joint distribution  $ F_{(\boldsymbol{\beta},
		\sigma)}$, which also contains the marginal distribution (say $H$) of $%
	\boldsymbol{x}$ along with the above model conditional distribution. In this
	case, the matrix $\boldsymbol{S}_{\gamma}(G;\boldsymbol{\beta}, \sigma)$ can
	be further simplified as 
	\begin{equation*}
		\boldsymbol{S}_{\gamma}^{(0)}(G;\beta, \sigma) = - \frac{(1+\gamma)}{%
			\sigma^{\gamma+2}} 
		\begin{bmatrix}
			\begin{array}{cc}
				J_{11, \gamma}^{(0)}E_H\left(\boldsymbol{x}_1\boldsymbol{x}_1^T\right) & 
				J_{12, \gamma}^{(0)}E_H\left(\boldsymbol{x}_1\right) \\ 
				&  \\ 
				J_{12, \gamma}^{(0)}E_H\left(\boldsymbol{x}_1\right)^{T} & J_{22,
					\gamma}^{(0)}%
			\end{array}%
		\end{bmatrix}%
		, 
	\end{equation*}
	where 
	\begin{eqnarray}
		J_{11, \gamma}^{(0)} &=& \gamma M_{f,0,2}^{(\gamma)} +
		M_{f,0}^{(\gamma)\ast},  \notag \\
		J_{12, \gamma}^{(0)} &=& (1+\gamma) M_{f,0,1}^{(\gamma)} - \gamma
		M_{f,1,2}^{(\gamma)} + M_{f,1}^{(\gamma)\ast},  \notag \\
		\mbox{and }~~~~ J_{22, \gamma}^{(0)} &=& 2(1+\gamma)M_{f,1,1}^{(\gamma)} +
		M_{f,2}^{(\gamma)\ast} + \gamma M_{f,2,2}^{(\gamma)} + M_f^{(\gamma)}, 
		\notag
	\end{eqnarray}
	with $M_{f, i, j}^{(\gamma)} = \int s^i u(s)^j f(s)^{1+\gamma}ds$ and $M_{f,
		i}^{(\gamma)\ast} = \int s^i u^{\prime 1+\gamma}ds$ for $i,j = 0, 1, 2$. 
	\newline
	In particular, if the error density $f$ satisfies $J_{12, \gamma}^{(0)}=0$
	or $E(\boldsymbol{x}) =\boldsymbol{0}_p$, then we can separately write down
	the IFs of $\boldsymbol{T}_{\gamma,\lambda}^{\beta_1}$ and $%
	T_{\gamma,\lambda}^{\sigma}$ at the model $G=F_{(\boldsymbol{\beta},\sigma)}$
	from Theorem \ref{THM:IFlim_MNPDPDE} as given by 
	\begin{align}
		&\mathcal{IF}({(y_t, \boldsymbol{x}_t)}, \boldsymbol{T}_{\gamma,\lambda}^{%
			\beta_1}, F_{(\boldsymbol{\beta}, \sigma)})  \label{EQ:IF_MNPDPDE_beta} = \frac{\sigma}{J_{11, \gamma}^{(0)}} \left[E_H\left(\boldsymbol{x}_1%
		\boldsymbol{x}_1^T\right)\right]^{-1} \big[\psi_{1, \gamma}\left(\frac{y_t
			- \boldsymbol{x}_{1,t}^T\boldsymbol{\beta}_1}{\sigma}\right)\boldsymbol{x}_{1,t}  \\
		& \hspace{2cm} + \frac{\lambda\sigma^{\gamma+1}}{(1+\gamma)}\left(  \boldsymbol{P}^{\ast}(\boldsymbol{\beta}
		, \boldsymbol{\beta}) + \boldsymbol{P}^{\ast (2)}(\boldsymbol{\beta}%
		, \boldsymbol{\beta}) \mathcal{IF}({(y_t, \boldsymbol{x}_t)}, \boldsymbol{U}%
		^{(1)}, F_{(\boldsymbol{\beta},\sigma)})\right)\big],  \notag \\
		&\mathcal{IF}({(y_t, \boldsymbol{x}_t)}, \boldsymbol{T}_{\gamma,\lambda}^{%
			\sigma}, F_{\boldsymbol{\theta}}) = \frac{\sigma}{J_{22, \gamma}^{(0)}}%
		\psi_{2,\gamma}\left(\frac{y_t - \boldsymbol{x}_{1,t}^T\boldsymbol{\beta}_1}{\sigma%
		}\right),  \label{EQ:IF_MNPDPDE_sigma}
	\end{align}
	since $(\boldsymbol{\beta}_1^g, \sigma^g) = (\boldsymbol{\beta}_1, \sigma)$ at  
	$G=F_{(\boldsymbol{\beta},\sigma)}$ and $\boldsymbol{U}(F_{(\boldsymbol{\beta},\sigma)})=\boldsymbol{\beta}$ by its consistency.
	
	It is clear from the above formulas of the IF of the AW-DPD-LASSO estimator that it is expected to be robust having
	bounded IF if the quantities $\psi_{1, \gamma} ((y_t - 
	\boldsymbol{x}_{1,t}^T \boldsymbol{\beta}_1)/\sigma ) \boldsymbol{x}_{1,t}$, $%
	\psi_{2, \gamma} ((y_t - \boldsymbol{x}_{1,t}^T \boldsymbol{\beta}_1)/\sigma )$
	and $\mathcal{IF}({(y_t, \boldsymbol{x}_t)}, \boldsymbol{U}^{(1)}, F_{(%
		\boldsymbol{\beta},\sigma)})$ are bounded in either or both of the contamination point $%
	(y_t, \boldsymbol{x}_t)$. The first two quantities depend directly on the
	model assumption on the error density $f$ and the tuning parameter $\gamma$;
	for most common densities having exponential structure they can be seen to
	be bounded in $(y_t, \boldsymbol{x}_t)$ for any $\gamma>0$. The last
	quantity, namely the IF of the initial estimator $\boldsymbol{U}$, needs
	also to be bounded and a robust starting point is needed for the generalized adaptive DPD-LASSO. One
	possible choice could be the DPD-LASSO estimator 
	or other robust M-estimators of regression coefficient from the existing literature.

	\begin{remark}
		If the weights are fixed (non-stochastic) in the definition of AW-DPD-LASSO estimator,
		there is no involvement of $\boldsymbol{U}(G)$ in (\ref{EQ:Est_Eqnm}). 
		Hence the corresponding IF will not depend on the IF of $\boldsymbol{U}$
		but will coincide with the results of Ghosh and Majumdar \cite{ghosh2}.
	\end{remark}

	Finally, we simplify Theorem \ref{THM:IFlim_MNPDPDE} to obtain the	influence function of the Ad-DPD-LASSO estimator, defined as a
	minimizer of the simpler objective function in (\ref{EQ:DPD_AL}),	where the error density $f$ is the standard normal density (so that $J_{12,	\gamma}^{(0)}=0$) and $w(s) = (s + \delta_nI(s = 0))$ (so that $w^{\prime}=-s^{-2}$ whenever $s\neq 0$). 
	The simplified results are presented in the following corollary; 	clearly the IF of the Ad-DPD-LASSO estimator is bounded in the contamination points for all $\gamma>0$ indicating the claimed	robustness of our proposal against (infinitesimal) data contamination in both the response and covariates spaces, provided the initial estimator is chosen robustly.
	
	\begin{corollary}
		\label{CORR:IFlim_MNPDPDE} Consider the Ad-DPD-LASSO estimation by the	minimization of the  objective function  in (\ref{EQ:DPD_AL})
		for the LRM in (\ref{1.1}) under normal error density.	 Suppose that true distribution $F_{(\beta, \sigma)}$ underlying the data satisfies the LRM and the true regression	coefficient $\boldsymbol{\beta}$ is \textit{sparse} with only $s (<n)$
		non-zero components (but $p>>n$). Without loss of generality,	assume $\boldsymbol{\beta}=(\boldsymbol{\beta}_1^{T}, \boldsymbol{0}_{p-s}^T)^T$, where $\boldsymbol{\beta}_1$ contains all and only $s$-non-zero elements of $\boldsymbol{\beta}$. Also assume that the initial estimator is consistent in the sense $\boldsymbol{U}(F_{(\boldsymbol{\beta},\sigma)})=\boldsymbol{\beta}$ and denote $\boldsymbol{P}^{\ast}_0(\boldsymbol{\beta}	) = \left(|\beta_1|^{-1}, \ldots, |\beta_s|^{-1}\right)$ and $\boldsymbol{P}^{\ast(2)}_0(\boldsymbol{\beta}) = \mbox{diag}\{\beta_1^{-2}, \ldots, \beta_s^{-2}\}$. If the functional
		corresponding to the Ad-DPD-LASSO estimator is partitioned as $\boldsymbol{T}_{\gamma,\lambda}(G) = (\boldsymbol{T}_{\gamma,\lambda}^{\beta_1}(G)^T, \boldsymbol{T}_{\gamma,\lambda}^{\beta_2}(G)^T, {T}_{\gamma,\lambda}^{\sigma}(G))^T$, with $\boldsymbol{T}_{\gamma,\lambda}^{%
			\beta_1}$ being of length $s$, then their influence functions at the model $F_{(\boldsymbol{\beta}, \sigma)}$ are given by 
		\begin{align}
			&\mathcal{IF}({(y_t, \boldsymbol{x}_t)}, \boldsymbol{T}_{\gamma,\lambda}^{%
				\beta_1}, F_{(\boldsymbol{\beta}, \sigma)}) = - \sigma(\gamma+1)^{3/2} \left[%
			E_H\left(\boldsymbol{x}_1\boldsymbol{x}_1^T\right)\right]^{-1} \left[{\left({%
					y_t - \boldsymbol{x}_t^T\boldsymbol{\beta}}\right)} e^{-\frac{%
					\gamma\left(y_t - \boldsymbol{x}_t^T\boldsymbol{\beta}\right)^2}{2\sigma^2}} 
			\boldsymbol{x}_{1,t} \right.  \notag \\
			& ~~~~~~~~~~~~~~~~~~~~~~~~~~~ \left.+ \frac{\lambda\sigma^{2%
					\gamma+3}(2\pi)^{\gamma/2}}{(1+\gamma)} \left(\boldsymbol{P}^{\ast}_0(\boldsymbol{\beta}) +\boldsymbol{P}^{\ast(2)}_0(\boldsymbol{\beta}) 
			\mathcal{IF}({(y_t, \boldsymbol{x}_t)}, \boldsymbol{U}^{(1)}, F_{(%
				\boldsymbol{\beta},\sigma)})\right)\right],  \notag \\
			&\mathcal{IF}({(y_t, \boldsymbol{x}_t)}, \boldsymbol{T}_{\gamma,\lambda}^{%
				\beta_2}, F_{(\boldsymbol{\beta}, \sigma)})=0,  \notag \\
			&\mathcal{IF}((y_t, \boldsymbol{x}_t), \boldsymbol{T}_{\gamma,\lambda}^{%
				\sigma}, F_{\boldsymbol{\theta}}) = \frac{\sigma(1+\gamma)^{5/2}}{%
				2+\gamma^2} \left[\left(1 - \left(\frac{y_t - \boldsymbol{x}_t^T%
				\boldsymbol{\beta}}{\sigma}\right)^2\right) e^{-\frac{\gamma\left(y_t - 
					\boldsymbol{x}_t^T\boldsymbol{\beta}\right)^2}{2\sigma^2}} - \frac{\gamma}{%
				(1+\gamma)^{1/2}}\right],  \notag
		\end{align}
		whenever the associated quantities exist finitely. Here, all notations
		are the same as in Theorem \ref{THM:IFlim_MNPDPDE}. 
	\end{corollary}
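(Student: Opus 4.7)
The plan is to specialise Theorem \ref{THM:IFlim_MNPDPDE} by computing each ingredient explicitly for the standard normal error density $f$ and for the adaptive-LASSO weight $w(s) = I(s \neq 0)/s$. First, I would record the elementary identities $u(s) = f'(s)/f(s) = -s$, $u'(s) = -1$, and $M_f^{(\gamma)} = (2\pi)^{-\gamma/2}(1+\gamma)^{-1/2}$. Substituting these into the definitions of $\psi_{1,\gamma}$ and $\psi_{2,\gamma}$ from Section \ref{SEC:if} yields $\psi_{1,\gamma}(s) = -s\,(2\pi)^{-\gamma/2} e^{-\gamma s^2/2}$ and $\psi_{2,\gamma}(s) = (2\pi)^{-\gamma/2}\bigl[(1-s^2)\,e^{-\gamma s^2/2} - \gamma/(1+\gamma)^{3/2}\bigr]$. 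Re-expressing $s$ as the standardised residual $(y_t - \boldsymbol{x}_t^T\boldsymbol{\beta})/\sigma$ produces the Gaussian factors that appear in the statement of the corollary.

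Second, I would evaluate the constants $J_{ii,\gamma}^{(0)}$ using the elementary Gaussian moments $\int e^{-(1+\gamma)s^2/2}\,ds = \sqrt{2\pi/(1+\gamma)}$, $\int s^2 e^{-(1+\gamma)s^2/2}\,ds = \sqrt{2\pi}\,(1+\gamma)^{-3/2}$, and $\int s^4 e^{-(1+\gamma)s^2/2}\,ds = 3\sqrt{2\pi}\,(1+\gamma)^{-5/2}$. Crucially, the off-diagonal quantity $J_{12,\gamma}^{(0)}$ involves only integrals of odd functions of $s$ against the even weight $f^{1+\gamma}$ and therefore vanishes. This is the ingredient that decouples the expressions for $\boldsymbol{T}_{\gamma,\lambda}^{\beta_1}$ and $T_{\gamma,\lambda}^{\sigma}$ and permits direct use of \eqref{EQ:IF_MNPDPDE_beta} and \eqref{EQ:IF_MNPDPDE_sigma} without needing to invert the full matrix $\boldsymbol{S}_\gamma$.

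Third, I would handle the penalty contribution. With $w(s) = I(s \neq 0)/s$ we have $w'(s) = -s^{-2}$, and so $w'(|U_j(G)|) = -U_j(G)^{-2}$. Under the consistency hypothesis $\boldsymbol{U}(F_{(\boldsymbol{\beta},\sigma)}) = \boldsymbol{\beta}$ we have $U_j = \beta_j$, hence $\mbox{sign}(U_j\beta_j) = \mbox{sign}(\beta_j^2) = 1$, and therefore the diagonal matrix $\boldsymbol{P}(\boldsymbol{\beta},\boldsymbol{U}(G))$ of Theorem \ref{THM:IFlim_MNPDPDE} reduces to $-\boldsymbol{P}_0(\boldsymbol{\beta})$. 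Plugging the specialised $\psi_{i,\gamma}$, $J_{ii,\gamma}^{(0)}$ and $\boldsymbol{P}$ into \eqref{EQ:IF_MNPDPDE_beta} and \eqref{EQ:IF_MNPDPDE_sigma} and collecting the powers of $\sigma$, $2\pi$ and $(1+\gamma)$ yields the displayed expressions for $\mathcal{IF}((y_t,\boldsymbol{x}_t), \boldsymbol{T}_{\gamma,\lambda}^{\beta_1}, F_{(\boldsymbol{\beta},\sigma)})$ and for $\mathcal{IF}((y_t,\boldsymbol{x}_t), \boldsymbol{T}_{\gamma,\lambda}^{\sigma}, F_{(\boldsymbol{\beta},\sigma)})$. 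The identity $\mathcal{IF}((y_t,\boldsymbol{x}_t), \boldsymbol{T}_{\gamma,\lambda}^{\beta_2}, F_{(\boldsymbol{\beta},\sigma)}) = 0$ is inherited verbatim from Theorem \ref{THM:IFlim_MNPDPDE} and requires no separate argument.

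The main obstacle is purely bookkeeping of signs and exponents: minus signs enter from $u(s) = -s$, from $w'(s) = -s^{-2}$, from the leading $-(1+\gamma)/\sigma^{\gamma+2}$ in $\boldsymbol{S}_\gamma^{(0)}$, and from the fact that $J_{11,\gamma}^{(0)} = -(2\pi)^{-\gamma/2}(1+\gamma)^{-3/2}$ is itself negative, so $1/J_{11,\gamma}^{(0)}$ carries yet another minus sign. Keeping all of these aligned through the cancellations, so as to arrive at the exact coefficients of the displayed forms, is the delicate step. Conceptually, once the bookkeeping is complete, boundedness of the resulting IF in both coordinates of the contamination point $(y_t, \boldsymbol{x}_t)$ is transparent: the Gaussian factor $e^{-\gamma(y_t - \boldsymbol{x}_t^T\boldsymbol{\beta})^2/(2\sigma^2)}$ dominates every polynomial in $y_t$, and the leverage direction $\boldsymbol{x}_{1,t}$ is always multiplied by this same bounded residual factor, confirming the claimed robustness provided the initial estimator $\boldsymbol{U}$ itself has bounded IF.
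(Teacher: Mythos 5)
Your proposal is correct and is exactly the route the paper takes: the paper offers no separate proof of this corollary beyond observing that one specialises Theorem \ref{THM:IFlim_MNPDPDE} to the standard normal density (so that $J_{12,\gamma}^{(0)}=0$) and to $w(s)=s^{-1}I(s\neq 0)$ (so that $w'(s)=-s^{-2}$), which is precisely what you carry out, with the correct values of $u$, $u'$, $M_f^{(\gamma)}$, $\psi_{1,\gamma}$, $\psi_{2,\gamma}$, $J_{11,\gamma}^{(0)}$ and the reduction of $\boldsymbol{P}$ to $-\boldsymbol{P}_0(\boldsymbol{\beta})$. Your emphasis on the sign and exponent bookkeeping is well placed, since that is the only nontrivial content of the specialisation.
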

	
	\begin{remark}
		Note that, at $\gamma\downarrow 0$, the Ad-DPD-LASSO
		coincides with the usual adaptive LASSO and hence the above results also provide
		its influence function which is new in the literature of
		adaptive LASSO. The IF at $\gamma=0$ is unbounded at any	contamination point, even if we start with a robust initial estimator,
		indicating the non-robust nature of the usual adaptive LASSO against data 	contamination.
	\end{remark}

	\section{Oracle Consistency}\label{SEC:oracle}
	
	We now  study the asymptotic properties of the AW-DPD-LASSO estimators under 
	the ultra-high dimensional set-up of non-polynomial order, following  Fan et al. \cite{Fan/etc:2014}. 
	With the notation of Sections \ref{SEC:intro} and \ref{SEC:def}, recall that,
	the number $p$ of the available covariates is assumed to grow exponentially with the sample size $n.$
	However, only few of them are significantly (linearly) associated with the response under the true model	so that the true value $\boldsymbol{\beta}_0=(\beta_{10}, \ldots, \beta_{p0})$ of the regression coefficient is sparse having only 
	$s\ll n$ non-zero entries. Without loss of generality,   $\boldsymbol{\beta}_0 = (\boldsymbol{\beta}_{10}, \boldsymbol{0}_{p-s})^T$ 
	and then $S_0  = \{1, 2, \ldots, s \}$.
	Let us allow $s=s_n=o(n)$ to slowly diverge with the sample size $n$, but the subscript will be suppress unless required to avoid any confusion.
	The oracle property refers to the fact that any estimator can correctly identify this true model $S_0$, 
	i.e., the first $s$ components of the estimated regression coefficient vector are consistent estimators of the components of $\boldsymbol{\beta}_{10}$
	whereas the remaining components are zero asymptotically  with probability tending to one.
	We will now show that, under certain conditions, the general AW-DPD-LASSO estimator of $\boldsymbol{\beta}$ 
	enjoys the oracle property and subsequently simplify the required conditions for the Ad-DPD-LASSO estimator.

	For simplicity, we here assume that the design matrix $\boldsymbol{X}$ is fixed with each column being standardized to have $\ell_1$-norm $\sqrt{n}$
	and the response is also standardized so that the error variance $\sigma^2$ is assumed to be known and equal to one. 
	The case of unknown $\sigma^2$ can be tackled by similar arguments with slightly modified assumptions as described later on;
	note that the objective function is convex in $\sigma$ and hence its minimizer can be shown to be consistent and asymptotically normal 
	through standard arguments and will be independent of the penalty used (see Ghosh and Majumdar \cite{ghosh2}).
	Further, in consistence with the high-dimensional literature, we will assume that the regularization parameter $\lambda=\lambda_n$ 
	in our objective function (\ref{EQ:DPD_AL_Gen}) depends of the sample size $n$ 
	but their explicit relation is given later on following the required assumptions.  
	In the following, given any $S\subseteq\{1, 2, \ldots, p\}$ 
	and any $p$-vector $\boldsymbol{v}=(v_1, \ldots, v_p)^T$, 
	we will denote $S^c = \{1, 2, \ldots, p\} \setminus S$, 
	$\boldsymbol{v}_S=(v_j : j\in S)$ and $\boldsymbol{v}_{S^c} = (v_j : j \notin S)$
	whereas $Supp(\boldsymbol{v}) = \{ j : v_j \neq 0 \}$.
	Note that, for the true model $S_0$, we have $\boldsymbol{\beta}_{0S_0} =\boldsymbol{\beta}_{10}$, $\boldsymbol{\beta}_{0S_0^c} =\boldsymbol{0}_{p-s}$
	and $Supp(\boldsymbol{\beta}_0) = S_0$.
	Further, let $\boldsymbol{X}_S$ consists of the $j$-th column of $\boldsymbol{X}$ for all  $j\in S$ for any $S$, 
	and put $\boldsymbol{X}_1=\boldsymbol{X}_{S_0}$ and $\boldsymbol{X}_2=\boldsymbol{X}_{S_0^c}$ 
	(so that $\boldsymbol{X}=[\boldsymbol{X}_1 : \boldsymbol{X}_2]$);
	the corresponding partition of the $i$-th row of $\boldsymbol{X}$ would be denoted by $\boldsymbol{x}_i = (\boldsymbol{x}_{1i}^T, \boldsymbol{x}_{2i}^T)^T$. 
	Also define 
	\begin{eqnarray}
		\boldsymbol{H}_{\gamma}^{(1)}(\boldsymbol{\beta}) &=& ((1+\gamma)\psi_{1, \gamma}(y_i - \boldsymbol{x}_{i}^T\boldsymbol{\beta}) : i=1, \ldots, n)^T
		\nonumber\\
		~~~~\mbox{and}~~ 
		\boldsymbol{H}_{\gamma}^{(2)}(\boldsymbol{\beta}) &=& \mbox{Diag}\{ (1+\gamma)J_{11, \gamma}(y_i - \boldsymbol{x}_{i}^T\boldsymbol{\beta}) : i=1, \ldots, n\}.
	\end{eqnarray}
	
	Since the AW-DPD-LASSO estimator of $\boldsymbol{\beta}$ coincides with the least-squares adaptive LASSO estimator at $\gamma=0$,
	we here focus on deriving their asymptotic properties for $\gamma>0$ only. 
	Under a general error distribution $f$ and a fixed $\gamma>0$,
	we need the following basic assumptions on the corresponding DPD loss function $L_{n,\gamma}(\boldsymbol{\beta}) = L_{n,\gamma}(\boldsymbol{\beta}, 1)$
	in (\ref{EQ:DPD_lossGen}) along with the boundedness of  the design matrix $\boldsymbol{X}=((x_{ij}))_{i=1, \ldots, n; j = 1, \ldots, p}$.
	All our assumptions and results are given in terms of a fixed design matrix, 
	but they can be easily extended for the random design matrix by showing that the required assumptions holds for the random design 
	asymptotically with probability one.
	
	\begin{itemize}
		\item[(A1)] The error density $f$ is such that $f^{\gamma}$ is Lipschitz with the Lipschitz constant $L_{\gamma}$.
		\item[(A2)]  The eigenvalues of the matrix 
		$n^{-1}(\boldsymbol{X}_1^T\boldsymbol{X}_1)$ are bounded below and above by positive constants $c_0$ and $c_0^{-1}$, respectively.
		Also $ \kappa_n := \max_{i,j}|x_{ij}| = o(n^{1/2}s^{-1}).$
	\end{itemize}

	Note that Assumption (A1) is implied by the boundedness of the function $\psi_{1, \gamma}(s)$ whenever it is differentiable
	and this holds for most common exponential family of distributions at any $\gamma>0$;
	in particular, it holds for the usual normal error distribution. 
	The second assumption, on the other hand, is the same as used in Fan et al. \cite{Fan/etc:2014};
	the first part is quite standard in high-dimensional literature whereas the second part holds for appropriate fixed design matrix
	as well as for common stochastic designs with asymptotic probability one (see Fan et al. \cite{Fan/etc:2014}  for some example).
	
	\subsection{General AW-DPD-LASSO estimator with Fixed Non-stochastic Weights}
	\label{SEC:Oracle_fixedW}
	
	We first consider the AW-DPD-LASSO estimators with fixed non-stochastic weights in the corresponding objective function in (\ref{EQ:DPD_AL_Gen}).
	With $\sigma=1$, the objective function in $\boldsymbol{\beta}$ can now be re-expressed as 
	$$
	Q_{n,\gamma,\lambda }(\boldsymbol{\beta })=L_{n,\gamma }(\boldsymbol{\beta }) + \lambda_n\sum\limits_{j=1}^{p} w_{j}\left\vert \beta_{j}\right\vert,
	$$
	where $w_j$ is the fixed weights corresponding to the $j$-th penalty term and the DPD loss $L_{n,\gamma }(\boldsymbol{\beta })$
	has the form $L_{n, \gamma}(\boldsymbol{\beta}) = \frac{1}{n} \sum_{i=1}^n \rho_{\gamma}(\boldsymbol{x}_i^T\boldsymbol{\beta}, y_i)$ with 
	\begin{align}
		\rho_{\gamma}(\boldsymbol{x}_i^T\boldsymbol{\beta}, y_i) = M_f^{(\gamma)} - \frac{1+\gamma}{\gamma} 
		f^{\gamma}\left({y_i - \boldsymbol{x}_i^T\boldsymbol{\beta}}\right) + \frac{1}{\gamma}.  
		\label{EQ:DPD_lossGen_beta}
	\end{align}
	Note that $\nabla L_{n,\gamma }(\boldsymbol{\beta }) = n^{-1}\left[\boldsymbol{X}^T\boldsymbol{H}_{\gamma}^{(1)}(\boldsymbol{\beta})\right]$
	and $\nabla^2 L_{n,\gamma }(\boldsymbol{\beta }) = n^{-1}
	\left[\boldsymbol{X}^T\boldsymbol{H}_{\gamma}^{(2)}(\boldsymbol{\beta})\boldsymbol{X}\right]$,
	where $\nabla$ and $\nabla^2$ denote the first and second order gradient with respect to $\boldsymbol{\beta}$, respectively.
	Let us also denote $\boldsymbol{w}=(w_1, \ldots, w_p)$, $\boldsymbol{w}_0=\boldsymbol{w}_{S_0}$, $\boldsymbol{w}_1=\boldsymbol{w}_{S_0^c}$,
	and define $\delta_n = \sqrt{{s(\log n)}/{n}} + {\lambda_n}||\boldsymbol{w}_0||_2$.
	
	We first study the properties of an oracle estimator obtained by minimizing the above objective function 
	$Q_{n,\gamma,\lambda }(\boldsymbol{\beta })$ with fixed weights over the restricted oracle parameter space 
	$\Theta^o= \big\{ \boldsymbol{\beta}=(\boldsymbol{\beta}_1^T, \boldsymbol{\beta}_2^T)^T\in \mathbb{R}^p : \boldsymbol{\beta}_2 = \boldsymbol{0}_{p-s} \big\}
	\equiv \mathbb{R}^s\times \{0\}^{p-s}$; let us denote the corresponding minimizer as 
	$\widehat{\boldsymbol{\beta}}^o = (\widehat{\boldsymbol{\beta}}_1^o, \boldsymbol{0}_{p-s})$, the oracle estimator for our model.
	
	We would like to point out that the oracle estimator defined above are exactly in line with the classical definitions from B\"{u}hlmann and van de Geer \cite{buhlmann}; it is also used by the pioneer paper by Fan and Li \cite{fan1} and the large pool of subsequent works build upon this paper. To see their alignments, note that the classical oracle estimate is the one  obtained based on a low dimensional sub-model of any high dimensional models which includes only the $s$ important covariates given to us by an oracle. Once it is assumed that we know the correct sub-model from Oracle, the parameter space can then be restricted only to a particular $s$-dimensional subspace of the whole parameter space corresponding to the $s$ important covariates. Assuming that the first $s$ covariates are important (without loss of generality), this oracle restricted parameter space is then given by our $\Theta^o$, and the estimate of the regression coefficient vector over this space can then be called as the oracle estimate, which only estimate the coefficients associated with $s$ important covariates (first $s$ in this case) and put zero for the other coefficient, i.e., $\widehat{\beta}^o$ as defined above. For good estimation procedures, such an oracle estimate should be close to the true parameter values (while considering estimating accuracy) and also any general estimate of the regression coefficients obtained over the whole parameter space would asymptotically be  very close to this oracle estimator with high-probability (while considering model selection accuracy); the later property is often known as the Oracle property in the literature. In the following, we will also prove the same oracle properties for our proposed adaptive DPD-based procedures under suitable assumptions.
	In particular,  we need the following additional assumption on the DPD loss function; from now on, all expectations are taken with respect to 
	the true model density $f$ of $\epsilon_i=y_i-\boldsymbol{x}_i^T\boldsymbol{\beta}_0$.
	
	\begin{itemize}
		\item[(A3)] The diagonal elements of $E[\boldsymbol{H}_{\gamma}^{(2)}(\boldsymbol{\beta}_0)]$ are all finite and 
		bounded from below by a constant $c_1>0$.
		\item[(A4)] Expectation of third order partial derivatives of  $\rho_{\gamma}(\boldsymbol{x}_i^T\boldsymbol{\beta}, y_i)$, $i=1, \ldots, n$, 
		with respect to all components of $\boldsymbol{\beta}_{S_0}$ are uniformly bounded in a neighborhood of $\boldsymbol{\beta}_{10}$. 
	\end{itemize}
	Note that these two assumptions are very common in the statistical inference using the DPD loss function even in low-dimensional 
	and they hold for most common (regular) models for the error; see Basu et al. \cite{basu1}, Ghosh and Basu \cite{ghosh1}. 
	Then, we have the following result about the $\ell_2$-consistency of the oracle estimator  $\widehat{\boldsymbol{\beta}}^o$.
	
	\begin{theorem}\label{THM:FixW_Oracle}
		If Assumptions (A1)--(A4) hold and $\lambda_n||\boldsymbol{w}_0||_2\sqrt{s}\kappa_n \rightarrow 0$,
		then, given any  constant $C_1>0$ and $\delta_n = \sqrt{{s(\log n)}/{n}} + {\lambda_n}||\boldsymbol{w}_0||_2$, 
		there exists some $c>0$ such that 
		\begin{eqnarray}
			P\left(\left|\left|\widehat{\boldsymbol{\beta}}_1^o - \boldsymbol{\beta}_{10}\right|\right|_2 \leq C_1\delta_n \right)\geq 1 - n^{-cs}.
		\end{eqnarray}
		Further, if $\delta_n^{-1}\min\limits_{1\leq j \leq s}|\beta_{j0}|\rightarrow \infty$, 
		then the sign of each component of $\widehat{\boldsymbol{\beta}}_1^o$ matches with that of $\boldsymbol{\beta}_{10}$.
	\end{theorem}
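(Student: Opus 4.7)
The plan is to restrict the optimization to the oracle subspace $\mathbb{R}^s\times\{\boldsymbol{0}_{p-s}\}$, which reduces the problem to an effectively $s$-dimensional penalized M-estimation, and then apply a standard localization argument. Define
\begin{equation*}
F(\boldsymbol{u}) = Q_{n,\gamma,\lambda_n}\bigl(\boldsymbol{\beta}_{10}+\boldsymbol{u},\,\boldsymbol{0}_{p-s}\bigr) - Q_{n,\gamma,\lambda_n}(\boldsymbol{\beta}_0), \qquad \boldsymbol{u}\in\mathbb{R}^s,
\end{equation*}
and aim to show that, on an event of probability at least $1-n^{-cs}$, $F(\boldsymbol{u})>0$ for every $\boldsymbol{u}$ on the sphere $\Vert\boldsymbol{u}\Vert_2=C_1\delta_n$. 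Since $F$ is continuous with $F(\boldsymbol{0})=0$, this will force the (oracle) minimizer $\widehat{\boldsymbol{\beta}}_1^o-\boldsymbol{\beta}_{10}$ into the open ball of radius $C_1\delta_n$. A second-order Taylor expansion of $L_{n,\gamma}$ at $\boldsymbol{\beta}_0$, restricted to the $S_0$ coordinates, gives
\begin{equation*}
F(\boldsymbol{u}) = \boldsymbol{u}^T\boldsymbol{g}_n + \tfrac{1}{2}\boldsymbol{u}^T\boldsymbol{K}_n(\boldsymbol{\beta}^\ast)\boldsymbol{u} + P_n(\boldsymbol{u}),
\end{equation*}
where $\boldsymbol{g}_n = \nabla_{S_0}L_{n,\gamma}(\boldsymbol{\beta}_0) = (1+\gamma)n^{-1}\boldsymbol{X}_1^T\bigl(\psi_{1,\gamma}(\epsilon_i)\bigr)_{i=1}^n$, $\boldsymbol{K}_n(\boldsymbol{\beta}^\ast) = [\nabla^2 L_{n,\gamma}(\boldsymbol{\beta}^\ast)]_{S_0,S_0}$ for some $\boldsymbol{\beta}^\ast$ on the segment between $\boldsymbol{\beta}_0$ and $(\boldsymbol{\beta}_{10}+\boldsymbol{u},\boldsymbol{0}_{p-s})$, and $P_n(\boldsymbol{u}) = \lambda_n\sum_{j\in S_0}w_j\bigl(|\beta_{j0}+u_j|-|\beta_{j0}|\bigr)$.

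Each of the three pieces is then bounded in turn. Since $E[\psi_{1,\gamma}(\epsilon_i)]=0$ at the true model and Assumption (A1) implies $\psi_{1,\gamma}=uf^{\gamma}$ is bounded, a vector Bernstein-type inequality for $\boldsymbol{g}_n$ (using $\Vert\boldsymbol{x}_{\cdot j}\Vert_2=\sqrt{n}$ and $\max_{i,j}|x_{ij}|\leq\kappa_n$) yields $\Vert\boldsymbol{g}_n\Vert_2\leq C\sqrt{s(\log n)/n}$ on an event of probability $\geq 1-n^{-c_1 s}$, whence $|\boldsymbol{u}^T\boldsymbol{g}_n|\leq C\sqrt{s(\log n)/n}\,\Vert\boldsymbol{u}\Vert_2$ by Cauchy--Schwarz. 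Assumption (A3) bounds the diagonal entries of $E[\boldsymbol{K}_n(\boldsymbol{\beta}_0)]$ below by $c_1$, and combined with (A2) this yields $\lambda_{\min}(E[\boldsymbol{K}_n(\boldsymbol{\beta}_0)])\geq c_0 c_1$. Entrywise Hoeffding with a union bound over the $s^2$ entries of the empirical Hessian, together with a Taylor estimate of $\boldsymbol{K}_n(\boldsymbol{\beta}^\ast)-\boldsymbol{K}_n(\boldsymbol{\beta}_0)$ supplied by (A4), upgrades this to $\lambda_{\min}(\boldsymbol{K}_n(\boldsymbol{\beta}^\ast))\geq c_\ast>0$ uniformly over $\Vert\boldsymbol{u}\Vert_2\leq C_1\delta_n$ on a further event of probability $\geq 1-n^{-c_2 s}$; here the scaling hypotheses $\lambda_n\Vert\boldsymbol{w}_0\Vert_2\sqrt{s}\kappa_n\to 0$ (stated) and $\kappa_n=o(n^{1/2}s^{-1})$ (from (A2)) combine to make the shrinking radius $C_1\delta_n$ small enough. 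Finally, the reverse triangle inequality gives $|P_n(\boldsymbol{u})|\leq\lambda_n\Vert\boldsymbol{w}_0\Vert_2\Vert\boldsymbol{u}\Vert_2$.

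Assembling these bounds on $\Vert\boldsymbol{u}\Vert_2=C_1\delta_n$ and recalling $\delta_n=\sqrt{s(\log n)/n}+\lambda_n\Vert\boldsymbol{w}_0\Vert_2$ yields
\begin{equation*}
F(\boldsymbol{u}) \geq \Vert\boldsymbol{u}\Vert_2\Bigl(\tfrac{c_\ast}{2} C_1\delta_n - C\sqrt{s(\log n)/n} - \lambda_n\Vert\boldsymbol{w}_0\Vert_2\Bigr) \geq C_1\delta_n^2\bigl(c_\ast C_1/2 - C - 1\bigr),
\end{equation*}
which is strictly positive for $C_1$ chosen sufficiently large, and gives the $\ell_2$-consistency bound with probability at least $1-n^{-cs}$ after intersecting the relevant events. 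The sign-consistency claim is then immediate: under $\delta_n^{-1}\min_j|\beta_{j0}|\to\infty$ we have $|\widehat{\beta}_j^o-\beta_{j0}|\leq\Vert\widehat{\boldsymbol{\beta}}_1^o-\boldsymbol{\beta}_{10}\Vert_2\leq C_1\delta_n<|\beta_{j0}|$ eventually for each $j\leq s$, so $\widehat{\beta}_j^o$ and $\beta_{j0}$ must share a sign. The main obstacle will be the quadratic-term analysis: obtaining the uniform lower bound on $\lambda_{\min}(\boldsymbol{K}_n(\boldsymbol{\beta}^\ast))$ over the shrinking ball requires simultaneously controlling concentration of the $s\times s$ empirical Hessian (which costs an $s^2$ factor in the union bound) and a Lipschitz-type estimate of the Hessian across the ball via (A4); it is precisely the interplay between that Lipschitz constant, the design bound $\kappa_n$, and the radius $\delta_n$ that the hypothesis $\lambda_n\Vert\boldsymbol{w}_0\Vert_2\sqrt{s}\kappa_n\to 0$ is tailored to reconcile.
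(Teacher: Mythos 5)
Your route is genuinely different from the paper's. The paper never linearizes the loss: it bounds the centered loss difference $L_{n,\gamma}(\boldsymbol{\beta})-L_{n,\gamma}(\boldsymbol{\beta}_0)$ directly as a \emph{scalar} empirical process over the oracle ball $\mathcal{B}_0(M)$, using symmetrization, the Lipschitz property of $\rho_\gamma$ from (A1), and Massart's concentration theorem (their Lemma~\ref{LEM:L1}); a single Taylor expansion of the \emph{expected} loss supplies the quadratic lower bound $\tfrac12 c_0c_1\|\boldsymbol{\beta}_1-\boldsymbol{\beta}_{10}\|_2^2$, and the minimizer is pulled into the ball by the convexity--interpolation trick $\widetilde{\boldsymbol{\beta}}_1=u\widehat{\boldsymbol{\beta}}_1^o+(1-u)\boldsymbol{\beta}_{10}$. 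Your score/Hessian decomposition is the classical alternative, and your treatment of the linear term, the penalty term, the assembly on the sphere, and the sign-consistency step are all fine in principle (two caveats: forcing the \emph{global} oracle minimizer, not merely a local one, into the ball needs the convexity that the paper invokes explicitly; and the gradient bound $\|\boldsymbol{g}_n\|_2\lesssim\sqrt{s(\log n)/n}$ with failure probability $n^{-cs}$ requires the weak-variance (Talagrand/Massart) form of vector concentration --- the crude vector Bernstein with $V=\sum_iE\|Y_i\|^2\asymp s/n$ only yields $s\sqrt{(\log n)/n}$ at that confidence level).

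The concrete gap is in your quadratic term. Entrywise Hoeffding plus a union bound over the $s^2$ entries of $\boldsymbol{K}_n-E\boldsymbol{K}_n$ gives, at the confidence level $1-n^{-cs}$, a max-entry bound of order $\kappa_n\sqrt{s(\log n)/n}$; converting this to an operator-norm (hence minimum-eigenvalue) bound costs a factor of $s$, yielding $\kappa_n s^{3/2}\sqrt{(\log n)/n}$, which under (A2)'s $\kappa_n=o(n^{1/2}s^{-1})$ is only $o(\sqrt{s\log n})$ --- divergent, not $o(1)$. So the claimed uniform bound $\lambda_{\min}(\boldsymbol{K}_n(\boldsymbol{\beta}^\ast))\geq c_\ast>0$ does not follow from the argument as written; you would need a genuine matrix concentration inequality exploiting the rank-one structure $n^{-1}\sum_i\xi_i\boldsymbol{x}_{1i}\boldsymbol{x}_{1i}^T$ (matrix Bernstein, or chaining over the unit sphere), and even then the probability exponent $n^{-cs}$ requires care. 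A second, smaller hole in the same step: (A4) bounds only the \emph{expectations} of the third-order derivatives, so it controls $E[\boldsymbol{K}_n(\boldsymbol{\beta}^\ast)]-E[\boldsymbol{K}_n(\boldsymbol{\beta}_0)]$ but not the empirical Hessian increment across the ball; an additional uniform concentration step is needed there too. This is exactly the work the paper's scalar empirical-process argument is designed to avoid --- by concentrating the loss difference itself, it never has to control an $s\times s$ random matrix in operator norm.
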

	
	Next we will show that the oracle estimator in Theorem \ref{THM:FixW_Oracle} is indeed an asymptotic  global minimizer of the 
	objective function $Q_{n,\gamma,\lambda }(\boldsymbol{\beta })$ over the whole parameter space 
	with probability tending to one. For this purpose, we need the following additional assumptions controlling the correlation between 
	the important and unimportant covariates in the same spirit as Condition 3 in Fan et al. \cite{Fan/etc:2014}.
	\begin{itemize}
		\item[(A5)]  $\left|\left|n^{-1}(\boldsymbol{X}_2^TE[\boldsymbol{H}_{\gamma}^{(2)}(\boldsymbol{\beta}_0)]\boldsymbol{X}_1)\right|\right|_{2, \infty}
		< \frac{\lambda_n\min(|\boldsymbol{w}_1|)}{2C_1\delta_n} $, 
		for some constant $C_1 >0$, where we denote 
		$||\boldsymbol{A}||_{2, \infty} =\sup\limits_{\boldsymbol{x}\in\mathbb{R}^q \setminus\{0\}} 
		\frac{||\boldsymbol{A}\boldsymbol{x}||_{\infty}}{||\boldsymbol{x}||_2}$
		for any $p\times q$ matrix $\boldsymbol{A}$ and $\min(|\boldsymbol{w}_1|)=\min\limits_{j>s}|w_j|$.
		
	\end{itemize}

	\begin{theorem}
		\label{THM:FixW_Main}
		Suppose that Assumptions (A1)--(A5) hold with $\lambda_n > 2\sqrt{(c+1)\log p / n}$ 
		and $\min(|\boldsymbol{w}_1|) >c_3$ for some constant $c, c_3>0$, and
		$$
		\lambda_n||\boldsymbol{w}_0||_2\kappa_n\max\{\sqrt{s}, ||\boldsymbol{w}_0||_2\} \rightarrow 0,
		~~~~
		\delta_ns^{3/2}\kappa_n^2(\log_2 n)^2 = o(n\lambda_n^2).
		$$
		Then, with probability at least $1 - O(n^{-cs})$, 
		there exists a global minimizer $\widehat{\boldsymbol{\beta}} = \left((\widehat{\boldsymbol{\beta}}_1^o)^T, \widehat{\boldsymbol{\beta}}_2^T\right)^T$ 
		of the AW-DPD-LASSO objective function $Q_{n,\gamma,\lambda }(\boldsymbol{\beta })$ such that
		$$
		\left|\left|\widehat{\boldsymbol{\beta}}_1^o - \boldsymbol{\beta}_{10}\right|\right|_2 \leq C_1\delta_n,
		~~~~\mbox{and }~~~~~
		\widehat{\boldsymbol{\beta}}_2 = \boldsymbol{0}_{p-s}.
		$$
	\end{theorem}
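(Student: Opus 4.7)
The strategy is to show that the oracle estimator $\widehat{\boldsymbol{\beta}}^o = ((\widehat{\boldsymbol{\beta}}_1^o)^T, \boldsymbol{0}_{p-s}^T)^T$ from Theorem~\ref{THM:FixW_Oracle}, which already possesses the required $\ell_2$-bound and correct sign pattern on the active set, is in fact a global minimizer of the unrestricted AW-DPD-LASSO objective $Q_{n,\gamma,\lambda}(\boldsymbol{\beta})$ on the high-probability event on which Theorem~\ref{THM:FixW_Oracle} holds. To this end I would verify the Karush--Kuhn--Tucker (KKT) subgradient conditions at $\widehat{\boldsymbol{\beta}}^o$ for the full problem. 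The active-coordinate condition $[\nabla L_{n,\gamma}(\widehat{\boldsymbol{\beta}}^o)]_j + \lambda_n w_j\,\mathrm{sign}(\widehat{\beta}_j^o) = 0$ for $j\in S_0$ is automatic from the first-order optimality of $\widehat{\boldsymbol{\beta}}_1^o$ on the restricted oracle subspace combined with the sign-recovery conclusion of Theorem~\ref{THM:FixW_Oracle}.

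The crux is then the strict dual-feasibility condition on the inactive set,
\[
\bigl\| n^{-1}\boldsymbol{X}_2^T \boldsymbol{H}_{\gamma}^{(1)}(\widehat{\boldsymbol{\beta}}^o) \bigr\|_{\infty} \;<\; \lambda_n \min(|\boldsymbol{w}_1|),
\]
which I would attack via a first-order Taylor expansion of $\boldsymbol{H}_{\gamma}^{(1)}$ around $\boldsymbol{\beta}_0$:
\[
n^{-1}\boldsymbol{X}_2^T\boldsymbol{H}_{\gamma}^{(1)}(\widehat{\boldsymbol{\beta}}^o)
\;=\; n^{-1}\boldsymbol{X}_2^T\boldsymbol{H}_{\gamma}^{(1)}(\boldsymbol{\beta}_0)
\;-\; n^{-1}\boldsymbol{X}_2^T \boldsymbol{H}_{\gamma}^{(2)}(\tilde{\boldsymbol{\beta}})\boldsymbol{X}_1(\widehat{\boldsymbol{\beta}}_1^o - \boldsymbol{\beta}_{10}),
\]
for some $\tilde{\boldsymbol{\beta}}$ on the segment from $\boldsymbol{\beta}_0$ to $\widehat{\boldsymbol{\beta}}^o$. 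The stochastic term has zero mean, since $E[\psi_{1,\gamma}(\epsilon_i)] = \int f' f^\gamma = 0$ by integration by parts, and its summands are uniformly bounded by Assumption (A1); Hoeffding's inequality together with a union bound over the $p-s$ inactive coordinates, combined with the prescribed rate $\lambda_n > 2\sqrt{(c+1)\log p/n}$, yields a bound of order $\lambda_n/4$ outside an event of probability $O(n^{-cs})$. For the bias term I would add and subtract $E[\boldsymbol{H}_{\gamma}^{(2)}(\boldsymbol{\beta}_0)]$: the expectation part is controlled directly by the $(2,\infty)$-norm form of Assumption (A5) applied to the vector $\widehat{\boldsymbol{\beta}}_1^o - \boldsymbol{\beta}_{10}$, whose $\ell_2$-norm is $\leq C_1\delta_n$ by Theorem~\ref{THM:FixW_Oracle}, giving the bound $\lambda_n\min(|\boldsymbol{w}_1|)/2$; the residual difference $\boldsymbol{H}_{\gamma}^{(2)}(\tilde{\boldsymbol{\beta}}) - E[\boldsymbol{H}_{\gamma}^{(2)}(\boldsymbol{\beta}_0)]$ is handled using the Lipschitzness of $f^\gamma$ from (A1), the uniform bound on third-order derivatives from (A4), the coherence bound $\kappa_n$ on $\boldsymbol{X}$, and the scaling $\delta_n s^{3/2}\kappa_n^2(\log_2 n)^2 = o(n\lambda_n^2)$, yielding an $o(\lambda_n)$ contribution.

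To upgrade strict KKT (i.e., local optimality) to global optimality, I would invoke restricted strong convexity of $L_{n,\gamma}$ on the $s$-dimensional active subspace $\mathbb{R}^s\times\{0\}^{p-s}$ within the $C_1\delta_n$-ball around $\boldsymbol{\beta}_0$; this follows from Assumption (A3) (lower bound on the diagonal of $E[\boldsymbol{H}_{\gamma}^{(2)}(\boldsymbol{\beta}_0)]$) combined with (A2) and the higher-order control of (A4) through a second-order Taylor expansion, and the scaling $\lambda_n\|\boldsymbol{w}_0\|_2\kappa_n\max\{\sqrt{s},\|\boldsymbol{w}_0\|_2\}\to 0$ keeps the oracle estimator safely within the region of strong convexity. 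Combined with the convexity of the adaptively weighted $\ell_1$-type penalty and the strict dual feasibility just established, a standard dual-witness argument (in the spirit of Fan et al.~(2014, \cite{Fan/etc:2014})) rules out any competing minimizer whose support differs from $S_0$, delivering the claimed global optimality.

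The main obstacle, and the reason the scaling conditions of the theorem take the form they do, is the uniform control over the $p-s = e^{O(n^l)}$ inactive coordinates of the Hessian-deviation remainder $n^{-1}\boldsymbol{X}_2^T(\boldsymbol{H}_{\gamma}^{(2)}(\tilde{\boldsymbol{\beta}}) - E[\boldsymbol{H}_{\gamma}^{(2)}(\boldsymbol{\beta}_0)])\boldsymbol{X}_1(\widehat{\boldsymbol{\beta}}_1^o - \boldsymbol{\beta}_{10})$; keeping this strictly below $\lambda_n\min(|\boldsymbol{w}_1|)/2$ in the ultra-high dimensional regime $\log p = O(n^l)$, while simultaneously accommodating the growing sparsity $s$ and the design coherence $\kappa_n$, is the most delicate part of the argument and is precisely what the condition $\delta_n s^{3/2}\kappa_n^2(\log_2 n)^2 = o(n\lambda_n^2)$ is engineered to guarantee.
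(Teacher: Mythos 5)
Your overall architecture --- take the oracle estimator $\widehat{\boldsymbol{\beta}}^o$ from Theorem \ref{THM:FixW_Oracle}, verify the KKT conditions for the unrestricted problem, reduce everything to the strict dual-feasibility bound $\left\|n^{-1}\boldsymbol{X}_2^T\boldsymbol{H}_{\gamma}^{(1)}(\widehat{\boldsymbol{\beta}}^o)\right\|_\infty < \lambda_n\min(|\boldsymbol{w}_1|)$, and split that quantity into a deterministic mean-shift controlled by (A5), a stochastic term at $\boldsymbol{\beta}_0$ controlled by Hoeffding plus a union bound under $\lambda_n>2\sqrt{(c+1)\log p/n}$, and a remainder --- is the same as the paper's: the paper proves exactly this bound as Lemma \ref{LEM:L2} via the three pieces $I_1$, $I_2$, $I_3$ and then imports the rest of the argument from Theorem 2 of Fan et al.\ (2014). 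Your treatment of the first two pieces matches the paper's essentially verbatim.

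The genuine gap is in the third piece. The paper does not handle the deviation of $\boldsymbol{H}_{\gamma}^{(1)}(\boldsymbol{\beta})-\boldsymbol{H}_{\gamma}^{(1)}(\boldsymbol{\beta}_0)$ from its mean by a mean-value expansion; it treats $\max_{j>s}\sup_{\boldsymbol{\beta}\in\mathcal{N}}\left|n^{-1}\sum_{i=1}^n h_{j,\boldsymbol{\beta}}(\boldsymbol{x}_i,y_i)\right|$ as an empirical-process supremum and controls it with a concentration inequality for suprema (Corollary 14.4 of B\"uhlmann and van de Geer) under an explicit covering-number hypothesis on the classes $\Gamma_j$ --- a hypothesis that is \emph{not} among (A1)--(A5) and that the authors flag in a remark as an additional, admittedly restrictive, assumption whose removal they can only conjecture. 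Your route instead writes the remainder as $n^{-1}\boldsymbol{X}_2^T\boldsymbol{H}_{\gamma}^{(2)}(\tilde{\boldsymbol{\beta}})\boldsymbol{X}_1(\widehat{\boldsymbol{\beta}}_1^o-\boldsymbol{\beta}_{10})$ and asserts that $\boldsymbol{H}_{\gamma}^{(2)}(\tilde{\boldsymbol{\beta}})-E[\boldsymbol{H}_{\gamma}^{(2)}(\boldsymbol{\beta}_0)]$ is negligible by Lipschitzness, (A4), $\kappa_n$ and the rate condition. That reasoning can bound the smooth part $\boldsymbol{H}_{\gamma}^{(2)}(\tilde{\boldsymbol{\beta}})-\boldsymbol{H}_{\gamma}^{(2)}(\boldsymbol{\beta}_0)$, but it does not touch the centered fluctuation $\boldsymbol{H}_{\gamma}^{(2)}(\boldsymbol{\beta}_0)-E[\boldsymbol{H}_{\gamma}^{(2)}(\boldsymbol{\beta}_0)]$, which is contracted against the data-dependent direction $\widehat{\boldsymbol{\beta}}_1^o-\boldsymbol{\beta}_{10}$ and must be made uniformly small over the $p-s=e^{O(n^l)}$ inactive coordinates. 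A pointwise Hoeffding bound does not apply because the direction is correlated with the noise, and making the bound uniform over the $s$-dimensional ball reintroduces precisely the entropy/chaining argument you are trying to bypass. So the step you describe as ``yielding an $o(\lambda_n)$ contribution'' is exactly where the proof needs a new ingredient (the covering-number condition of Lemma \ref{LEM:L2}), and as written your argument does not supply it; this is also the step the authors themselves acknowledge they cannot complete from (A1)--(A5) alone.
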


	Note that Theorem \ref{THM:FixW_Main} presents consistency of both the parameter estimates and oracle variable selection
	of our AW-DPD-LASSO estimators under appropriate assumptions. Next, we will derive the conditions under which 
	these estimators are asymptotically  normal after suitable normalizations.
	Let us define $\boldsymbol{V}_n = \big[\boldsymbol{X}_1^T E[\boldsymbol{H}_{\gamma}^{(2)}(\boldsymbol{\beta}_{0})] \boldsymbol{X}_1\big]^{-1/2}$,
	$\boldsymbol{Z}_n=\boldsymbol{X}_1\boldsymbol{V}_n$ and $\boldsymbol{\Omega}_n = Var\left[\boldsymbol{H}_{\gamma}^{(1)}(\boldsymbol{\beta}_{0})\right]$,
	and consider the following assumption.
	\begin{itemize}
		\item[(A6)]  $\boldsymbol{Z}_n\boldsymbol{\Omega}_nZ_n$ is positive definite, $\lambda_n ||\boldsymbol{w}_0||_2 = O(\sqrt{s/n})$,
		and $\sqrt{n/s} \min\limits_{1\leq j \leq s} |\beta_{j0}| \rightarrow \infty$.
	\end{itemize}
	
	These Assumptions (A5)-(A6) are similar to the conditions used in Fan et al. \cite{Fan/etc:2014}.
	while developing the properties of the adaptive quantile regressions. 
	Assumption (A5) provides the maximum correlation between the important and unimportant covariates 
	allowed for the proposed method to have oracle model selection consistency.
	This maximum allowed correlation further depends proportionally 
	on the minimum of the weights attached to the truly zero coefficients.
	Assumption (A6)  put further restrictions on these weights 
	ensuring that all the weights for unimportant covariates are penalized
	but those for the important covariates to be small enough depending on the rate of convergences of $\lambda_n$ and $s$.
	We now derive the following theorem on asymptotic distribution of the AW-DPD-LASSO.
		
	\begin{theorem}\label{THM:FixW_AsymNormal}
		Suppose that the assumptions of Theorem \ref{THM:FixW_Main} hold along with Assumption (A6).  
		Then, with probability tending to one, 
		there exists a global minimizer $\widehat{\beta} = (\widehat{\boldsymbol{\beta}}_1^o, \widehat{\boldsymbol{\beta}}_2)^T$ 
		of the AW-DPD-LASSO objective function $Q_{n,\gamma,\lambda }(\boldsymbol{\beta })$ such that
		$\widehat{\boldsymbol{\beta}}_2 = \boldsymbol{0}_{p-s}$ and 
		\begin{eqnarray}
			\boldsymbol{u}^T[\boldsymbol{Z}_n^T\boldsymbol{\Omega}_n\boldsymbol{Z}_n]^{-1/2} \boldsymbol{V}_n^{-1}
			\left[\left(\widehat{\boldsymbol{\beta}}_1^o - \boldsymbol{\beta}_{10}\right) + {n\lambda_n}\boldsymbol{V}_n^2\widetilde{\boldsymbol{w}_0}\right]
			\mathop{\rightarrow}^\mathcal{L} N(0,1),
			\label{EQ:AW-DPD-LASSO_AN}
		\end{eqnarray}
		for any arbitrary $s$-dimensional vector $\boldsymbol{u}$ satisfying $\boldsymbol{u}^T\boldsymbol{u} = 1$, where  
		$\widetilde{\boldsymbol{w}_0}$ is an $s$-dimensional vector with $j$-th element $w_j sign(\beta_{j0})$.
	\end{theorem}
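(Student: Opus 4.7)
The plan is to combine Theorems \ref{THM:FixW_Main} and \ref{THM:FixW_Oracle} with a Karush-Kuhn-Tucker analysis of the restricted problem and a Lindeberg-Feller central limit theorem. First, by Theorem \ref{THM:FixW_Main} the global minimizer takes the form $\widehat{\boldsymbol{\beta}}=((\widehat{\boldsymbol{\beta}}_1^o)^T,\boldsymbol{0}_{p-s}^T)^T$ with probability tending to one, which gives the second conclusion and reduces the problem to analyzing $\widehat{\boldsymbol{\beta}}_1^o$. Under Assumption (A6) the rate $\sqrt{n/s}\min_{j\le s}|\beta_{j0}|\to\infty$ dominates $\delta_n=\sqrt{s(\log n)/n}+\lambda_n\|\boldsymbol{w}_0\|_2$, so Theorem \ref{THM:FixW_Oracle} guarantees $\mathrm{sign}(\widehat{\beta}_j^o)=\mathrm{sign}(\beta_{j0})$ for every $j\le s$; hence the penalty is differentiable at $\widehat{\boldsymbol{\beta}}_1^o$ and the stationarity condition for the restricted minimization is
$$\nabla_{\boldsymbol{\beta}_1}L_{n,\gamma}(\widehat{\boldsymbol{\beta}}^o)+\lambda_n\widetilde{\boldsymbol{w}_0}=\boldsymbol{0}_s.$$

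Next I would apply a second-order Taylor expansion of $\nabla_{\boldsymbol{\beta}_1}L_{n,\gamma}$ about $\boldsymbol{\beta}_0$. Writing $\boldsymbol{\Sigma}_n=n^{-1}\boldsymbol{X}_1^T E[\boldsymbol{H}_\gamma^{(2)}(\boldsymbol{\beta}_0)]\boldsymbol{X}_1$, so that $\boldsymbol{\Sigma}_n^{-1}=n\boldsymbol{V}_n^2$, a direct manipulation combined with the KKT identity and the relation $\nabla_{\boldsymbol{\beta}_1}L_{n,\gamma}(\boldsymbol{\beta}_0)=n^{-1}\boldsymbol{X}_1^T\boldsymbol{H}_\gamma^{(1)}(\boldsymbol{\beta}_0)$ yields
$$(\widehat{\boldsymbol{\beta}}_1^o-\boldsymbol{\beta}_{10})+n\lambda_n\boldsymbol{V}_n^2\widetilde{\boldsymbol{w}_0}=-\boldsymbol{V}_n\boldsymbol{Z}_n^T\boldsymbol{H}_\gamma^{(1)}(\boldsymbol{\beta}_0)+\boldsymbol{R}_n,$$
where $\boldsymbol{R}_n$ absorbs the discrepancy between the empirical Hessian and its expectation $\boldsymbol{\Sigma}_n$ together with the quadratic Taylor remainder. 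Using Assumption (A4) to bound third derivatives of $\rho_\gamma$, Assumption (A3) to invert $\boldsymbol{\Sigma}_n$, the design bounds of (A2), the $\ell_2$-rate $\|\widehat{\boldsymbol{\beta}}_1^o-\boldsymbol{\beta}_{10}\|_2\le C_1\delta_n$ from Theorem \ref{THM:FixW_Oracle}, and the refined order conditions of (A6), I would show $\|\boldsymbol{V}_n^{-1}\boldsymbol{R}_n\|_2=o_P(1)$. Multiplying from the left by $\boldsymbol{u}^T[\boldsymbol{Z}_n^T\boldsymbol{\Omega}_n\boldsymbol{Z}_n]^{-1/2}\boldsymbol{V}_n^{-1}$ then reduces the theorem to verifying that
$$S_n=-\boldsymbol{u}^T[\boldsymbol{Z}_n^T\boldsymbol{\Omega}_n\boldsymbol{Z}_n]^{-1/2}\boldsymbol{Z}_n^T\boldsymbol{H}_\gamma^{(1)}(\boldsymbol{\beta}_0)\to N(0,1),$$
where a direct covariance computation shows $\mathrm{Var}(S_n)=\boldsymbol{u}^T\boldsymbol{u}=1$.

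Finally, since the $n$ components of $\boldsymbol{H}_\gamma^{(1)}(\boldsymbol{\beta}_0)$ depend only on the i.i.d.\ errors $\epsilon_i$ and are centered (integration by parts gives $\int\psi_{1,\gamma}(s)f(s)\,ds=(\gamma+1)^{-1}\int(f(s)^{\gamma+1})'\,ds=0$), $S_n$ is a linear combination of $n$ independent, zero-mean random variables of total variance one. I would close the argument by checking the Lindeberg condition for this triangular array: the coefficient vector is $\boldsymbol{c}_n=\boldsymbol{Z}_n[\boldsymbol{Z}_n^T\boldsymbol{\Omega}_n\boldsymbol{Z}_n]^{-1/2}\boldsymbol{u}$, and Assumptions (A2)-(A3) give $\max_i|c_{n,i}|=O(\kappa_n/\sqrt{n})=o(1)$; together with finite moments of $(1+\gamma)\psi_{1,\gamma}(\epsilon_1)$ guaranteed by (A1) for the error densities of interest, this delivers the Lindeberg-Feller CLT and hence $S_n\to N(0,1)$.

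The main technical obstacle I anticipate is the second step, namely the uniform control of the remainder $\boldsymbol{R}_n$ in high dimension. Its first piece is an $s\times s$ random-matrix deviation $n^{-1}\boldsymbol{X}_1^T\boldsymbol{H}_\gamma^{(2)}(\boldsymbol{\beta}_0)\boldsymbol{X}_1-\boldsymbol{\Sigma}_n$ that must be shown to be small in operator norm via a concentration inequality sharp enough to exploit $\kappa_n=o(n^{1/2}/s)$ from (A2), and its second piece is a sum of $n$ cubic terms involving $s^2$ coordinate products that must be bounded uniformly over the ball $\{\boldsymbol{\beta}:\|\boldsymbol{\beta}-\boldsymbol{\beta}_{10}\|_2\le C_1\delta_n\}$ by a chaining argument built on (A4). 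Because the same machinery underpins the proofs of Theorems \ref{THM:FixW_Oracle} and \ref{THM:FixW_Main}, the approach should be adaptable, but its careful execution is the technical heart of the result.
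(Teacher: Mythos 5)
Your proposal is correct in outline but takes a genuinely different route from the paper, and the step you flag as the ``technical heart'' is precisely what the paper's argument is designed to avoid. The paper reparametrizes via $\boldsymbol{\theta}=\boldsymbol{V}_n^{-1}(\boldsymbol{\beta}_1-\boldsymbol{\beta}_{10})$ and works directly with the scalar objective $Q_{n,\gamma}^\ast(\boldsymbol{\theta})=n\left[Q_{n,\gamma,\lambda_n}(\boldsymbol{\beta}_1,\boldsymbol{0})-Q_{n,\gamma,\lambda_n}(\boldsymbol{\beta}_{01},\boldsymbol{0})\right]$: its mean is shown to be $\tfrac12\|\boldsymbol{\theta}\|_2^2+n\lambda_n\widetilde{\boldsymbol{w}_0}^T\boldsymbol{V}_n\boldsymbol{\theta}+o(1)$, so that $Q_{n,\gamma}^\ast(\boldsymbol{\theta})=\tfrac12\|\boldsymbol{\theta}-\boldsymbol{\eta}_n\|_2^2-\tfrac12\|\boldsymbol{\eta}_n\|_2^2+r_n(\boldsymbol{\theta})+o(1)$ with $\boldsymbol{\eta}_n=\boldsymbol{H}_{\gamma}^{(1)}(\boldsymbol{\theta}_0)\boldsymbol{Z}_n-n\lambda_n\boldsymbol{V}_n\widetilde{\boldsymbol{w}_0}$ asymptotically normal; pointwise exponential concentration of $r_n$ (Lemma \ref{LEM:L3}) is upgraded to $\sup_{K_s}|r_n|=o_P(1)$ by the convexity lemma of Fan et al.\ (Lemma \ref{LEM:L4}), and a convexity argument then localizes the minimizer $\widehat{\boldsymbol{\theta}}_n$ within any $\epsilon$-ball of $\boldsymbol{\eta}_n$, after which Slutsky finishes. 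This buys exactly what your KKT-plus-Taylor route lacks: there is no need to control the operator-norm deviation of the empirical Hessian from $\boldsymbol{\Sigma}_n$, nor to bound the cubic remainder of the gradient uniformly over a $\delta_n$-ball by chaining, because convexity converts pointwise control of a scalar function into uniform control on compacts. Your route (stationarity of the restricted problem, inversion of $\boldsymbol{\Sigma}_n$, Lindeberg--Feller for the linear term) is the classical M-estimation argument and would yield the same conclusion \emph{if} the remainder $\boldsymbol{R}_n$ were actually shown to satisfy $\|\boldsymbol{V}_n^{-1}\boldsymbol{R}_n\|_2=o_P(1)$, but you leave that as an anticipated obstacle rather than resolving it, so the proof is incomplete as written.

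Two smaller points. First, your appeal to the sign-consistency clause of Theorem \ref{THM:FixW_Oracle} requires $\delta_n^{-1}\min_{1\le j\le s}|\beta_{j0}|\to\infty$ with $\delta_n\asymp\sqrt{s(\log n)/n}$, whereas (A6) only supplies $\sqrt{n/s}\,\min_{1\le j\le s}|\beta_{j0}|\to\infty$; there is a $\sqrt{\log n}$ mismatch. The paper sidesteps this because it only needs sign preservation of $\boldsymbol{\beta}_{01}+\boldsymbol{V}_n\boldsymbol{\theta}$ over $\|\boldsymbol{\theta}\|_2\le c^{\ast}\sqrt{s}$, whose $\ell_\infty$ radius is $O(\sqrt{s/n})$ --- exactly what (A6) covers. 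Second, your verification of the CLT for the linear term (centering of $\psi_{1,\gamma}$ by integration by parts, unit variance from the definition of $\boldsymbol{\Omega}_n$, and the Lindeberg condition via $\max_i\|\boldsymbol{Z}_{n,i}\|_2=O(\sqrt{s}\,\kappa_n/\sqrt{n})=o(1)$) is sound and in fact more explicit than the paper's one-line invocation of the central limit theorem for $\boldsymbol{\eta}_n$.
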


	Assumption (A6) is crucial for the above asymptotic normality result of the AW-DPD-LASSO estimator,
	since it ensures that the bias term, namely ${n\lambda_n}\boldsymbol{V}_n\widetilde{\boldsymbol{w}_0}$,
	does not diverges asymptotically ensuring a legitimate asymptotic distribution of 
	$\left(\widehat{\boldsymbol{\beta}}_1^o - \boldsymbol{\beta}_{10}\right)$.
	It is controlled by the choice of weight and regularization sequence depending on $s$. 
	Further, the asymptotic variance of the AW-DPD-LASSO estimator depends crucially on the tuning parameter $\gamma>0$;
	it can be seen that that there is a slight increase in these asymptotic variances
	with increasing $\gamma>0$ in consistence with the classical theory of DPD based estimation  Basu et al. \cite{basu0}.

	As a special case of the above results, we get the properties of the DPD-LASSO estimators
	by using the weights $w_j = 1$ for all $j=1, \ldots, p$.  Then, we have $||\boldsymbol{w}_0||_2 = \sqrt{s}$,
	$\min(|\boldsymbol{w}_1|)=1>0$ and hence $\delta_n= \sqrt{s(\log n)/n} + \lambda_n\sqrt{s}$. 
	We can then obtain the consistency and 
	model selection property of the DPD-LASSO estimators, which is presented in the following corollary.

	\begin{corollary}[Properties of the DPD-LASSO Estimator]
		Under the set-up of this section, assume that Assumption (A1)-(A4) hold true.
		Then, for a given constant $C_1>0$, 
		we have the following results for the DPD-LASSO estimator with regularization parameter $\lambda_n=O(n^{-1/2})$.
		\begin{itemize}
			\item[a)] There exists some $c>0$ such that, with probability at least $1- n^{-cs}$,
			the corresponding oracle estimator $\widehat{\boldsymbol{\beta}}_1^o$ satisfies
			\begin{eqnarray}
				\left|\left|\widehat{\boldsymbol{\beta}}_1^o - \boldsymbol{\beta}_{10}\right|\right|_2 \leq 
				C_1\left[\sqrt{s(\log n)/n} + \lambda_n\sqrt{s}\right].
				\label{EQ:DPD-LASSO-1}
			\end{eqnarray}
			
			\item[b)] If $\sqrt{s(\log n)/n} = o\left(\min\limits_{1\leq j \leq s}|\beta_{j0}|\right)$, 
			then the sign of each component of $\widehat{\boldsymbol{\beta}}_1^o$ matches with that of the true $\boldsymbol{\beta}_{10}$.
			
			\item[c)] If Assumption (A5) holds with $\lambda_n > 2\sqrt{(c+1)\log p / n}$ for some constant $c>0$ and
			$n^{1/2}(\log_2 n)^{5/2} = O(\log p)$, then, with probability at least $1 - O(n^{-cs})$, 
			there exists a global minimizer $\widehat{\boldsymbol{\beta}} 
			= ((\widehat{\boldsymbol{\beta}}_1^o)^T, \widehat{\boldsymbol{\beta}}_2^T)^T$ 
			of the DPD-LASSO objective function such that
			$\widehat{\boldsymbol{\beta}}_1^o$ satisfies (\ref{EQ:DPD-LASSO-1}) and $\widehat{\boldsymbol{\beta}}_2 = \boldsymbol{0}_{p-s}$.
			
			\item[d)] In addition to the assumptions of item (c), 
			if 	$\sqrt{n/s} \min\limits_{1\leq j \leq s} |\beta_{j0}| \rightarrow \infty$
			and $\boldsymbol{Z}_n\boldsymbol{\Omega}_nZ_n$ is positive definite,
			then the DPD-LASSO estimator $\widehat{\boldsymbol{\beta}}_1^o$ obtained in item (c) 
			further satisfies (\ref{EQ:AW-DPD-LASSO_AN}), but the associated bias 
			${n\lambda_n}\boldsymbol{V}_n\widetilde{\boldsymbol{w}_0}$	is non-diminishing. 
		\end{itemize}
	\end{corollary}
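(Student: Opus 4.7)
The plan is to deduce each of (a)--(d) by specializing the appropriate earlier theorem (Theorems~\ref{THM:FixW_Oracle}, \ref{THM:FixW_Main}, or \ref{THM:FixW_AsymNormal}) to the constant weight choice $w_j=1$ for all $j$. Under this choice the key quantities collapse to $\|\boldsymbol{w}_0\|_2=\sqrt{s}$, $\min(|\boldsymbol{w}_1|)=1$, and $\delta_n=\sqrt{s(\log n)/n}+\lambda_n\sqrt{s}$; with $\lambda_n=O(n^{-1/2})$ the second term $\lambda_n\sqrt{s}=O(\sqrt{s/n})$ is of lower order, so $\delta_n=O(\sqrt{s(\log n)/n})$.

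For (a) I invoke Theorem~\ref{THM:FixW_Oracle}. Its side condition $\lambda_n\|\boldsymbol{w}_0\|_2\sqrt{s}\,\kappa_n=\lambda_n s\kappa_n\to 0$ follows from $\lambda_n=O(n^{-1/2})$ together with $\kappa_n=o(n^{1/2}s^{-1})$ in Assumption~(A2), and the conclusion of that theorem is precisely (a). For (b), the extra hypothesis $\sqrt{s(\log n)/n}=o(\min_j|\beta_{j0}|)$ together with the simplified $\delta_n$ gives $\delta_n^{-1}\min_j|\beta_{j0}|\to\infty$, triggering the sign-matching part of Theorem~\ref{THM:FixW_Oracle}.

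For (c) I apply Theorem~\ref{THM:FixW_Main}. The conditions $\min(|\boldsymbol{w}_1|)>c_3$ (take $c_3=1/2$) and $\lambda_n\|\boldsymbol{w}_0\|_2\kappa_n\max\{\sqrt{s},\|\boldsymbol{w}_0\|_2\}\to 0$ are again trivial from the above. The delicate step is
$$\delta_n s^{3/2}\kappa_n^2(\log_2 n)^2=o(n\lambda_n^2).$$
Using $\delta_n=O(\sqrt{s(\log n)/n})$ and $\kappa_n^2=o(n/s^2)$, the left-hand side collapses to $o(\sqrt{n\log n}\,(\log_2 n)^2)$; since $\sqrt{\log n}=O((\log_2 n)^{1/2})$, this is $o(n^{1/2}(\log_2 n)^{5/2})$, which by the hypothesis $n^{1/2}(\log_2 n)^{5/2}=O(\log p)$ and the lower bound $\lambda_n>2\sqrt{(c+1)\log p/n}$ (so that $n\lambda_n^2\gtrsim\log p$) is indeed $o(n\lambda_n^2)$. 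Theorem~\ref{THM:FixW_Main} then yields (c).

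For (d), Assumption~(A6) has to be checked: $\lambda_n\|\boldsymbol{w}_0\|_2=\lambda_n\sqrt{s}=O(\sqrt{s/n})$ is automatic from $\lambda_n=O(n^{-1/2})$, while the signal-strength condition $\sqrt{n/s}\min_j|\beta_{j0}|\to\infty$ and positive-definiteness of $\boldsymbol{Z}_n^T\boldsymbol{\Omega}_n\boldsymbol{Z}_n$ are explicit hypotheses of (d). Theorem~\ref{THM:FixW_AsymNormal} then produces the asymptotic normality. The bias $n\lambda_n\boldsymbol{V}_n\widetilde{\boldsymbol{w}_0}$ does not vanish because with $w_j\equiv 1$ each component of $\widetilde{\boldsymbol{w}_0}$ equals $\mathrm{sign}(\beta_{j0})\in\{\pm 1\}$ and exhibits no decay, which is precisely the reason adaptive weighting is needed to kill the bias. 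The main obstacle throughout is the rate bookkeeping for (c); every other item is a clean specialization once the simplifications of $\|\boldsymbol{w}_0\|_2$ and $\min(|\boldsymbol{w}_1|)$ are in place.
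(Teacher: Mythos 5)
Your proposal is correct and follows exactly the route the paper intends: the paper offers no separate proof of this corollary but derives it as the specialization $w_j\equiv 1$ of Theorems \ref{THM:FixW_Oracle}, \ref{THM:FixW_Main} and \ref{THM:FixW_AsymNormal}, noting precisely the simplifications $\|\boldsymbol{w}_0\|_2=\sqrt{s}$, $\min(|\boldsymbol{w}_1|)=1$ and $\delta_n=\sqrt{s(\log n)/n}+\lambda_n\sqrt{s}$ that you use. Your explicit rate bookkeeping for the condition $\delta_n s^{3/2}\kappa_n^2(\log_2 n)^2=o(n\lambda_n^2)$ in part (c), and the observation that $\|\widetilde{\boldsymbol{w}_0}\|_2=\sqrt{s}$ makes the bias $O(\sqrt{s})$ and hence non-diminishing in part (d), correctly fill in the details the paper leaves implicit.
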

	
	It is important to note that the good properties of the DPD-LASSO estimator 
	requires $\lambda_n=O(n^{-1/2})$, which is known to be quite low for a thresholding level 
	even for Gaussian noise Fan et al.  \cite{Fan/etc:2014}. This is in consistent with any LASSO estimators
	that motivated researchers to consider adaptive LASSO. In general, 
	the properties of the AW-DPD-LASSO estimators crucially depends 
	note only on the choice of weights $\boldsymbol{w}$ but also on 
	the structure of the oracle through different assumptions on $\boldsymbol{w}_0$ and $\boldsymbol{w}_1$.
	These are not practically feasible to satisfy with the nonadaptive fixed weights 
	and we need to estimates the weights adaptively from the data.
	However, the results developed in this section with fixed weights 
	will indeed be useful in understanding the behaviors of the AW-DPD-LASSO estimators 
	with stochastic (adaptive) weights as discussed in the next subsection.

	\subsection{General AW-DPD-LASSO estimator with Adaptive Weights}
	\label{SEC:Oracle_AdaptiveW}

	We now consider the AW-DPD-LASSO estimators with adaptive (stochastic) weights as in the objective function in (\ref{EQ:DPD_AL_Gen}), but with $\sigma=1$.
	Given an initial estimator of $\widetilde{\boldsymbol{\beta}}=(\widetilde{\beta}_1, \ldots, \widetilde{\beta}_p)^T$,
	we can now re-express the objective function in $\boldsymbol{\beta}$ as 
	\begin{eqnarray}
		\widehat{Q}_{n,\gamma,\lambda}(\boldsymbol{\beta})=L_{n,\gamma }(\boldsymbol{\beta }) + \lambda_n\sum\limits_{j=1}^{p} \widehat{w}_{j}
		\left\vert \beta_{j}\right\vert,
		\label{EQ:ObjFUnc_AdWeight}
	\end{eqnarray}
	where the loss $L_{n, \gamma}(\boldsymbol{\beta})$ are exactly as in Section \ref{SEC:Oracle_fixedW}
	but the weights are now estimated adaptively as $\widehat{w}_j = w\left(\left\vert \widetilde{\beta}_{,j}\right\vert\right)$ 
	for some suitable function $w(\cdot)$. We first study the property of the resulting estimator for general weight function 
	and initial estimator $\widetilde{\boldsymbol{\beta}}$ 
	and then simplify them for an important special case.

	In this section as well, we will continue to use the notation of Section \ref{SEC:Oracle_fixedW},
	and additionally denote $\widehat{\boldsymbol{w}}=(\widehat{w}_1, \ldots, \widehat{w}_p)^T$ and
	$\boldsymbol{w}^\ast = (w_1^\ast, \ldots, w_p^\ast)^T$ with $w_j^\ast = w(|\beta_{0j}|)$
	and their partitions $\widehat{\boldsymbol{w}}_0=\widehat{\boldsymbol{w}}_{S_0}$, $\widehat{\boldsymbol{w}}_1=\widehat{\boldsymbol{w}}_{S_0^c}$,
	$\boldsymbol{w}_0^\ast=\boldsymbol{w}_{S_0}^\ast$ and $\boldsymbol{w}_1^\ast=\boldsymbol{w}_{S_0^c}^\ast$.
	Further, let us define ${\delta}_n^\ast = \big[\sqrt{{s(\log n)}/{n}} 
	+ {\lambda_n}\left(||\boldsymbol{w}_0^\ast||_2 + C_2c_5\sqrt{s(\log p)/n}\right)\big]$,
	where $C_2$ and $c_5$ are defined in the following assumptions.
	
	\begin{itemize}
		\item[(A7)]  The initial estimator $\widetilde{\boldsymbol{\beta}}$ satisfies
		$||\widetilde{\boldsymbol{\beta}} - \boldsymbol{\beta}_0||_2 \leq C_2 \sqrt{s(\log p)/n}$ 
		for some constant $C_2>0$, with probability tending to one.
		\item[(A8)] The weight function $w(\cdot)$ is non-increasing over $(0, \infty)$ and is Lipschitz continuous 
		with Lipschitz constant $c_5>0$. Further, 
		$w(C_2\sqrt{s(\log p)/n}) > \frac{1}{2}w(0+)$ for large enough $n$, 
		where $C_2$ is as in Assumption (A7).
		\item[(A9)] With $C_2$ as in Assumption (A7), $\min\limits_{1\leq j \leq s}|\beta_{0j}| > 2C_2 \sqrt{s(\log p)/n}$.
		Further, the derivative of the wight satisfies $w'(|b|) =o(s^{-1}\lambda_n^{-1}(n\log p)^{-1/2})$ for any 
		$|b| > \frac{1}{2} \min\limits_{1\leq j \leq s}|\beta_{0j}|$.
	\end{itemize}

	These assumptions are exactly the same as Conditions 4--6 of  Fan et al. \cite{Fan/etc:2014}.
	The first one put the weaker restriction on the initial estimator to be only consistent 
	in order to achieve the variable selection consistency of the second stage AW-DPD-LASSO estimators
	as shown in the following theorem;
	this minor requirement is satisfied by most simple LASSO estimators, including the DPD-LASSO, 
	so that either of them can be used as the initial estimator here. 
	On the other hand, Assumptions (A8) and (A9) put restrictions on the weight functions used
	where the first one is needed for model selection consistency and 
	both are needed for the asymptotic normality of the resulting estimators.

	\begin{theorem}\label{THM:AdW_Main}
		Suppose that the assumptions of Theorem \ref{THM:FixW_Main} hold with $\boldsymbol{w}=\boldsymbol{w}^\ast$
		and $\delta_n =\delta_n^\ast$. Additionally, if Assumptions (A7)-(A8) hold
		with $\lambda_n s \kappa_n \sqrt{(\log p)/n} \rightarrow 0$
		then, with probability tending to one, 
		there exists a global minimizer $\widehat{\boldsymbol{\beta}} = (\widehat{\boldsymbol{\beta}}_1^T, \widehat{\boldsymbol{\beta}}_2^T)^T$ 
		of the AW-DPD-LASSO objective function $Q_{n,\gamma,\lambda }(\boldsymbol{\beta})$ in (\ref{EQ:ObjFUnc_AdWeight}),
		with adaptive weights, such that
		$$
		\left|\left|\widehat{\boldsymbol{\beta}}_1 - \boldsymbol{\beta}_{10}\right|\right|_2 \leq C_1\delta_n,
		~~~~\mbox{and }~~~~~
		\widehat{\boldsymbol{\beta}}_2 = \boldsymbol{0}_{p-s}.
		$$ 
	\end{theorem}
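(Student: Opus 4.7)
The plan is to reduce this stochastic-weight theorem to Theorem \ref{THM:FixW_Main} (fixed weights) by conditioning on a high-probability event on which the adaptive weights $\widehat{\boldsymbol{w}}$ are uniformly close to the deterministic target $\boldsymbol{w}^\ast=(w(|\beta_{0j}|))_{j=1}^p$. Define
$$
\mathcal{A}_n = \left\{ ||\widetilde{\boldsymbol{\beta}} - \boldsymbol{\beta}_0||_2 \leq C_2 \sqrt{s(\log p)/n}\right\},
$$
which by Assumption (A7) satisfies $P(\mathcal{A}_n)\to 1$. Throughout the argument I work on $\mathcal{A}_n$, so the randomness of the weights is effectively replaced by deterministic envelope bounds.

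I next convert the closeness of $\widetilde{\boldsymbol{\beta}}$ to $\boldsymbol{\beta}_0$ into closeness of the weight vectors. The Lipschitz property of $w$ from Assumption (A8), applied componentwise, gives $|\widehat{w}_j - w_j^\ast| \leq c_5|\widetilde{\beta}_j - \beta_{0j}|$, and hence on $\mathcal{A}_n$,
$$
||\widehat{\boldsymbol{w}}-\boldsymbol{w}^\ast||_2 \leq c_5 C_2\sqrt{s(\log p)/n}, \qquad ||\widehat{\boldsymbol{w}}_0||_2 \leq ||\boldsymbol{w}_0^\ast||_2 + c_5 C_2\sqrt{s(\log p)/n}.
$$
The second inequality is exactly the extra term that is already built into $\delta_n^\ast$, so the natural $\delta_n$ appearing in Theorem \ref{THM:FixW_Main}, when applied with the random weights, is bounded above by a constant multiple of $\delta_n^\ast$. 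For $j \in S_0^c$ we have $|\widetilde{\beta}_j| = |\widetilde{\beta}_j - \beta_{0j}| \leq C_2\sqrt{s(\log p)/n}$; monotonicity of $w$ together with the lower-bound part of (A8) then yields $\widehat{w}_j \geq w(C_2\sqrt{s(\log p)/n}) > \tfrac{1}{2}w(0+) =: c_3^\ast > 0$, and therefore $\min(|\widehat{\boldsymbol{w}}_1|) \geq c_3^\ast$ on $\mathcal{A}_n$.

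With these two deterministic envelopes in hand, I verify that on $\mathcal{A}_n$ every hypothesis of Theorem \ref{THM:FixW_Main} holds with $\boldsymbol{w}$ replaced by the actual random $\widehat{\boldsymbol{w}}$ and $\delta_n$ replaced by $\delta_n^\ast$. The bound $\min(|\widehat{\boldsymbol{w}}_1|)>c_3^\ast$ supplies the required lower-bound hypothesis; the rate $\lambda_n||\widehat{\boldsymbol{w}}_0||_2 \kappa_n\max\{\sqrt{s},||\widehat{\boldsymbol{w}}_0||_2\}\to 0$ follows from the corresponding hypothesis on $\boldsymbol{w}^\ast$ together with the newly imposed condition $\lambda_n s\kappa_n\sqrt{(\log p)/n}\to 0$, which absorbs the perturbation term $\lambda_n c_5 C_2 \sqrt{s(\log p)/n}\cdot \kappa_n\max\{\sqrt{s},\ldots\}$; the remaining condition $\delta_n^\ast s^{3/2}\kappa_n^2(\log_2 n)^2 = o(n\lambda_n^2)$ is assumed outright. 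Applying Theorem \ref{THM:FixW_Main} conditionally on $\mathcal{A}_n$ then yields a global minimizer $\widehat{\boldsymbol{\beta}}=(\widehat{\boldsymbol{\beta}}_1^T, \boldsymbol{0}_{p-s}^T)^T$ of the objective in (\ref{EQ:ObjFUnc_AdWeight}) satisfying $||\widehat{\boldsymbol{\beta}}_1-\boldsymbol{\beta}_{10}||_2 \leq C_1\delta_n^\ast$ with conditional probability $1-O(n^{-cs})$, and a union bound with $P(\mathcal{A}_n)\to 1$ delivers the stated unconditional conclusion.

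The main obstacle is that Theorem \ref{THM:FixW_Main} is formulated for deterministic weights, whereas here $\widehat{\boldsymbol{w}}$ is a function of the same data that enters $L_{n,\gamma}$. The cleanest way to resolve this is to observe that in the proof of Theorem \ref{THM:FixW_Main} the weights enter only through the aggregate quantities $||\boldsymbol{w}_0||_2$ and $\min(|\boldsymbol{w}_1|)$, both of which are controlled \emph{deterministically} on $\mathcal{A}_n$ by the envelopes above; consequently the concentration-of-measure arguments on $L_{n,\gamma}$ that produce the $n^{-cs}$ failure probability are undisturbed and pass through uniformly over every realization of $\widehat{\boldsymbol{w}}$ in the admissible range. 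Carrying out this stability check carefully--either by re-examining the relevant steps in the fixed-weight proof or by discretizing the ball $\{\boldsymbol{w} : ||\boldsymbol{w}-\boldsymbol{w}^\ast||_2 \leq c_5 C_2\sqrt{s(\log p)/n}\}$ and union-bounding over its $\varepsilon$-net--constitutes the technical core of the argument.
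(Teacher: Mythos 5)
Your proposal is correct and follows essentially the same route as the paper: the paper likewise uses (A7) to bound $||\widehat{\boldsymbol{w}}_0||_2 \leq ||\boldsymbol{w}_0^\ast||_2 + c_5C_2\sqrt{s(\log p)/n}$ via the Lipschitz property of $w$, and (A8) to get $\min(|\widehat{\boldsymbol{w}}_1|) > \tfrac{1}{2}w(0+) = \tfrac{1}{2}\min(|\boldsymbol{w}_1^\ast|)$, then re-runs the two steps of the fixed-weight argument (sphere-boundary bound on the oracle subspace, then the KKT condition via Lemma \ref{LEM:L2} applied with $\delta_n = \delta_n^\ast$). The "stability check" you flag as the technical core is exactly what the paper carries out explicitly, and no $\varepsilon$-net is needed since the weights appear only in the deterministic penalty term while all concentration is on the weight-free loss $L_{n,\gamma}$.
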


	\begin{theorem}\label{THM:AdW_AsymNormal}
		Suppose that the assumptions of Theorem \ref{THM:FixW_AsymNormal} hold with $\boldsymbol{w}=\boldsymbol{w}^\ast$
		and $\delta_n =\delta_n^\ast$. Additionally, suppose that Assumptions (A7)-(A9) hold.
		Then, with probability tending to one, 
		there exists a global minimizer $\widehat{\boldsymbol{\beta}} 
		= (\widehat{\boldsymbol{\beta}}_1^T, \widehat{\boldsymbol{\beta}}_2^T)^T$ 
		of the AW-DPD-LASSO objective function $Q_{n,\gamma,\lambda }(\boldsymbol{\beta})$ in (\ref{EQ:ObjFUnc_AdWeight}),
		with adaptive weights, having the same asymptotic properties as those described  in Theorem \ref{THM:FixW_AsymNormal}.
	\end{theorem}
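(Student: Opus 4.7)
The plan is to combine Theorem \ref{THM:AdW_Main}, which gives consistency and support recovery for the adaptively weighted estimator, with the asymptotic normality machinery already developed in Theorem \ref{THM:FixW_AsymNormal} for fixed weights. The key observation is that Theorem \ref{THM:FixW_AsymNormal} is already valid for the specific deterministic ``oracle'' weight vector $\boldsymbol{w}^\ast=(w(|\beta_{0j}|))_{j}$, since the hypotheses of Theorem \ref{THM:FixW_Main} hold for $\boldsymbol{w}=\boldsymbol{w}^\ast$ with $\delta_n=\delta_n^\ast$ by assumption. It therefore suffices to show that replacing $\boldsymbol{w}^\ast$ by the stochastic weights $\widehat{\boldsymbol{w}}$ only perturbs the bias term $n\lambda_n\boldsymbol{V}_n^2\widetilde{\boldsymbol{w}_0}$ by an amount that is asymptotically negligible after the standardization in (\ref{EQ:AW-DPD-LASSO_AN}).

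First, by Theorem \ref{THM:AdW_Main} there exists with probability tending to one a global minimizer $\widehat{\boldsymbol{\beta}}=(\widehat{\boldsymbol{\beta}}_1^T,\boldsymbol{0}_{p-s}^T)^T$ of $\widehat{Q}_{n,\gamma,\lambda}$ with $\|\widehat{\boldsymbol{\beta}}_1-\boldsymbol{\beta}_{10}\|_2\le C_1\delta_n^\ast$. Under Assumption (A9) the signal strength $\min_{1\le j\le s}|\beta_{0j}|$ dominates the estimation error of both $\widehat{\boldsymbol{\beta}}_1$ and the initial estimator $\widetilde{\boldsymbol{\beta}}_{S_0}$, so that $\mathrm{sign}(\widehat{\beta}_j)=\mathrm{sign}(\widetilde{\beta}_j)=\mathrm{sign}(\beta_{0j})$ for every $j\in S_0$ on an event of probability tending to one. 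On this event the restricted KKT conditions for $\widehat{\boldsymbol{\beta}}_1$ read
\[
\nabla_{\boldsymbol{\beta}_1} L_{n,\gamma}(\widehat{\boldsymbol{\beta}}_1,\boldsymbol{0})+\lambda_n\widetilde{\widehat{\boldsymbol{w}}_0}=\boldsymbol{0}_s,\qquad \widetilde{\widehat{\boldsymbol{w}}_0}=(\widehat{w}_j\,\mathrm{sign}(\beta_{0j}))_{j\in S_0},
\]
and a second-order Taylor expansion of $\nabla_{\boldsymbol{\beta}_1} L_{n,\gamma}$ around $\boldsymbol{\beta}_{10}$, performed exactly as in the proof of Theorem \ref{THM:FixW_AsymNormal} using Assumptions (A1)--(A4) and the smallness of $\delta_n^\ast\kappa_n\sqrt{s}$, linearizes the estimator as $\widehat{\boldsymbol{\beta}}_1-\boldsymbol{\beta}_{10}=-\boldsymbol{V}_n^2\boldsymbol{X}_1^T\boldsymbol{H}_\gamma^{(1)}(\boldsymbol{\beta}_0)-n\lambda_n\boldsymbol{V}_n^2\widetilde{\widehat{\boldsymbol{w}}_0}+r_n$, with remainder $r_n$ of smaller order than $\boldsymbol{V}_n\cdot N(0,1)$ in any unit-vector direction.

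The core of the argument is then to compare $\widetilde{\widehat{\boldsymbol{w}}_0}$ with the fixed oracle weight $\widetilde{\boldsymbol{w}_0^\ast}$. Since the signs agree on our event, this reduces to bounding $\|\widehat{\boldsymbol{w}}_0-\boldsymbol{w}_0^\ast\|_2$. Assumption (A9), combined with the initial-estimator rate from (A7), guarantees that every $|\widetilde{\beta}_j|$ with $j\in S_0$ exceeds $C_2\sqrt{s(\log p)/n}$ and lies in the region where $w'$ satisfies the strong bound $|w'(|b|)|=o(s^{-1}\lambda_n^{-1}(n\log p)^{-1/2})$. The mean value theorem applied coordinate-wise then yields
\[
\|\widehat{\boldsymbol{w}}_0-\boldsymbol{w}_0^\ast\|_2\le \max_{j\in S_0}|w'(\xi_j)|\,\|\widetilde{\boldsymbol{\beta}}_{S_0}-\boldsymbol{\beta}_{10}\|_2 = o\!\left(s^{-1/2}\lambda_n^{-1}n^{-1}\right),
\]
whence $n\lambda_n\|\widehat{\boldsymbol{w}}_0-\boldsymbol{w}_0^\ast\|_2=o(s^{-1/2})$. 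Coupled with the order estimates $\|\boldsymbol{V}_n\|=O(n^{-1/2})$ and the positive-definiteness of $\boldsymbol{Z}_n^T\boldsymbol{\Omega}_n\boldsymbol{Z}_n$ from (A6), this renders the discrepancy $\boldsymbol{u}^T[\boldsymbol{Z}_n^T\boldsymbol{\Omega}_n\boldsymbol{Z}_n]^{-1/2}\boldsymbol{V}_n^{-1}\,n\lambda_n\boldsymbol{V}_n^2(\widetilde{\widehat{\boldsymbol{w}}_0}-\widetilde{\boldsymbol{w}_0^\ast})$ asymptotically negligible, and the conclusion (\ref{EQ:AW-DPD-LASSO_AN}) follows from Theorem \ref{THM:FixW_AsymNormal} applied with the fixed weights $\boldsymbol{w}^\ast$.

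The principal obstacle, and the reason Assumption (A9) is stated in the rather delicate form involving $w'$, is precisely this uniform weight-perturbation bound: one cannot afford the crude Lipschitz bound $c_5\|\widetilde{\boldsymbol{\beta}}_{S_0}-\boldsymbol{\beta}_{10}\|_2$, because after multiplication by $n\lambda_n$ that would only give $O(n\lambda_n\sqrt{s\log p/n})$, which need not vanish. The gain comes from the fact that on the oracle support the initial estimator falls into a region where the weight function is nearly flat, so the linear coefficient of the perturbation is of smaller order than $1$ rather than $O(1)$. All other pieces of the argument---the Taylor linearization, the CLT for the score term $\boldsymbol{V}_n\boldsymbol{X}_1^T\boldsymbol{H}_\gamma^{(1)}(\boldsymbol{\beta}_0)$, and the handling of the Hessian remainder---are inherited directly from the proof of Theorem \ref{THM:FixW_AsymNormal} and require no additional work in the adaptive setting.
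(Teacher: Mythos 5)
Your proposal is correct, and it reaches the conclusion by a somewhat different route than the paper. The paper's own proof (Appendix A.5) is very short: it invokes Theorem \ref{THM:AdW_Main} for the oracle global minimizer, re-parametrizes via $\boldsymbol{\theta}=\boldsymbol{V}_n^{-1}(\boldsymbol{\beta}_1-\boldsymbol{\beta}_{10})$, and then shows (deferring the details to Theorem 5 of Fan et al.\ 2014) that $\sup_{\boldsymbol{\theta}\in B_0(n)}|\widehat{Q}_{n,\gamma}^\ast(\boldsymbol{\theta})-Q_{n,\gamma}^\ast(\boldsymbol{\theta})|=o_P(1)$, after which the convexity--localization machinery of Appendix A.3 (Lemmas \ref{LEM:L3} and \ref{LEM:L4}) transfers the minimizer to within $o_P(1)$ of $\boldsymbol{\eta}_n$. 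You instead work through the restricted KKT conditions and an explicit asymptotic linearization of $\widehat{\boldsymbol{\beta}}_1$, then compare the stochastic bias term $n\lambda_n\boldsymbol{V}_n^2\widetilde{\widehat{\boldsymbol{w}}_0}$ directly with its oracle counterpart. The quantitative heart is identical in both arguments and you make it explicit where the paper does not: the mean-value-theorem bound $\|\widehat{\boldsymbol{w}}_0-\boldsymbol{w}_0^\ast\|_2=o(s^{-1/2}\lambda_n^{-1}n^{-1})$, which hinges on (A9) forcing $|\widetilde{\beta}_j|$ into the nearly flat region of $w$ (your remark that the crude Lipschitz bound $c_5\sqrt{s(\log p)/n}$ would not suffice after rescaling is exactly the reason (A9) is formulated through $w'$). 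The paper's objective-function route buys a cleaner treatment of the stochastic remainder, since the loss part of $\widehat{Q}^\ast-Q^\ast$ cancels identically and the already-proved Lemmas \ref{LEM:L3}--\ref{LEM:L4} handle localization; your KKT route gives a more transparent picture of where the bias correction comes from, but it implicitly re-uses the same uniform control of the empirical Hessian and third-order remainder (your $r_n$), which you correctly note is inherited unchanged from the fixed-weight proof. One small caveat: your sign-agreement step for $\widehat{\boldsymbol{\beta}}_1$ needs $\delta_n^\ast=o(\min_{j\le s}|\beta_{0j}|)$, which follows from (A6) and (A9) together but deserves a line; the paper's version of this step is carried out inside $B_0(n)$ using $\|\boldsymbol{V}_n\boldsymbol{\theta}\|_\infty=O(\sqrt{s/n})$ and (A6).
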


	It is important to note that these results can be further simplified for any give weight function.
	For example, if we use the SCAD weight function as defined in (\ref{EQ:SCAD}), 
	it has been verified in  Fan et al.  \cite{Fan/etc:2014} that Assumption (A8) is satisfied 
	if $\lambda_n > \frac{2}{(a+1)}C_2\sqrt{\frac{s(\log p)}{n}}$, 
	whereas Assumption (A9) is satisfied if $\min\limits_{1\leq j \leq s}|\beta_{0j}| \geq 2a\lambda_n$.
	Therefore, under appropriately simplified conditions as given in Corollary 1 of  Fan et al. \cite{Fan/etc:2014},
	the AW-DPD-LASSO estimator with SCAD weight function can be seen to satisfy 
	exactly the same model section oracle property and asymptotic normality as 
	the DPD-ncv estimator with SCAD penalty. 
	However, the Ad-DPD-LASSO estimators may not enjoy both of these properties
	since the corresponding weight function is not Lipschitz around zero
	but they are so locally on intervals of the form $(c, \infty)$ for any $c>0$. 
	
	\vspace{-.3cm}
	\section{Computational algorithm}
	\label{SEC:computation}
	
	Several computationally efficient algorithms have been developed in the literature 
	to solve the least squares regression problem with different types (e.g., adaptive, grouped, etc.) of LASSO penalties. 
	These techniques often use the local convexity of the objective function. 
	In this section, we develop an appropriate  estimating algorithm for the DPD-LASSO, the Ad-DPD-LASSO and the general AW-DPD-LASSO estimators under the assumption of normal error distribution. 
	To minimize the corresponding objective functions, we propose an iterative optimization algorithm. 
	Firstly the objective function is minimized in $\boldsymbol{\beta}$ with a fixed $\sigma$, 
	and secondly it is optimized on $\sigma$ for fixed  $\boldsymbol{\beta}$ ; these two steps are repeated consecutively until convergence.
	To perform the first minimization with respect to $\bold\beta$, we use the approach of MM-algorithm; here, the observed data are weighted to bound the DPD loss function by a quadratic loss, 	and hence transforming the minimization problem to a least-squares LASSO penalized problem. 
	In the alternative steps, given $\bold\beta$, $\sigma$ is updated using coordinate descent algorithm. A detailed step-by-step derivation of the computing algorithm can be found in Section 2 of the Appendix (available in the Online Supplementary Material).\\
	
	\noindent\textbf{Algorithm 1} (General AW-DPD-LASSO estimator)
	\begin{enumerate}
		\item Set $m=0$. Choose values of the hyper-parameters $\lambda$ and $\gamma$, 
		and set robust initial values $\tilde{\boldsymbol{\beta}}$ and 
		$(\widehat{\boldsymbol{\beta}}^0, \widehat{\sigma}^0)$ using any suitable robust algorithm. 
		\item  For each $m=0, 1, \ldots$, do the following:
		\begin{enumerate}
			\item Compute $\widehat{\boldsymbol{\beta}}^{(m+1)}$ from $(\widehat{\boldsymbol{\beta}}^{(m)}, \widehat{\sigma}^{(m)})$  as 
			\begin{equation} \label{betaupdate}
				\widehat{\boldsymbol{\beta}}^{(m+1)} = \operatorname{arg min} 
				\left[\sum_{i=1}^n \mu_i^{(m)}  \left( \frac{y_i-\boldsymbol{x}_i^T \boldsymbol{\beta}}{\sigma^{(m)}}\right)^2 + \lambda \sum_{j=1}^{p} \omega(| \tilde{\beta}_j|)\cdot|\beta_j| \right]
			\end{equation} 
			with $$\mu_i^{(m)} = \exp\left(-\frac{\gamma}{2} \left(\frac{y_i-\boldsymbol{x}_i^T\widehat{\boldsymbol{\beta}}^{(m)}}{\widehat{\sigma}^{(m)}}\right)^2\right) \left[\sum_{l=1}^n \exp\left(-\frac{\gamma}{2} \left(\frac{y_l-\boldsymbol{x}_l^T\widehat{\boldsymbol{\beta}}^{(m)}}{\widehat{\sigma}^{(m)}}\right)^2\right)\right]^{-1}.$$
			\item Compute $\widehat{\sigma}^{(m+1)}$ from $(\widehat{\boldsymbol{\beta}}^{(m+1)}, \widehat{\sigma}^{(m)})$  as 
			\begin{equation}\label{sigmaupdate}
				(\widehat{\sigma}^{(m+1)})^2 = \left[\frac{1}{n} \sum_{i=1}^n w_i^{(m)} - \frac{\gamma}{(\gamma+1)^{3/2}} \right]^{-1}  \frac{1}{n} \sum_{i=1}^n w_i^{(m)} \left(y_i-\boldsymbol{x}_i^T\widehat{\boldsymbol{\beta}}^{(m+1)}\right)^2, 
			\end{equation}
			where 
			$ w_i^{(m)} 
			=  \exp\left(-\frac{\gamma}{2}\left(\frac{y_i-\boldsymbol{x}_i^T\widehat{\boldsymbol{\beta}}^{(m+1)}}{\widehat{\sigma}^{(m)}}\right)^2 \right).
			$
		\end{enumerate}
		\item If 
		$|Q_{n,\gamma,\lambda}(\widehat{\boldsymbol{\beta}}^{(m+1)}, \widehat{\sigma}^{(m+1)}) 
		- Q_{n,\gamma,\lambda}(\widehat{\boldsymbol{\beta}}^{(m)}, \widehat{\sigma}^{(m)}) | \leq \varepsilon$ (pre-specified): 
		Go to Step 4. \\
		Else: Set $\tilde{\boldsymbol{\beta}} = \widehat{\boldsymbol{\beta}}^{(m)}$, $m=m+1$  and return to Step 2.
		
		\item Output: $\widehat{\boldsymbol{\beta}} = \widehat{\boldsymbol{\beta}}^{(m+1)}$ and $\widehat{\sigma}=\widehat{\sigma}^{(m+1)}$
		as the  AW-DPD-LASSO estimator. 
	\end{enumerate}

	The performance of Algorithm 1 depends on the choice of initial values  $\tilde{\boldsymbol{\beta}}$ and 
	$(\widehat{\boldsymbol{\beta}}^0, \widehat{\sigma}^0)$, as well as on the regularization parameter $\lambda$. 
	We can apply any robust standard regression method, such as  RLARS, or DPD-LASSO to get the required initial values of the parameters. 
	Note that $\tilde{\boldsymbol{\beta}}$ and  $\widehat{\boldsymbol{\beta}}^0$ can be chosen differently or the same  but they should both be conservative as not to loose any important covariate.
	In our analysis, we employed the DPD-LASSO estimator for both $\tilde{\boldsymbol{\beta}}$ and  $(\widehat{\boldsymbol{\beta}}^0, \widehat{\sigma}^0),$ which is robust and tends to be quite conservative in variable selection. Just like the usual LASSO estimator serves as a good initial estimator for the computation of the classical adaptive LASSO estimator, we have seen that the  our choice of DPD-LASSO as the initial estimator in computation of AW-DPD-LASSO estimator gives desirable and sufficiently good results; so we stick with this recommendation for the choice of initial estimators. However, other initial estimators may be explored in future although the proposed algorithm appears not to be very sensitive to these choices as long as the initial estimators of the regression coefficients are robust enough and do not miss any important variables. 
	
	For the selection of the (best) $\lambda$, we propose to use the high-dimensional Bayesian Information Criterion (HBIC), 
	which has demonstrably better performance compared to the standard BIC under non-polynomial dimensionality 
	(Kim et al. \cite{kim}; Wang et al. \cite{wang}; Fan and Tang \cite{FanTang}). The HBIC is defined as
	\begin{equation} \label{HBIC}
		\text{HBIC}(\lambda) = \log(\widehat{\sigma}_{\lambda}^2) + \frac{\log \log(n)\log p}{n}\|\widehat{\boldsymbol{\beta}}_{\lambda} \|_0.
	\end{equation}
	and we select the optimal $\lambda$ minimizing the HBIC values over a pre-determined set (via grid search).

	\section{Simulation study \label{sec:simulation}}
	
	\subsection{Simulation Settings \label{sec:simulationsettings}}
	
	The data are generated from the LRM in (\ref{1.1}) following similar set-up as Zou and Li \cite{Zou3}. 
	We set the sample size $n=100$, the true error standard deviation $\sigma_0 = 0.5$, and $p=500, 1000.$ Errors are generated independently from $\mathcal{N}(0, \sigma_0^2$) distributions.
	 Two different scenarios are considered for the true sparse regression coefficients $\boldsymbol{\beta}_0$ as follows:
	\begin{itemize}
		\item Setting A: Only three true coefficients are not null. $\boldsymbol{\beta}_0 = (3, 1.5, 0, 0, 2, \boldsymbol{0}_{p-5}).$
		\item Setting B : A more challenging scenario obtained by modifying the precedent Setting A, where 
		we divide the first 60 components into continuous blocks of size 20, and assign the coefficient values $(3, 1.5, 0, 0, 2, \boldsymbol{0}_{15})$ to each block. Thus, the true model here has nine important covariates.
	\end{itemize}
	Rows of the design matrix $\boldsymbol{X}$ are drawn from the normal distribution $\mathcal{N}\left(\boldsymbol{0, \boldsymbol{\Sigma}}\right)$, 
	where $\boldsymbol{\Sigma}$ is a positive definite matrix of Toeplitz structure, with the $(i,j)$-th element being $0.5^{|i-j|}$. 
	To study the robustness of the method, we additionally modify these (pure) data by four type of contamination as follows: 
	\begin{itemize}
		\item $Y$-outliers : We add random observations, drawn independently from a normal distribution $\mathcal{N}\left(20,1\right)$, 
		to the response variable for a random $10\%$ of each sample.
		\item $\boldsymbol{X}$-outliers : We add random observations, drawn independently from a normal distribution, $\mathcal{N}\left(20,1\right),$ 
		to the covariate values in $10$ columns of $\boldsymbol{X}$ for a random $10\%$ of each sample. 
		\item Heavy-tailed $Y$-outliers : We add random observations centred at $20$, drawn independently from a t-student distributions with 3 degrees of freedom,   
		to the response variable for a random $10\%$ of each sample.
		\item Heavy-tailed $\boldsymbol{X}$-outliers : We add random observations centred at $20$, drawn independently from a t-student distributions with 3 degrees of freedom,
		to the covariate values in $10$ columns of $\boldsymbol{X}$ for a random $10\%$ of each sample. 
	\end{itemize}
	We repeat the simulations over $R=100$ replications to compute different performance measures. 
	We evaluate the variable selection performances through Model Size (MS), True Positive proportion (TP) and True Negative proportion, 
	defined as 
	\begin{align*}
		\operatorname{MS}(\widehat{\boldsymbol{\beta}}) = |\text{supp}(\widehat{\boldsymbol{\beta}})|,
		~	\operatorname{TP}(\widehat{\boldsymbol{\beta}}) = \frac{|\text{supp}(\widehat{\boldsymbol{\beta}})\cap\text{supp}(\boldsymbol{\beta}_0)|}{|\text{supp}(\boldsymbol{\beta}_0)|},
		~	\operatorname{TN}(\widehat{\boldsymbol{\beta}}) = \frac{|\text{supp}^c(\widehat{\boldsymbol{\beta}})\cap\text{supp}^c(\boldsymbol{\beta}_0)|}{|\text{supp}^c(\boldsymbol{\beta}_0)|}.
	\end{align*}
	Additionally, in order to asses the estimation accuracy, we compute the mean square error for the true non-zero coefficients (MSES) 
	and zero coefficients (MSEN) of $\widehat{\boldsymbol{\beta}}$ separately, as well as the absolute Estimation Error (EE) of $\widehat{\sigma}$:  
	\begin{align*}
		\operatorname{MSES}(\widehat{\boldsymbol{\beta}}) = \frac{1}{s} \parallel \widehat{\boldsymbol{\beta}}_{\mathcal{S}} - \boldsymbol{\beta}_{0\mathcal{S}} \parallel^2, 
		~~~~	\operatorname{MSEN}(\widehat{\boldsymbol{\beta}}) = \frac{1}{p-s} \parallel \widehat{\boldsymbol{\beta}}_{\mathcal{N}} ||^2,
		~~~~	\operatorname{EE}(\widehat{\sigma}) = |\widehat{\sigma}-\sigma_0|.
	\end{align*}
	Finally, we also examine the prediction accuracy on an unused test sample of size $n=100$, 
	generated from the same model distributions as that of the trainning sample.
	For this purpose, we use the Absolute Prediction Bias (APrB) defined as
	\begin{align*}
		\operatorname{APrB}(\widehat{\boldsymbol{\beta}}) &= \parallel \boldsymbol{y}_{\text{test}}-\mathbb{X}_{\text{test}}\widehat{\boldsymbol{\beta}} \parallel_1.
	\end{align*}

	\subsection{Competing methods}
	
	We compare our adaptive robust methods with the robust least angle regression (RLARS; Khan et al. \cite{khan}),  sparse least trimmed squares (sLTS; Alfons  et al. \cite{alfons}), random sample consensus (RANSAC),  and the LAD-LASSO (Wang et al. \cite{wang}) estimators.  
	Additionally, our Ad-DPD-LASSO and AW-DPD-LASSO methods (with SCAD penalty) are also compared with the related DPD based methods, namely the DPD-LASSO and the DPD-ncv with SCAD penalty, for the same values of tuning parameters ($\gamma= 0.1, 0.3, 0.5, 0.7, 1$). 
	Moreover, for comparing our methods in terms of efficiency loss, we use three standard non-robust methods, 
	namely the ones based on the least-squares loss along with the LASSO, SCAD and MCP penalties, 
	which we will refer to as the LS-LASSO, LS-SCAD and LS-MCP, respectively. 
	The standard adaptive LASSO (Ad-LS-LASSO) is also considered, with the initial parameters being obtained via LS-LASSO.

	For the DPD-LASSO and DPD-ncv, the starting points are taken as the RLARS estimates because of computational efficiency.
	We use 5-fold cross-validation for the selection of the regularized parameter $\lambda$ in all the above competing methods 
	except LAD-LASSO, DPD-LASSO and DPD-ncv.  We use BIC criterion to choose the optimum $\lambda$ in LAD-LASSO, 
	whereas the HBIC is used for DPD-LASSO and DPD-ncv.
	
	\vspace{-0.2cm}
	\subsection{Simulation Results}

	For brevity, we present the simulation results (performance measures) for the most challenging case of Setting B with $p=1000$
	in Tables \ref{p1000e0signal3}-\ref{p1000hex1signal3}, for pure data and four types of contaminated data. 
	The results for the remaining cases are provided in Section C of the Appendix available as Supplementary Material.
	
	All these results evidence the significant  advantage entailed by DPD-based methods in terms of robustness, accentuated in the high-dimensional context where classical inferential methods are specially sensitive to outliers.	Moreover, all other robust methods considered, LAD-LASSO, RLARS, sLTS  and RANSAC, perform worse than the proposed DPD-based methods in variable selection and parameter estimation, under all scenarios of contamination as well as under pure data. Then, DPD-based estimation methods represent an appealing robust alternative to classical likelihood-based methods, with the lowest efficiency loss and greatest robustness gain.	It is interesting to note that all methods underperform on the second scenario (Setting B), where the true model size is greater. 
	Additionally, as expected, adaptive methods including DPD-based estimators improve the estimation accuracy with respect to their corresponding standard LASSO. Indeed, adaptive methods select more parsimonious models containing all true significant variables. 
	
	\begin{table}[!h]
		\centering
		\caption{Performance measures obtained by different methods for $p=1000$, Setting B and no outliers}
		\resizebox{\textwidth}{!}{\begin{tabular}{l|rrr|rrr|r}
			\hline
			Method	& MS($\widehat{\boldsymbol{\beta}}$) & TP($\widehat{\boldsymbol{\beta}}$) & TN($\widehat{\boldsymbol{\beta}}$) & MSES($\widehat{\boldsymbol{\beta}}$) &  MSEN($\widehat{\boldsymbol{\beta}}$)  &  EE($\widehat{\sigma}$) & APrB($\widehat{\boldsymbol{\beta}}$)  \\ 
			& & & &  $(10^{-2})$ & $(10^{-5})$ &  $(10^{-2})$ &  $(10^{-2})$ \\
			\hline
			LS-LASSO & 21.27 & 1.00 & 0.99 & 2.52 & 1.52 & 34.91 & 6.31 \\ 
			Ad-LS-LASSO & 9.00 & 1.00 & 1.00 & 1.49 & 0.00 & 31.30 & 5.16 \\ 
			LS-SCAD & 9.03 & 1.00 & 1.00 & 0.36 & 0.00 & 19.62 & 4.23 \\ 
			LS-MCP & 9.04 & 1.00 & 1.00 & 0.36 & 0.00 & 19.60 & 4.22 \\
			\hline 
			LAD-LASSO & 16.70 & 1.00 & 0.99 & 7.55 & 2.63 & 53.97 & 9.17 \\ 
			RLARS & 12.70 & 1.00 & 1.00 & 0.48 & 2.82 & 7.65 & 4.49 \\ 
			sLTS & 61.80 & 0.79 & 0.94 & 219.62 & 300.30 & 6.79 & 46.08 \\ 
			RANSAC & 15.10 & 0.52 & 0.99 & 238.20 & 394.16 & 143.68 & 49.35 \\ 
			\hline
			DPD-LASSO $\gamma = $ 0.1 & 10.37 & 1.00 & 1.00 & 4.59 & 114.54 & 58.75 & 11.10 \\ 
			DPD-LASSO $\gamma = $ 0.3 & 8.73 & 0.90 & 1.00 & 27.95 & 701.22 & 211.92 & 25.86 \\ 
			DPD-LASSO $\gamma = $ 0.5 & 13.13 & 0.99 & 1.00 & 3.13 & 95.99 & 35.29 & 10.61 \\ 
			DPD-LASSO $\gamma = $ 0.7 & 22.12 & 0.73 & 0.98 & 70.05 & 1201.08 & 25.81 & 29.53 \\ 
			DPD-LASSO $\gamma = $ 1 & 14.40 & 0.69 & 0.99 & 73.96 & 1500.41 & 163.41 & 37.38 \\ 
			\hline
			DPD-ncv $\gamma = $ 0.1 & 9.00 & 1.00 & 1.00 & 0.14 & 2.64 & 5.58 & 4.20  \\ 
			DPD-ncv  $\gamma = $  0.3 & 9.00 & 1.00 & 1.00 & 0.17 & 3.08 & 10.28 & 4.21 \\ 
			DPD-ncv  $\gamma = $  0.5 & 9.00 & 1.00 & 1.00 & 0.19 & 3.36 & 13.86 & 4.23 \\ 
			DPD-ncv  $\gamma = $ 0.7 & 9.00 & 1.00 & 1.00 & 0.23 & 3.82 & 16.63 & 4.25  \\ 
			DPD-ncv $\gamma = $ 1 & 9.01 & 1.00 & 1.00 & 0.26 & 4.12 & 19.82 & 4.31  \\ 
			\hline
			Ad-DPD-LASSO  $\gamma = $ 0.1 & 11.34 & 1.00 & 1.00 & 0.49 & 2.79 & 6.18 & 4.24  \\ 
			Ad-DPD-LASSO  $\gamma = $ 0.3 & 9.10 & 1.00 & 1.00 & 0.64 & 0.04 & 3.04 & 4.33 \\ 
			Ad-DPD-LASSO  $\gamma = $ 0.5 & 9.58 & 1.00 & 1.00 & 0.66 & 0.45 & 4.65 & 4.39  \\ 
			Ad-DPD-LASSO  $\gamma = $ 0.7 & 9.16 & 1.00 & 1.00 & 0.79 & 0.11 & 5.29 & 4.31  \\ 
			Ad-DPD-LASSO  $\gamma = $ 1 & 9.18 & 1.00 & 1.00 & 1.06 & 0.38 & 8.68 & 4.36  \\ 
			\hline
			AW-DPD-LASSO  $\gamma = $ 0.1 & 10.68 & 1.00 & 1.00 & 0.34 & 0.28 & 3.92 & 3.78  \\ 
			AW-DPD-LASSO  $\gamma = $ 0.3 & 9.62 & 1.00 & 1.00 & 0.36 & 0.00 & 3.93 & 3.92  \\ 
			AW-DPD-LASSO  $\gamma = $ 0.5 & 10.70 & 1.00 & 1.00 & 0.43 & 0.30 & 5.46 & 4.00  \\ 
			AW-DPD-LASSO  $\gamma = $ 0.7 & 9.82 & 1.00 & 1.00 & 0.50 & 0.20 & 6.68 & 4.04  \\ 
			AW-DPD-LASSO  $\gamma = $ 1 & 9.60 & 1.00 & 1.00 & 0.66 & 0.09 & 9.56 & 3.99  \\
			\hline
		\end{tabular}}
		\label{p1000e0signal3}
	\end{table}

	\begin{table}[h]
		\centering
		\caption{Performance measures obtained by different methods for $p=1000$, Setting B and $Y$-outliers}
		\resizebox{\textwidth}{!}{\begin{tabular}{l|rrr|rrr|r}
			\hline
			Method	& MS($\widehat{\boldsymbol{\beta}}$) & TP($\widehat{\boldsymbol{\beta}}$) & TN($\widehat{\boldsymbol{\beta}}$) & MSES($\widehat{\boldsymbol{\beta}}$) &  MSEN($\widehat{\boldsymbol{\beta}}$)  &  EE($\widehat{\sigma}$) & APrB($\widehat{\boldsymbol{\beta}}$)  \\ 
			& & & &  $(10^{-2})$ & $(10^{-5})$ &  $(10^{-2})$ &  $(10^{-2})$ \\
			\hline
			LS-LASSO & 7.85 & 0.46 & 1.00 & 332.57 & 56.63 & 923.77 & 53.67 \\
			Ad-LS-LASSO & 5.24 & 0.42 & 1.00 & 282.04 & 196.13 & 728.11 & 46.04 \\  
			LS-SCAD & 26.32 & 0.63 & 0.98 & 253.30 & 396.77 & 526.46 & 44.85 \\ 
			LS-MCP & 11.40 & 0.52 & 0.99 & 256.43 & 316.72 & 584.97 & 44.96 \\ 
			\hline
			LAD-LASSO & 25.93 & 0.85 & 0.98 & 144.66 & 226.31 & 315.34 & 36.80 \\ 
			RLARS & 20.25 & 0.79 & 0.99 & 90.67 & 320.57 & 144.45 & 26.38 \\ 
			sLTS & 51.28 & 0.93 & 0.96 & 83.66 & 131.04 & 7.52 & 22.84 \\ 
			RANSAC & 13.00 & 0.37 & 0.99 & 319.90 & 646.28 & 296.99 & 56.92 \\ 
			\hline
			DPD-LASSO $\gamma = $ 0.1 & 8.50 & 0.74 & 1.00 & 65.91 & 1475.56 & 348.96 & 38.15 \\ 
			DPD-LASSO $\gamma = $ 0.3 & 6.94 & 0.62 & 1.00 & 80.41 & 1681.17 & 381.90 & 39.62 \\ 
			DPD-LASSO $\gamma = $ 0.5 & 12.22 & 0.88 & 1.00 & 27.28 & 618.61 & 130.50 & 20.97 \\ 
			DPD-LASSO $\gamma = $ 0.7 & 23.58 & 0.61 & 0.98 & 92.40 & 1643.77 & 30.59 & 36.54 \\ 
			DPD-LASSO $\gamma = $ 1 & 14.03 & 0.53 & 0.99 & 105.81 & 2131.45 & 215.98 & 47.96 \\ 
			\hline
			DPD-ncv $\gamma = $ 0.1 & 9.10 & 0.91 & 1.00 & 23.00 & 425.28 & 127.54 & 16.34 \\ 
			DPD-ncv  $\gamma = $ 0.3 & 9.34 & 0.98 & 1.00 & 4.78 & 118.05 & 19.60 & 6.88 \\ 
			DPD-ncv $\gamma = $ 0.5 & 9.12 & 0.98 & 1.00 & 4.65 & 110.96 & 13.06 & 6.67 \\ 
			DPD-ncv $\gamma = $ 0.7 & 9.21 & 0.98 & 1.00 & 4.47 & 101.86 & 10.88 & 6.64 \\ 
			DPD-ncv $\gamma = $ 1 & 9.22 & 0.99 & 1.00 & 4.49 & 100.00 & 10.20 & 6.62 \\ 
			\hline
			Ad-DPD-LASSO  $\gamma = $ 0.1 & 18.71 & 0.81 & 0.99 & 81.49 & 835.39 & 130.08 & 22.43 \\ 
			Ad-DPD-LASSO $\gamma = $ 0.3 & 10.92 & 0.99 & 1.00 & 6.21 & 27.67 & 10.02 & 6.14 \\ 
			Ad-DPD-LASSO  $\gamma = $ 0.5 & 9.98 & 0.99 & 1.00 & 5.81 & 24.30 & 5.94 & 5.50 \\ 
			Ad-DPD-LASSO  $\gamma = $ 0.7 & 9.48 & 0.99 & 1.00 & 3.87 & 11.31 & 7.30 & 5.11 \\ 
			Ad-DPD-LASSO  $\gamma = $ 1 & 9.96 & 0.97 & 1.00 & 15.38 & 48.27 & 8.59 & 6.66 \\ 
			\hline
			AW-DPD-LASSO  $\gamma = $ 0.1 & 12.98 & 0.80 & 0.99 & 92.47 & 567.77 & 169.98 & 23.06 \\ 
			AW-DPD-LASSO  $\gamma = $ 0.3 & 10.84 & 0.99 & 1.00 & 6.36 & 31.77 & 9.71 & 5.91 \\ 
			AW-DPD-LASSO  $\gamma = $ 0.5 & 12.72 & 0.99 & 1.00 & 3.84 & 15.54 & 7.87 & 4.91 \\ 
			AW-DPD-LASSO  $\gamma = $ 0.7 & 10.80 & 0.99 & 1.00 & 3.69 & 13.75 & 8.37 & 5.14 \\ 
			AW-DPD-LASSO  $\gamma = $ 1 & 11.51 & 0.96 & 1.00 & 17.81 & 57.74 & 8.38 & 8.38 \\ 
			\hline
		\end{tabular}}
		\label{p1000e1signal3}
	\end{table}

	\begin{table}[h]
		\centering
		\caption{Performance measures obtained by different methods for $p=1000$, Setting B and $\boldsymbol{X}$-outliers}
		\resizebox{\textwidth}{!}{\begin{tabular}{l|rrr|rrr|r}
			\hline
			Method	& MS($\widehat{\boldsymbol{\beta}}$) & TP($\widehat{\boldsymbol{\beta}}$) & TN($\widehat{\boldsymbol{\beta}}$) & MSES($\widehat{\boldsymbol{\beta}}$) &  MSEN($\widehat{\boldsymbol{\beta}}$)  &  EE($\widehat{\sigma}$) & APrB($\widehat{\boldsymbol{\beta}}$)  \\ 
			& & & &  $(10^{-2})$ & $(10^{-5})$ &  $(10^{-2})$ &  $(10^{-2})$ \\
			\hline
			LS-LASSO & 7.85 & 0.46 & 1.00 & 332.57 & 56.63 & 923.77 & 53.67 \\ 
			Ad-LS-LASSO & 9.00 & 1.00 & 1.00 & 1.52 & 0.00 & 31.56 & 4.85 \\ 
			LS-SCAD & 26.32 & 0.63 & 0.98 & 253.30 & 396.77 & 526.46 & 44.85 \\ 
			LS-MCP & 11.40 & 0.52 & 0.99 & 256.43 & 316.72 & 584.97 & 44.96 \\ 
			\hline
			LAD-LASSO & 25.93 & 0.85 & 0.98 & 144.66 & 226.31 & 315.34 & 36.80 \\ 
			RLARS & 20.25 & 0.79 & 0.99 & 90.67 & 320.57 & 144.45 & 26.38 \\ 
			sLTS & 51.28 & 0.93 & 0.96 & 83.66 & 131.04 & 7.52 & 22.84 \\ 
			RANSAC & 12.23 & 0.36 & 0.99 & 335.92 & 696.48 & 322.69 & 51.78 \\ 
			\hline
			DPD-LASSO $\gamma = $ 0.1 & 10.33 & 1.00 & 1.00 & 4.58 & 114.30 & 58.74 & 10.80 \\ 
			DPD-LASSO $\gamma = $ 0.3 & 8.71 & 0.90 & 1.00 & 27.94 & 700.75 & 211.92 & 26.32 \\ 
			DPD-LASSO $\gamma = $ 0.5 & 13.08 & 0.99 & 1.00 & 3.14 & 96.10 & 35.47 & 9.83 \\ 
			DPD-LASSO $\gamma = $ 0.7 & 21.04 & 0.79 & 0.99 & 56.38 & 945.35 & 31.38 & 24.17 \\ 
			DPD-LASSO $\gamma = $ 1 & 14.30 & 0.69 & 0.99 & 73.45 & 1515.09 & 164.59 & 37.40 \\ 
			\hline
			DPD-ncv $\gamma = $ 0.1 & 9.00 & 1.00 & 1.00 & 0.14 & 2.64 & 5.58 & 3.88 \\ 
			DPD-ncv $\gamma = $ 0.3 & 9.00 & 1.00 & 1.00 & 0.17 & 3.09 & 10.29 & 3.89 \\ 
			DPD-ncv $\gamma = $ 0.5 & 9.00 & 1.00 & 1.00 & 0.19 & 3.38 & 13.85 & 3.94 \\ 
			DPD-ncv $\gamma = $ 0.7 & 9.00 & 1.00 & 1.00 & 0.23 & 3.81 & 16.61 & 3.99 \\ 
			DPD-ncv $\gamma = $ 1 & 9.01 & 1.00 & 1.00 & 0.28 & 4.14 & 19.80 & 4.10 \\ 
			\hline
			Ad-DPD-LASSO  $\gamma = $ 0.1 & 11.20 & 1.00 & 1.00 & 0.50 & 2.62 & 5.91 & 4.20 \\ 
			Ad-DPD-LASSO  $\gamma = $ 0.3 & 9.10 & 1.00 & 1.00 & 0.64 & 0.04 & 3.04 & 4.15 \\ 
			Ad-DPD-LASSO  $\gamma = $ 0.5 & 9.58 & 1.00 & 1.00 & 0.66 & 0.45 & 4.65 & 4.13 \\ 
			Ad-DPD-LASSO  $\gamma = $ 0.7 & 9.16 & 1.00 & 1.00 & 0.79 & 0.11 & 5.30 & 4.17 \\ 
			Ad-DPD-LASSO  $\gamma = $ 1 & 9.18 & 1.00 & 1.00 & 1.06 & 0.39 & 8.68 & 4.02 \\ 
			\hline
			AW-DPD-LASSO  $\gamma = $ 0.1 & 10.68 & 1.00 & 1.00 & 0.34 & 0.28 & 3.92 & 3.58 \\ 
			AW-DPD-LASSO  $\gamma = $ 0.3 & 9.42 & 1.00 & 1.00 & 0.36 & 0.00 & 3.88 & 3.76 \\ 
			AW-DPD-LASSO  $\gamma = $ 0.5 & 10.70 & 1.00 & 1.00 & 0.43 & 0.30 & 5.46 & 3.80 \\ 
			AW-DPD-LASSO  $\gamma = $ 0.7 & 9.80 & 1.00 & 1.00 & 0.50 & 0.20 & 6.67 & 3.94 \\ 
			AW-DPD-LASSO  $\gamma = $ 1 & 9.58 & 1.00 & 1.00 & 0.67 & 0.09 & 9.56 & 4.02 \\ 
			\hline
		\end{tabular}}
		\label{p1000ex1signal3}
	\end{table}

	Furthermore, there exist significant differences between the four DPD-based methods considered. 
	The DPD-ncv (with SCAD penalty) estimator performs generally better
	than Ad-DPD-LASSO and AW-DPD-LASSO estimators in terms of variable selection 
	and it registers lower mean square error for the true non-zero coefficient (MSES).
	However, DPD-ncv estimator presents  higher mean squared error for the zero coefficients (MSEN) as well as much greater errors for the error variance. 
	On the other hand, DPD-LASSO performs worse in both MSES and MSEN measures compared to any other DPD-based estimator, as it selects a larger number of non-significant variables.
	Note that DPD loss with the parameter value $\gamma = 0.1$ is the comparatively less robust, 
	and hence produces greater MSES and MSEN in the presence of data contamination (close to the LS based results). 
	For larger values of $\gamma\geq 0.3$,  DPD-based estimators provide extremely robust inference 	against contamination in both the response  variable and the covariates, often achieving the true model (TP of 100\%),
	 contrarily to non-robust methods. 
	From our empirical results, optimal $\gamma$  value hover around $\gamma \in[0.3, 0.5].$
	Additionally, it is also evident from simulations that the AW-DPD-LASSO estimator (with SCAD based weight) performs very competitively with the DPD-ncv method (with SCAD penalty) in terms of variables selection
	and even better than the DPD-ncv in terms of estimating $\sigma$ and prediction performance.  
	Therefore, it can serve as a fast yet excellent alternative  to  DPD-ncv  in ultra-high dimensional set-ups. 
	
	Further comparisons of the proposed Ad-DPD-LASSO and AW-DPD-LASSO (with SCAD penalty) estimators are provided in Section C of the Appendix available as Supplementary Material.
	For both of them, the  prediction errors are seen to decrease as the number of covariates or the values of tuning parameter $\gamma$ increase. The AW-DPD-LASSO performs slightly better than the Ad-DPD-LASSO
	at larger values of $\gamma$, whereas the opposite is observed at smaller values of $\gamma$.
	
	\begin{table}[ht]
		\centering
		\caption{Performance measures obtained by different methods for $p=1000$, Setting B and heavy-tailed $Y$-outliers}
		\label{p1000he1signal3}
		\resizebox{\textwidth}{!}{	{\color{black}
		\begin{tabular}{l|rrr|rrr|r}
			\hline
			Method	& MS($\widehat{\boldsymbol{\beta}}$) & TP($\widehat{\boldsymbol{\beta}}$) & TN($\widehat{\boldsymbol{\beta}}$) & MSES($\widehat{\boldsymbol{\beta}}$) &  MSEN($\widehat{\boldsymbol{\beta}}$)  &  EE($\widehat{\sigma}$) & APrB($\widehat{\boldsymbol{\beta}}$)  \\ 
			& & & &  $(10^{-2})$ & $(10^{-5})$ &  $(10^{-2})$ &  $(10^{-2})$ \\
			\hline
			LS-LASSO & 6.71 & 0.48 & 1.00 & 331.65 & 39.19 & 933.86 & 49.59 \\ 
			LS-SCAD & 26.50 & 0.62 & 0.98 & 259.29 & 423.82 & 531.59 & 36.96 \\ 
			LS-MCP & 11.23 & 0.51 & 0.99 & 265.74 & 339.76 & 595.32 & 36.20 \\ 
			\hline
			LAD-LASSO & 24.15 & 0.84 & 0.98 & 145.58 & 174.40 & 321.14 & 33.35 \\ 
			RLARS & 18.49 & 0.77 & 0.99 & 93.50 & 311.20 & 152.40 & 23.97 \\ 
			sLTS & 50.92 & 0.93 & 0.96 & 85.14 & 137.32 & 8.24 & 22.48 \\ 
			RANSAC & 9.10 & 0.37 & 0.99 & 321.35 & 371.72 & 521.68 & 47.27 \\ 
			\hline
			DPD-LASSO $\gamma = $ 0.1 & 11.44 & 0.81 & 1.00 & 103.20 & 5.27 & 213.00 & 17.11 \\ 
			DPD-LASSO $\gamma = $ 0.3 & 13.03 & 0.96 & 1.00 & 27.20 & 1.65 & 60.73 & 9.03 \\ 
			DPD-LASSO $\gamma = $ 0.5 & 10.79 & 0.90 & 1.00 & 69.31 & 4.74 & 135.92 & 17.10 \\ 
			DPD-LASSO $\gamma = $ 0.7 & 16.09 & 0.91 & 0.99 & 73.84 & 38.48 & 80.23 & 16.43 \\ 
			DPD-LASSO $\gamma = $ 1 & 13.21 & 0.68 & 0.99 & 207.95 & 95.24 & 219.09 & 35.05 \\
			\hline
			DPD-ncv $\gamma = $ 0.1 & 8.39 & 0.90 & 1.00 & 23.67 & 405.46 & 110.48 & 11.75 \\ 
			DPD-ncv $\gamma = $ 0.3 & 9.04 & 0.99 & 1.00 & 3.41 & 76.12 & 21.48 & 6.66 \\ 
			DPD-ncv $\gamma = $ 0.5 & 9.19 & 0.99 & 1.00 & 2.86 & 87.24 & 17.18 & 5.64 \\ 
			DPD-ncv $\gamma = $ 0.7 & 8.79 & 0.92 & 1.00 & 12.38 & 251.15 & 13.92 & 7.95 \\ 
			DPD-ncv $\gamma = $ 1 & 6.77 & 0.64 & 1.00 & 53.31 & 1131.87 & 14.37 & 21.74 \\ 
			\hline
			Ad-DPD-LASSO $\gamma = $ 0.1 & 7.86 & 0.85 & 1.00 & 58.77 & 61.17 & 129.14 & 11.96 \\ 
			Ad-DPD-LASSO $\gamma = $ 0.3 & 8.64 & 0.96 & 1.00 & 18.49 & 15.55 & 30.00 & 6.00 \\ 
			Ad-DPD-LASSO $\gamma = $ 0.5 & 8.64 & 0.96 & 1.00 & 15.42 & 2.39 & 26.15 & 6.18 \\ 
			Ad-DPD-LASSO $\gamma = $ 0.7 & 8.72 & 0.97 & 1.00 & 13.27 & 3.34 & 20.85 & 6.12 \\ 
			Ad-DPD-LASSO $\gamma = $ 1 & 8.64 & 0.96 & 1.00 & 18.30 & 3.34 & 24.72 & 6.63 \\ 
			\hline
			AW-DPD-LASSO $\gamma = $ 0.1 & 12.18 & 0.82 & 1.00 & 83.09 & 491.03 & 144.38 & 21.07 \\ 
			AW-DPD-LASSO $\gamma = $ 0.3 & 10.80 & 0.98 & 1.00 & 12.56 & 90.51 & 8.87 & 6.14 \\ 
			AW-DPD-LASSO $\gamma = $ 0.5 & 9.84 & 0.97 & 1.00 & 13.18 & 78.97 & 12.90 & 6.47 \\ 
			AW-DPD-LASSO $\gamma = $ 0.7 & 10.23 & 0.97 & 1.00 & 17.79 & 73.99 & 16.81 & 7.30 \\ 
			AW-DPD-LASSO $\gamma = $ 1 & 8.84 & 0.84 & 1.00 & 82.33 & 114.36 & 85.87 & 12.59 \\ 
			\hline
		\end{tabular}
	}	}
	\end{table}

	\begin{table}[ht]
		\centering
		\caption{Performance measures obtained by different methods for $p=1000$, Setting B and heavy-tailed $\boldsymbol{X}$-outliers}
		\label{p1000hex1signal3}
		\resizebox{\textwidth}{!}{ {\color{black}
		\begin{tabular}{l|rrr|rrr|r}
			\hline
			Method	& MS($\widehat{\boldsymbol{\beta}}$) & TP($\widehat{\boldsymbol{\beta}}$) & TN($\widehat{\boldsymbol{\beta}}$) & MSES($\widehat{\boldsymbol{\beta}}$) &  MSEN($\widehat{\boldsymbol{\beta}}$)  &  EE($\widehat{\sigma}$) & APrB($\widehat{\boldsymbol{\beta}}$)  \\ 
			& & & &  $(10^{-2})$ & $(10^{-5})$ &  $(10^{-2})$ &  $(10^{-2})$ \\
			\hline
			LS-LASSO & 21.23 & 1.00 & 0.99 & 2.58 & 1.47 & 34.90 & 5.24 \\ 
			LS-SCAD & 9.02 & 1.00 & 1.00 & 0.35 & 0.00 & 19.97 & 3.55 \\ 
			LS-MCP & 9.01 & 1.00 & 1.00 & 0.35 & 0.00 & 19.99 & 3.54 \\ 
			\hline
			LAD-LASSO & 16.59 & 1.00 & 0.99 & 7.61 & 3.03 & 55.06 & 7.78 \\ 
			RLARS & 13.79 & 1.00 & 1.00 & 0.76 & 3.92 & 8.60 & 4.01 \\ 
			sLTS & 61.45 & 0.79 & 0.95 & 217.87 & 304.40 & 6.04 & 42.31 \\ 
			RANSAC & 10.44 & 0.54 & 0.99 & 235.33 & 182.41 & 359.73 & 39.38 \\ 
			\hline
			DPD-LASSO $\gamma = $ 0.1 & 12.46 & 1.00 & 1.00 & 4.09 & 0.71 & 21.24 & 6.03 \\ 
			DPD-LASSO $\gamma = $ 0.3 & 12.46 & 1.00 & 1.00 & 5.54 & 1.01 & 25.40 & 6.91 \\ 
			DPD-LASSO $\gamma = $ 0.5 & 11.07 & 0.97 & 1.00 & 30.79 & 2.30 & 75.72 & 12.66 \\ 
			DPD-LASSO $\gamma = $ 0.7 & 14.11 & 0.96 & 0.99 & 44.89 & 12.47 & 75.02 & 14.84 \\ 
			DPD-LASSO $\gamma = $ 1 & 9.56 & 0.75 & 1.00 & 178.82 & 30.81 & 258.91 & 33.98 \\ 
			\hline
			DPD-ncv $\gamma = $ 0.1 & 9.00 & 1.00 & 1.00 & 0.13 & 2.60 & 5.17 & 3.56 \\ 
			DPD-ncv $\gamma = $ 0.3 & 9.00 & 1.00 & 1.00 & 0.16 & 3.03 & 10.65 & 3.61 \\ 
			DPD-ncv $\gamma = $ 0.5 & 9.00 & 1.00 & 1.00 & 0.19 & 3.71 & 15.54 & 3.60 \\ 
			DPD-ncv $\gamma = $ 0.7 & 9.01 & 1.00 & 1.00 & 0.24 & 4.46 & 19.84 & 3.72 \\ 
			DPD-ncv $\gamma = $ 1 & 8.98 & 0.99 & 1.00 & 0.94 & 25.21 & 24.32 & 4.19 \\ 
			\hline 
			Ad-DPD-LASSO $\gamma = $ 0.1 & 9.00 & 1.00 & 1.00 & 0.64 & 0.00 & 2.84 & 3.66 \\ 
			Ad-DPD-LASSO $\gamma = $ 0.3 & 9.00 & 1.00 & 1.00 & 0.82 & 0.00 & 3.05 & 3.72 \\ 
			Ad-DPD-LASSO $\gamma = $ 0.5 & 9.00 & 1.00 & 1.00 & 1.10 & 0.00 & 3.42 & 3.87 \\ 
			Ad-DPD-LASSO $\gamma = $ 0.7 & 9.00 & 1.00 & 1.00 & 0.88 & 0.00 & 5.19 & 3.74 \\ 
			\hline
			Ad-DPD-LASSO $\gamma = $ 1 & 9.00 & 1.00 & 1.00 & 1.56 & 0.00 & 6.79 & 4.25 \\ 
			AW-DPD-LASSO $\gamma = $ 0.1 & 9.58 & 1.00 & 1.00 & 0.39 & 1.77 & 4.99 & 3.53 \\ 
			AW-DPD-LASSO $\gamma = $ 0.3 & 9.47 & 1.00 & 1.00 & 0.40 & 1.34 & 5.38 & 3.59 \\ 
			AW-DPD-LASSO $\gamma = $ 0.5 & 9.15 & 1.00 & 1.00 & 0.45 & 0.60 & 5.47 & 3.66 \\ 
			AW-DPD-LASSO $\gamma = $ 0.7 & 9.43 & 1.00 & 1.00 & 0.56 & 1.39 & 8.76 & 3.85 \\ 
			AW-DPD-LASSO $\gamma = $ 1 & 9.43 & 0.98 & 1.00 & 11.60 & 8.13 & 17.07 & 5.49 \\ 
			\hline
		\end{tabular}
	} }
	\end{table}

	\subsection{Runtime Comparisons \label{sec:runtime}}
	
	To illustrate the computational advantages of the proposed robust  adaptive methods, we compare the computation times of the proposed and competitive methods in the simulation  Settings A and B presented in the paper  (refer to Section \ref{sec:simulationsettings}).
	In particular, we compare the mean running time of the non-concave penalty SCAD and its corresponding weighted adaptive penalty, both combined with the robust DPD loss. To fairly compare both methods, at each step of the iterative algorithm the auxiliary weighted least squares problem is solved using  the same function, namely \textit{ncvreg}, implemented in the R package \textit{robustHD}. Further, both methods are initialized with same initial estimates obtained from robust RLARS and the optimal tuning parameter $\lambda$ is chosen applying HBIC criterion over a grid.  The $\lambda$ grid is not pre-specified, but the maximum value of the grid is defined based on the data and the method;	maximum $\lambda$ is first computed using bivariate winsorization to estimate the smallest value of the LASSO tuning parameter for sLTS regression so that sets all coefficients to zero, and then this bound is optimized to the smallest value shrinking all coefficients toward zero for the corresponding method. The length of the grid is fixed to be $50$ for all methods. Table C16 in Appendix on the supplementary material reports computational times taken in all the process of fitness, including the optimal grid choice and the fitness of the methods with the optimal $\lambda.$ 
	
	Further, we compare computation times of the different methods for increasing $p$ under the  contamination setting A  as specified in Section \ref{sec:simulationsettings}.	We compare four competing methods based on the DPD loss, namely DPD-LASSO, Ad-DPD-LASSO, AW-DPD-LASSO and DPD-ncv. 	We fix the value of penalty parameter $\lambda$ (based on our simulation experiences) so that  all methods perform 	suitably for model selection and parameter estimation. For such purpose, we have taken  $\lambda = 0.025$ for LASSO-based methods, DPD-LASSO and Ad-DPD-LASSO, and $\lambda = 0.078$ for the AW-DPD-LASSO and DPD-ncv methods with the nonconcave penalty SCAD.  All methods are initialized using RLARS, and the sample size is fixed to $n=100.$

	Figure \ref{figure:timecurves} shows computation time curves (in seconds) of the four proposed methods under three scenarios of contamination, namely pure data, $10\%$ of $Y$-outliers and $10\%$ of $\boldsymbol{X}$-outliers with $R=100$ replications.
	For a better comparison between competing methods, computational time curves for LASSO-based methods are plotted on the left, 
	and computational time curves for AW-DPD-LASSO and DPD-ncv (based on SCAD) are plotted on the right part of Figure \ref{figure:timecurves}.	These plots show smaller computational  times for the proposed robust  adaptive methods, in comparison to their non-adaptive competitors, for any given value of the tuning parameter $\gamma$.  	Furthermore, such computational advantage becomes more significant with large values of $p.$
	In particular Ad-DPD-LASSO is generally faster than DPD-LASSO for all values of $\gamma,$ but in presence of $Y$-outliers its average computation time increases slightly 	approaching to the runtimes took by classical DPD-LASSO.	On the other hand, the AW-DPD-LASSO method, which weights are chosen to linearly approximate the nonconcave SCAD penalty, is faster than the DPD-ncv in all scenarios. 
	Therefore, in the cases with large values of the ratio $p/n$, of the number of explanatory variables to the sample size (ultra-high dimensional settings), 	the proposed AW-DPD-LASSO method is an appealing alternative to the DPD-ncv estimator for a faster computation without efficiency and robustness loss.
	\begin{figure}
		\begin{subfigure}[]{\textwidth}
			\centering
			\includegraphics[height=6.5cm, width=7cm]{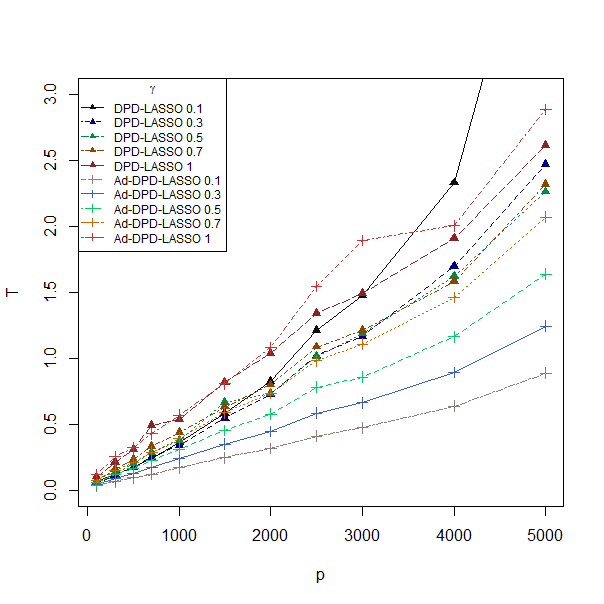} 
			\includegraphics[height=6.5cm, width=7cm]{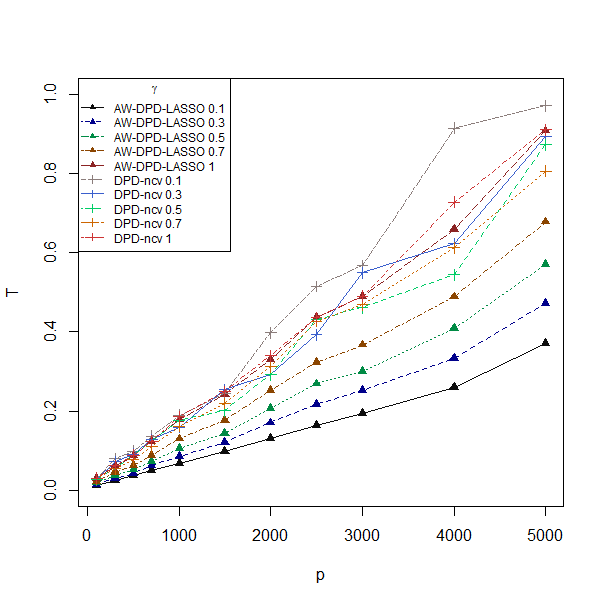}
			\caption{Pure data}
		\end{subfigure}
		\begin{subfigure}[]{\textwidth}
			\centering
			\includegraphics[height=6.5cm, width=7cm]{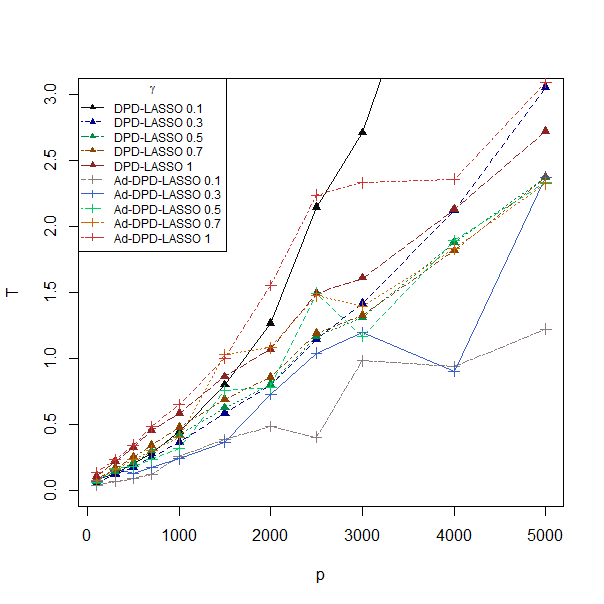} 
			\includegraphics[height=6.5cm, width=7cm]{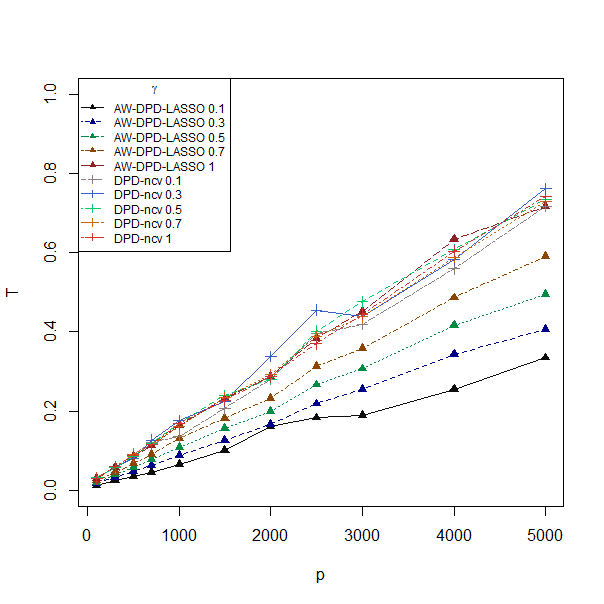}
			\caption{$Y$-outliers}
		\end{subfigure}
		\begin{subfigure}[]{\textwidth}
			\centering
			\includegraphics[height=6.5cm, width=7cm]{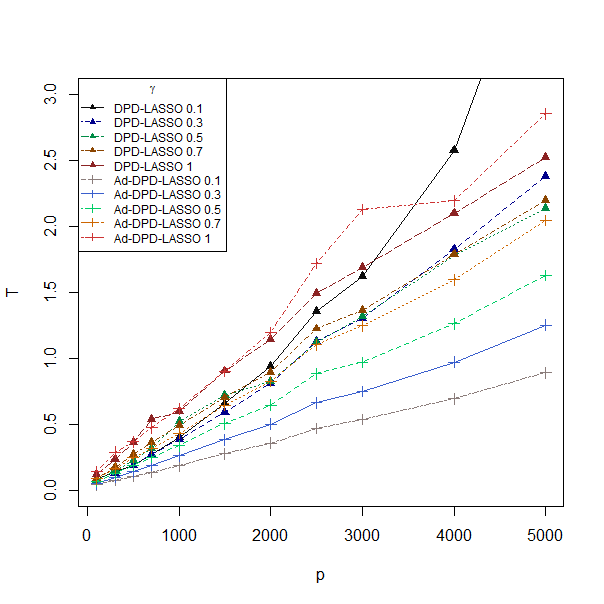} 
			\includegraphics[height=6.5cm, width=7cm]{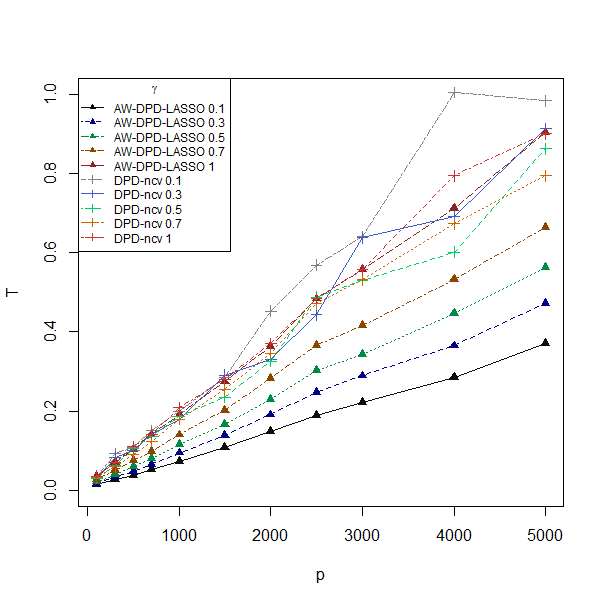}
			\caption{$\boldsymbol{X}$-outliers}
		\end{subfigure}
		\caption{Computational time curves (in seconds) for different penalized procedures under different simulation settings.}
		\label{figure:timecurves}
	\end{figure}

	\subsection{On the Choice of robustness tuning parameter $\gamma$ \label{sec:choice}}
	
		We would like to point out that the choice of robustness tuning parameters ($\gamma$ in the present case with DPD) 
		is an important practical issue in the classical (low-dimensional) robustness literature, 
		since larger values of the tuning parameter produce more robust but less efficient estimators. 
		A moderately large value $\gamma$ around 0.3 to 0.5  generally offers a decent trade-off between efficiency and robustness  
		in most practical use of the DPD-based robust inferential procedures.
		However, an optimal value of  $\gamma$ would depend on the contamination level in data, which is quite difficult to measure in real applications. 
		Several criteria have been proposed in the literature for choosing optimal values of the DPD tuning parameter for classical (low-dimensional) statistical models. 
		Most popularly,  Warwick and Jones \cite{warwick2005} introduced a useful data-based procedure for IID data based on 
		the minimization of an estimate of the  asymptotic MSE of the minimum DPD estimators based on the given sample data, 
		which is later extended and extensively studied for non-homogeneous models, including general class of parametric regressions, 
		by Ghosh and Basu \cite{ghosh1, ghosh2016}. 
		This method depends on the choice of a pilot estimator which can have a significant impact on the optimal tuning parameter choice, as the pilot invariably draws the final estimator towards itself. 
		Basak et al. \cite{Basak2021} improved the method by alleviating the dependency on the initial estimator, 
		by proposing an iterative algorithm which updates the pilot estimator at each step with the optimal estimate obtained till then and the process is repeated until there is no further change in the optimal estimate.

		For the present cases with the ultra-high dimensional setting, the computation of optimal $\gamma$ at each step would inevitably involve a high-time cost. 	This cost would not benefit us much as we have seen through our extensive simulation exercises in this paper. 	The model section and estimation performances of the proposed Ad-DPD-LASSO and AW-DPD-LASSO does not depend too crucially 
		on the choice of $\gamma$ within the range [0.3, 0.7] (unlike the traditional robustness literature on parameter estimation), 
		and hence, it does not justify the time-cost for selection of $\gamma$ during model selection through these procedures; 
		any moderate value of $\gamma$ around 0.5 should provide reasonably good model selection performances in any practical applications.
		However, if one wants to bear the high computational cost of choosing a data-driven value of $\gamma$ for a high-dimensional dataset,
		the procedure of Warwick and Jones \cite{warwick2005} or Basak et al. \cite{Basak2021}  may be extended suitably for the high-dimensional settings
		or a similar procedure as in \cite{Fan2017} may be developed for the present case of DPD based loss function. 
		Considering the length of the current manuscript, we have left the detailed exploration of  
		such data-driven tuning parameter selection procedures for our future research.

	\section{Real data analysis \label{SEC:realdata}}
	
	Robust high-dimensional regression is highly important  in the field of chemometrics, 
	where hundreds or even thousands of spectra need to be analyzed. 
	We apply our proposed methods to a real dataset regarding electron-probe X-ray microanalysis (EPXMA)	of archaeological glass vessels from the 16th and 17th centuries, where each of $n = 180$ glass vessels is represented by a spectrum on 1920 frequencies. 
	For each vessel the contents of thirteen chemical compounds are registered. 
	These data were first introduced in  Janssens et al. \cite{Janssens}, where the archaeological glass vessels were investigated through chemical analysis. 
	However, it was realized that some spectra in the dataset had been measured with a different detector efficiency, 
	which in the statistical sense, may lead to bad leverage points (outliers in the covariates space). 
	Besides leverage points, there are also four different material compositions of the glass vessels, 
	further increasing the inhomogeneity of the spectral data.
	These data have been used to identify multivariate outliers by Filzmoser et al. \cite{Filzmoser}
	and subsequently in Serneels et al. \cite{serneels}, Maronna \cite{Ma} and Smucler and Yohai  \cite{smu}
	to illustrate high-dimensional robust regression methods using the content of PbO (lead monoxide)  as response variable. 
	However, because the data on PbO (and any of its Box-Cox transformations) are not normally distributed, we have instead opted to use the seventh variable in the data, which is the content of chlorine (Cl),  as response variable. For this response variable (Cl content), the Shapiro-Wilk normality test yielded a p-value of 0.6552, indicating that the data can be assumed to follow a normal distribution and hence our proposed algorithm (in Section \ref{SEC:computation}) for the computation of the AW-DPD-LASSO estimator can be used here. 
	Accordingly, a linear model with normal error distribution  is fitted with this response data (Cl content) and considering  the frequencies as covariates. 
	
	Since the frequencies below 15 and above 500 have mostly the values of zero, we keep frequencies from 15 to 500 in our modelling, 
	so that we have $p = 486$ covariates.  We estimate the coefficients of the regression model fitted using the adaptive lasso penalties 
	Ad-LS-LASSO, LS-SCAD,	and the four DPD based methods penalized with adaptive lasso penalties, namely the Ad-DPD-LASSO, AW-DPD-LASSO. For comparison purposes with the AW-DPD-LASSO we also all DPD-ncv.
	To study the performance of the different methods, Maronna  \cite{Ma} used 10$\%$ trimmed root mean square error, RMSE(0.9), 
	which is a more robust criterion than the usual RMSE. Using this measure prevents the outliers from inflating the RMSE.
	To compare the precedent estimating methods, we report the model size (MS), RMSE(0.9), and the minimum and maximum error (MAX and MIN)
	in Table C17 of the Appendix. The robust DPD-based methods present greater maximum error (MAX) than non-robust ones, as well as higher MSE and MSPE(0.9), as outliers lead to larger residuals in a robust fit. 	Indeed, it can be noticed that  the difference between the RMSEs(0.9), are more pronounced than for MSEs, showing again the great gain in robustness.
	
	\begin{figure}[!h]
		\centering
		\includegraphics[height=7cm, width=10.7cm]{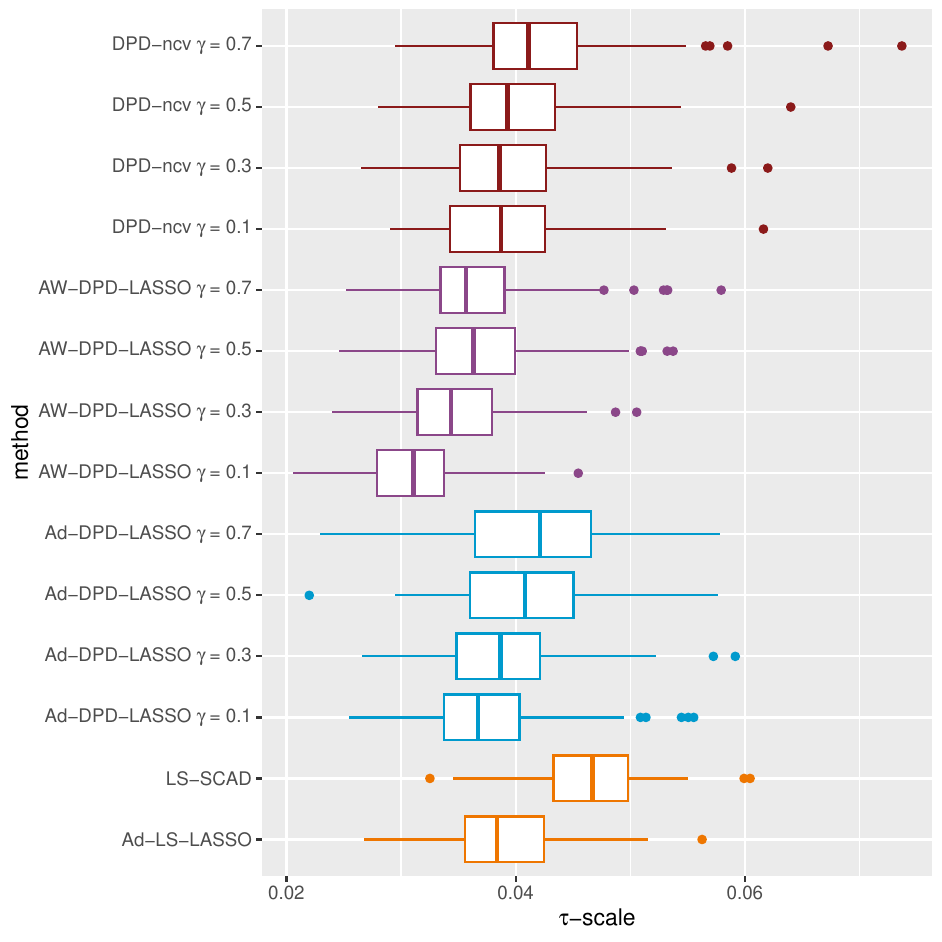}
		\caption{Box-plots of the $\tau-$scale of the prediction residuals, for the EPXMA data}
		\label{boxplot}
	\end{figure}

	Further, to asses the prediction performance of different methods on these data, 
	Smucler and Yohai \cite{smu} proposed using the $\tau$-scale of the residuals, 	calculated as in Maronna and Zamar  \cite{MaronnaZamar}. 
	To define this $\tau$-scale for a univariate sample $\boldsymbol{x} = (x_1,..x_n)$, we consider the function 
	$$
	W_c(x) = \left(1-\left(\frac{x}{c}\right)^2\right)^2 \operatorname{I}(|x|\leq c), \hspace{0.3cm} \text{and put } 
	\hspace{0.3cm} 
	w_i = W_{c_1}\left(\frac{x_i-\operatorname{med}(\boldsymbol{x})}{\sigma_0}\right), ~~ i=1, \ldots, n,
	$$
	where  $\sigma_0$ is  the median absolute deviation of the sample $\boldsymbol{x}$.
	Then, the $\tau-$scale statistic for $\boldsymbol{x}$ is defined as 
	$\sigma(\boldsymbol{x}) = \frac{\sigma_0^2}{n} \sum_i^n\rho_{c_2}\left(\frac{x_i-\mu(x)}{\sigma_0}\right)$, 
	where $\rho_c(x) = \operatorname{min}(x^2,c^2)$ and $\mu(\boldsymbol{x}) = \frac{\sum x_iw_i}{\sum w_i}$ is a weighted mean.
	To combine robustness and efficiency, Maronna and Zamar  \cite{MaronnaZamar} suggested to take $c_1 = 4.5$ and $c_2=3$, 
	which yield approximately $80\%$ efficient univariate location and scale estimation for normal data. 
	In our case, we randomly split the data into a train set ($n=120$) used to fit the model and 
	a test set ($n=60$) used to calculate prediction residuals and their $\tau$-scale statistics.
	We apply the same schema for all the methods adaptive considered, namely the 
	Ad-LS-LASSO, LS-SCAD, 	Ad-DPD-LASSO, AW-DPD-LASSO and DPD-ncv, and repeat the process $R = 100$ times.
	The box-plots of the $\tau$-scale statistics of the prediction residuals, obtained by different methods, across the 100 replications 
	are presented in Figure \ref{boxplot}; the median value of the $\tau$-scales are reported in Table C18 of the Appendix available as Supplementary Material. It is again evident that DPD-based methods produce lower error than the least-squares-based method in the test set for both adaptive penalties,-- adaptive lasso and weighted adaptive lasso.
	Note that the $\tau-$scale measure is generally greater for larger values of  $\gamma$; moderate and low values of $\gamma$ offer a great trade-off between efficiency and robustness.
	Moreover, the performance of AW-DPD-LASSO (with SCAD penalty) is sufficiently close (or even better) to that of the corresponding DPD-ncv estimator,	which again justifies the usefulness of our proposed general adaptive DPD-LASSO estimators as a fast alternative to the  DPD-ncv approach.
	Additionally, it is interesting to note that the weighted adaptive lasso penalty performs better than the standard adaptive lasso with our robust DPD loss, but does not do the same with the least-squares loss.

	\section{Conclusions}
	 
	In this paper we have presented a new robust adaptive LASSO method based on the DPD-loss function for the LRM under the  ultra-high dimensional set-up. 	This election of the loss grants high robustness against outliers in the data, while the use of adaptive LASSO penalty ensures the oracle property,	and hence it performs as well as if the true underlying model was given in advance. 
	Further, the computation of the proposed AW-DPD-LASSO estimator can be solved by using the same efficient algorithms as the DPD-LASSO method, 	just properly transforming the data. Through an extensive simulation study, 	it has been shown that the use of AW-DPD-LASSO method improves the accuracy of the parameters estimation 
	(both regression coefficients and error variance) compared to several other existing robust and non-robust methods.	
	This advantage is accentuated for the estimation of the error standard deviation.
	
	We should note that, although the theoretical results are derived for any general error distribution from the location-scale family,
	the computational algorithm for our proposed estimators is developed only for the common case of normal error distribution. 
	It would be an important future research to appropriately modify the proposed algorithm for the computation of our AW-DPD-LASSO  estimator 	with any other specified (non-normal) error density.	
	Additionally, the simplicity and usefulness of the adaptive DPD framework encourages for its extension to other parametric regression models.
	In particular, in future works,  our interest will be to consider the general adaptively weighted  DPD-LASSO approach 
	for the binary and multiple logistic regression models, as well as for Poisson regression model, under the ultra-high dimensional set-up.
	Their extension to the ultra-high dimensional generalized linear models will also be an important future work.
	
	\section*{Acknowledgements}
	Authors are grateful to the anonymous referees and the  Editor for their insightful comments and suggestions that have improved the manuscript significantly.
	This research is supported by the Spanish Grants PGC2018-095 194-B-100 and  FPU 19/01824. Research of AG is also partially supported by an INSPIRE Faculty Research Grant and a grant (No. SRG/2020/000072) from SERB, both under the  Department of Science and Technology, Government of India, India. \\
	M.Jaenada and L.Pardo are members of the Interdisciplinary Mathematics Institute. The authors have no conflict of interest, financial or otherwise.

\bibliographystyle{plain}
\bibliography{bibliography.bib}

\appendix
\section{Proof of the theorems}
\label{APP:proof}

\subsection{Proof of Theorem 4.1}
\label{APP:proof.1}

Let us consider the set-up and notation of Section 4 and define, for given $M>0$, 
the set 
$$\mathcal{B}_0(M) =\left\{ \boldsymbol{\beta}\in\mathbb{R}^p : ||\boldsymbol{\beta} - \boldsymbol{\beta}_0 ||_2\leq M, 
Supp(\boldsymbol{\beta}) \subseteq Supp(\boldsymbol{\beta}_0) = S_0 \right\},$$ 
and the function
\begin{eqnarray}
	Z_n(M) = \sup\limits_{\boldsymbol{\beta}\in \mathcal{B}_0(M)} \bigg| 
	\left(L_{n,\gamma }(\boldsymbol{\beta }) - L_{n,\gamma }(\boldsymbol{\beta}_0)\right)
	-E\left(L_{n,\gamma }(\boldsymbol{\beta }) - L_{n,\gamma }(\boldsymbol{\beta}_0)\right)\bigg|.
	\label{EQ:Z_nM}
\end{eqnarray}
Put $L=(1+\gamma)L_{\gamma}/\gamma$. Then, we start by proving the following lemma
which provides us the convergence rate for $Z_n(M)$ to be used in the proof of the main theorem.

\begin{lemma}\label{LEM:L1}
	Under Assumptions (A1)--(A2), we have the following result for any $t>0$.
	\begin{eqnarray}
		P\left(Z_n(M) \geq 4ML\sqrt{s/n} + t \right) \leq e^{-\frac{nc_0t^2}{8M^2L^2}}.  
	\end{eqnarray}
\end{lemma}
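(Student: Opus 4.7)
The plan is to decompose the bound into two standard steps for the supremum of an empirical process: (i) an expectation bound $E[Z_n(M)]\le 4ML\sqrt{s/n}$, and (ii) a sub-Gaussian concentration bound $P(Z_n(M)-E[Z_n(M)]\ge t)\le \exp(-nc_0 t^2/(8M^2L^2))$. Since $Z_n(M)\ge 4ML\sqrt{s/n}+t$ forces $Z_n(M)-E[Z_n(M)]\ge t$, combining the two yields the lemma. For the setup, I would write $Z_n(M)=\sup_{\boldsymbol{\beta}\in\mathcal{B}_0(M)}|n^{-1}\sum_i(h_i(\boldsymbol{\beta})-E h_i(\boldsymbol{\beta}))|$ with $h_i(\boldsymbol{\beta})=\rho_\gamma(\boldsymbol{x}_i^T\boldsymbol{\beta},y_i)-\rho_\gamma(\boldsymbol{x}_i^T\boldsymbol{\beta}_0,y_i)$. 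Assumption (A1) together with the definition of $L=(1+\gamma)L_\gamma/\gamma$ yields the scalar Lipschitz bound $|h_i(\boldsymbol{\beta})|\le L|\boldsymbol{x}_i^T(\boldsymbol{\beta}-\boldsymbol{\beta}_0)|$; restricted to $\mathcal{B}_0(M)$ the vector $\boldsymbol{\beta}-\boldsymbol{\beta}_0$ has support in $S_0$ with $\ell_2$-norm at most $M$, so the envelope $|h_i(\boldsymbol{\beta})|\le LM\|\boldsymbol{x}_{1,i}\|_2$ holds uniformly.

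For the expectation bound, I would first symmetrize with Rademacher variables $\epsilon_1,\ldots,\epsilon_n$ to get $E[Z_n(M)]\le 2E\sup_{\boldsymbol{\beta}}|n^{-1}\sum_i \epsilon_i h_i(\boldsymbol{\beta})|$, and then apply the Ledoux-Talagrand contraction principle: conditionally on the data, each $h_i$ is the composition of a scalar $L$-Lipschitz map with the linear functional $\boldsymbol{\beta}\mapsto \boldsymbol{x}_i^T\boldsymbol{\beta}$, and after a harmless translation that forces the reference value to be zero, contraction removes the composition at the cost of a factor $2L$, leaving $E\sup_{\boldsymbol{\beta}}|n^{-1}\sum_i \epsilon_i \boldsymbol{x}_i^T(\boldsymbol{\beta}-\boldsymbol{\beta}_0)|$. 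Restricting to the support $S_0$ and using Cauchy-Schwarz gives an upper bound of $M\,E\|n^{-1}\sum_i \epsilon_i \boldsymbol{x}_{1,i}\|_2$, and Jensen's inequality with the column normalization $\|\boldsymbol{X}_{\cdot j}\|_2^2=n$ produces $\le M\sqrt{s/n}$. Multiplying out the constants gives the claimed $4ML\sqrt{s/n}$.

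For the concentration piece, I would view $Z_n(M)$ as a function of the independent responses $(y_1,\ldots,y_n)$ and apply McDiarmid's bounded differences inequality. Replacing $y_i$ by $y_i'$ only perturbs the $i$-th summand, and the envelope above (applied separately at $y_i$ and at $y_i'$) shows that $Z_n(M)$ changes by at most $c_i = 2ML\|\boldsymbol{x}_{1,i}\|_2/n$, uniformly in $\boldsymbol{\beta}\in\mathcal{B}_0(M)$. Then $\sum_i c_i^2 = 4M^2L^2\,\mathrm{tr}(\boldsymbol{X}_1^T\boldsymbol{X}_1)/n^2$, and Assumption (A2), specifically the upper eigenvalue bound $c_0^{-1}$ on $n^{-1}\boldsymbol{X}_1^T\boldsymbol{X}_1$, is invoked to control this trace; feeding the result into McDiarmid's inequality yields the required Gaussian tail for $Z_n(M)-E[Z_n(M)]$, and combining with the expectation bound completes the argument.

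The main obstacle is the contraction step: the Ledoux-Talagrand inequality is cleanest when the Lipschitz maps vanish at the reference point, so a small recentering is needed, and because the Lipschitz kernel $\rho_\gamma(\cdot,y_i)$ depends on the random $y_i$, contraction has to be applied conditionally on $(y_1,\ldots,y_n)$ before taking expectation (which is precisely why McDiarmid is applied on the unconditional $Z_n(M)$ to close the loop). A secondary subtlety is matching the precise numerical constant $8M^2L^2$ in the exponent with the bounded-differences calculation: this amounts to tracking whether the trace bound on $\boldsymbol{X}_1^T\boldsymbol{X}_1$ is used via the column normalization alone or via the eigenvalue upper bound in (A2), and adjusting the factor $c_0$ in the exponent accordingly.
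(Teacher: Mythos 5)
Your decomposition (expectation bound plus concentration of $Z_n(M)$ around its mean) and your treatment of the expectation term match the paper's strategy: the paper also symmetrizes with a Rademacher sequence and then uses the Lipschitz property of $\rho_\gamma(\cdot,y_i)$ together with Cauchy--Schwarz, Jensen, and the column normalization to get $E[Z_n(M)]\le 2ML\sqrt{s/n}$ (your contraction-based route gives $4ML\sqrt{s/n}$, which is also sufficient for the stated inequality). The problem is in your concentration step.

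McDiarmid's bounded-differences inequality cannot deliver the exponent $\exp\bigl(-nc_0t^2/(8M^2L^2)\bigr)$. Your increments are $c_i = 2ML\lVert\boldsymbol{x}_{1,i}\rVert_2/n$, so
$\sum_i c_i^2 = 4M^2L^2\,\mathrm{tr}(\boldsymbol{X}_1^T\boldsymbol{X}_1)/n^2 \asymp 4M^2L^2 s/n$
(the trace is of order $sn$ whether you use the column normalization or the eigenvalue bound in (A2)), and McDiarmid then gives a tail of the form $\exp\bigl(-nt^2/(2sM^2L^2)\bigr)$. This is weaker than the claim by a factor of $s$ in the exponent, and since $s=s_n$ is allowed to diverge this is not a constant-tracking issue: in the proof of Theorem \ref{THM:FixW_Oracle} the lemma is applied with $t\asymp ML\sqrt{s(\log n)/n}$ to produce the probability bound $1-n^{-cs}$, and your version would only yield $1-n^{-c}$, which breaks the downstream argument. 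The source of the loss is that bounded differences forces you to sum $\sup_{\boldsymbol{\beta}}|h_i(\boldsymbol{\beta})|^2$ over $i$, whereas what is actually small is $\sup_{\boldsymbol{\beta}}\frac{1}{n}\sum_i h_i(\boldsymbol{\beta})^2$: for each fixed $\boldsymbol{\beta}\in\mathcal{B}_0(M)$, Assumption (A2) gives $\frac{1}{n}\sum_i\bigl(\boldsymbol{x}_i^T(\boldsymbol{\beta}-\boldsymbol{\beta}_0)\bigr)^2\le c_0^{-1}M^2$, with no factor of $s$. The paper exploits exactly this by normalizing $Z_{i,\gamma}=\frac{\sqrt{c_0}}{LM}\bigl[\rho_\gamma(\boldsymbol{x}_i^T\boldsymbol{\beta},y_i)-\rho_\gamma(\boldsymbol{x}_i^T\boldsymbol{\beta}_0,y_i)\bigr]$ so that $\frac{1}{n}\sum_i Z_{i,\gamma}^2\le 1$ uniformly over $\mathcal{B}_0(M)$, and then applying Massart's (Talagrand-type) concentration theorem for suprema of empirical processes, which requires only this uniform second-moment bound and produces the dimension-free exponent. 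Replacing McDiarmid by Massart's inequality (or Bousquet's version of Talagrand's inequality) is the missing ingredient; the rest of your argument then goes through.
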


\begin{proof}
	Let $W_1, \ldots, W_n$ be a Rademacher sequence, independent of our model components.
	Then, an application of the symmetrization theorem  (Theorem 14.3, B\"{u}hlmann  and van de Geer  \cite{buhlmann}) yield
	\begin{eqnarray}
		E[Z_n(M)] &\leq& 2 E \left[\sup\limits_{\boldsymbol{\beta}\in \mathcal{B}_0(M)} \left| \frac{1}{n} \sum_{i=1}^n W_i
		\left(\rho_{\gamma }(\boldsymbol{x}_i^T\boldsymbol{\beta}, y_i) - \rho_{\gamma }(\boldsymbol{x}_i^T\boldsymbol{\beta}_0, y_i)\right)\right|\right]
		\nonumber\\
		&\leq& \frac{2}{n} E \left[\sup\limits_{\boldsymbol{\beta}\in \mathcal{B}_0(M)} \sum_{i=1}^n |W_i|
		\left|\rho_{\gamma }(\boldsymbol{x}_i^T\boldsymbol{\beta}, y_i) - \rho_{\gamma }(\boldsymbol{x}_i^T\boldsymbol{\beta}_0, y_i)\right|\right].
	\end{eqnarray}
	But, by Assumption (A1), the function $\rho_{\gamma}(\cdot, y_i)$ is Lipschitz 
	for each $i=1, \ldots, n$, with the common Lipschitz constant $L$ independent of $i$. 
	Hence, we have
	\begin{eqnarray}
		\left|\rho_{\gamma }(\boldsymbol{x}_i^T\boldsymbol{\beta}, y_i) - \rho_{\gamma }(\boldsymbol{x}_i^T\boldsymbol{\beta}_0, y_i)\right| \leq L \left|\boldsymbol{x}_i^T\boldsymbol{\beta} - \boldsymbol{x}_i^T\boldsymbol{\beta}_0\right|,
		\label{EQ:L1}
	\end{eqnarray}
	and we further get
	\begin{eqnarray}
		E[Z_n(M)] 
		&\leq&\frac{2L}{n} E \left[\sup\limits_{\boldsymbol{\beta}\in \mathcal{B}_0(M)} \sum_{i=1}^n 
		\left|\boldsymbol{x}_i^T\boldsymbol{\beta} - \boldsymbol{x}_i^T\boldsymbol{\beta}_0\right|\right]
		\nonumber\\
		&=& \frac{2L}{n} E \left[\sup\limits_{\boldsymbol{\beta}\in \mathcal{B}_0(M)} \sum_{j=1}^s  \sum_{i=1}^n |x_{ij}|
		\left|\beta_j - \beta_{j0}\right|\right]
		\nonumber\\
		&\leq& \frac{2L}{n}  \sup\limits_{\boldsymbol{\beta}\in \mathcal{B}_0(M)} 
		\left||\boldsymbol{\beta} - \boldsymbol{\beta}_0|\right|
		E\left[  \sum_{j=1}^s \left|\sum_{i=1}^n |x_{ij}|\right|^2\right]^{1/2}
		\nonumber\\
		&& ~~~~~~~~~~~~~~~~~~~~~~~~~~~~~~~~~~~~~~~~~~\mbox{[By Cauchy-Schwarz inequality]}
		\nonumber\\
		&\leq& \frac{2ML}{n}
		\left[  \sum_{j=1}^s E\left|\sum_{i=1}^n |x_{ij}|\right|^2\right]^{1/2}
		~~~~~\mbox{[By the Jensen's inequality]}
		\nonumber\\
		&=& \frac{2ML}{n} \sqrt{sn} = 2ML\sqrt{s/n}.
		\label{EQ:L1_1}
	\end{eqnarray}
	Now, let us define $Z_{i,\gamma}=\frac{\sqrt{c_0}}{LM}\left[\rho_{\gamma }(\boldsymbol{x}_i^T\boldsymbol{\beta}, y_i) - \rho_{\gamma }(\boldsymbol{x}_i^T\boldsymbol{\beta}_0, y_i)\right]$. 
	Then, using (\ref{EQ:L1}) and Assumption (A2), we have, 
	for any $\boldsymbol{\beta}=(\boldsymbol{\beta}_1^T, \boldsymbol{0}^T)^T \in \mathcal{B}_0(M)$,
	\begin{eqnarray}
		\frac{1}{n}\sum_{i=1}^n Z_{i,\gamma}^2
		&\leq& \frac{1}{n}\frac{c_0}{L^2M^2} L^2\sum_{i=1}^n 
		\left[\boldsymbol{x}_i^T(\boldsymbol{\beta} - \boldsymbol{\beta}_0)\right]^2
		\nonumber\\
		&=& \frac{c_0}{M^2}\frac{1}{n}(\boldsymbol{\beta}_1 - \boldsymbol{\beta}_{10})^T[\boldsymbol{X}_1^T\boldsymbol{X}_1](\boldsymbol{\beta}_1 - \boldsymbol{\beta}_{10})
		\nonumber\\ 
		&\leq&  \frac{c_0}{M^2} \frac{1}{c_0} ||\boldsymbol{\beta}_1 - \boldsymbol{\beta}_{10}||_2^2 
		\leq 1.
	\end{eqnarray}
	But, $Z_n(M) = \frac{LM}{\sqrt{c_0}}\sup\limits_{\boldsymbol{\beta}\in \mathcal{B}_0(M)} \bigg| 
	\frac{1}{n}\sum_{i=1}^n\left(Z_{i,\gamma} - E(Z_{i,\gamma})\right)\bigg|$. 	
	Therefore, using the above result, 	
	along the Lipschitz property of $\rho_{\gamma}$ to apply Massart's concentration theorem (Massart, 2000),	
	we get 
	$$
	P\left(\frac{\sqrt{c_0}}{LM} Z_n(M) \geq \frac{\sqrt{c_0}}{LM} E[Z_n(M)] + \frac{\sqrt{c_0}}{LM}t \right)
	\leq e^{-\frac{nc_0t^2}{8M^2L^2}}.
	$$
	Combining it with (\ref{EQ:L1_1}), we finally get the desired lemma as follows.
	\begin{eqnarray}
		P\left(Z_n(M) \geq 2ML\sqrt{s/n} + t \right) 
		\leq P\left(Z_n(M) \geq E[Z_n(M)] + t \right)
		\leq e^{-\frac{nc_0t^2}{8M^2L^2}}.  
	\end{eqnarray}
\end{proof}

	\begin{proof}[Proof of the theorem]
		Consider the set $\mathcal{B}_0(M)$ as defined above with $M=o(\kappa_n^{-1}s^{-1/2})$.
		Then, a Taylor series expansion of $L_{n,\gamma}(\boldsymbol{\beta})$ at any 
		$\boldsymbol{\beta}=(\boldsymbol{\beta}_1^T, \boldsymbol{0}^T)^T\in\mathcal{B}_0(M)$ around $\boldsymbol{\beta}_0$, 
		along with Assumption (A4), yields 
		\begin{eqnarray}
			E\left[L_{n,\gamma}(\boldsymbol{\beta}) - L_{n,\gamma}(\boldsymbol{\beta}_0)\right]
			&=& \frac{1}{2n}(\boldsymbol{\beta}_1 - \boldsymbol{\beta}_{10})^TE[\boldsymbol{X}_1^T\boldsymbol{H}_{\gamma}^{(2)}(\boldsymbol{\beta}_0)\boldsymbol{X}_1] 
			(\boldsymbol{\beta}_1 - \boldsymbol{\beta}_{10})
			+ o(1), \hspace{1cm}
			\label{EQ:A.1}
		\end{eqnarray}
		since the first order derivative of $E[L_{n,\gamma}(\boldsymbol{\beta})]$ is zero at $\boldsymbol{\beta}=\boldsymbol{\beta}_0$.
		Now, by Assumptions (A2) and (A3),
		we have 
		$$
		\frac{1}{n}(\boldsymbol{\beta}_1 - \boldsymbol{\beta}_{10})^TE[\boldsymbol{X}_1^T\boldsymbol{H}(\boldsymbol{\beta}_0)\boldsymbol{X}_1] 
		(\boldsymbol{\beta}_1 - \boldsymbol{\beta}_{10}) \geq c_0c_1||\boldsymbol{\beta}_1 - \boldsymbol{\beta}_{10}||_2^2,
		$$
		for sufficiently large $n$, which implies
		\begin{eqnarray}
			E\left[L_{n,\gamma}(\boldsymbol{\beta}) - L_{n,\gamma}(\boldsymbol{\beta}_0)\right]
			&\geq& \frac{1}{2} c_0c_1||\boldsymbol{\beta}_1 - \boldsymbol{\beta}_{10}||_2^2. 
			\label{EQ:ThP.1}
		\end{eqnarray}
		
		Next, we will proceed as in the proof of Theorem 1 of  Fan et al.
		\cite{Fan/etc:2014} to prove the first part of our theorem;
		the second part follows from there in a straightforward manner and hence omitted for brevity. 
		Note that, the oracle parameter value  $\widehat{\boldsymbol{\beta}}^o$ may not belongs to $\boldsymbol{\beta}_o(M)$
		and hence may not satisfy (\ref{EQ:ThP.1}). So, we define 
		$\widetilde{\boldsymbol{\beta}}_1 = u \widehat{\boldsymbol{\beta}}_1^o + (1-u) \boldsymbol{\beta}_{10}$
		with $u = M/(M+||\widehat{\boldsymbol{\beta}}_1^o - \boldsymbol{\beta}_{10}||_2)$ so that 
		$\widetilde{\boldsymbol{\beta}} = (\widetilde{\boldsymbol{\beta}}_1^T, \boldsymbol{0}^T)^T \in \mathcal{B}_0(M)$ and satisfies (\ref{EQ:ThP.1}).
		But, by convexity of our objective function $Q_{n,\gamma, \lambda}(\boldsymbol{\beta})$, it can be easily shown that 
		$Q_{n,\gamma, \lambda}(\widetilde{\boldsymbol{\beta}}) \leq Q_{n,\gamma, \lambda}(\boldsymbol{\beta}_0)$
		and hence 
		\begin{eqnarray}
			E\left[L_{n,\gamma}(\widetilde{\boldsymbol{\beta}}) - L_{n,\gamma}(\boldsymbol{\beta}_0)\right]
			&\leq& Z_n(M) + \lambda_n \sum\limits_{j=1}^{s} w_{j}\left\vert \widetilde{\beta}_{j} - \beta_{j0}\right\vert
			\nonumber\\
			&\leq& Z_n(M) + \lambda_n ||\boldsymbol{w}_0||_2 ||\widetilde{\boldsymbol{\beta}}_1 - \boldsymbol{\beta}_{01}||_2
			\leq Z_n(M) + \lambda_n ||\boldsymbol{w}_0||_2 M,\nonumber
		\end{eqnarray}
		where the second last inequality holds by Cauchy-Schwarz inequality.
		Therefore, on the event $\mathcal{E}_n = \left\{Z_n(M) \leq 2MLn^{-1/2}\sqrt{s\log n} \right\}$, we get 
		\begin{eqnarray}
			E\left[L_{n,\gamma}(\widetilde{\boldsymbol{\beta}}) - L_{n,\gamma}(\boldsymbol{\beta}_0)\right]
			&\leq& 2MLn^{-1/2}\sqrt{s\log n} + \lambda_n ||\boldsymbol{w}_0||_2 M,
			\label{EQ:ThP.2}
		\end{eqnarray}
		
		Now, take $L=(1+\gamma)L_{\gamma}/\gamma$ so that, by applying Lemma \ref{LEM:L1} with $t=ML\sqrt{s(\log n)/n}$, 
		we get for large enough $n$ (satisfying $\log n \geq 4$)
		\begin{eqnarray}
			P(\mathcal{E}_n^c) &=& P\left(Z_n(M) > 2MLn^{-1/2}\sqrt{s\log n}\right)
			\nonumber\\
			&\leq& P\left(Z_n(M) > 2ML \sqrt{s/n} + t\right) \leq \exp(-{c_0s(\log n)}/{8}).
			\nonumber
		\end{eqnarray}
		The first inequality follows because $2ML\sqrt{s/n} + t < 2MLn^{-1/2}\sqrt{s\log n}$.
		Hence, 
		$$
		P(\mathcal{E}_n) \geq 1 - \exp(-{c_0s(\log n)}/{8}) = 1 - n^{-c_0s/8}.  
		$$
		Note that we can assume $L>1$ without any loss of generality. \\
		Also, take $M=L^{-1}\left[2\sqrt{s/n} + \lambda_n ||\boldsymbol{w}_0||_2\right]$ so that it satisfies $M=o(\kappa_n^{-1}s^{-1/2})$
		by Assumption (A2) and the fact that $\lambda_n||\boldsymbol{w}_0||_2\sqrt{s}\kappa_n \rightarrow 0$.
		Therefore, combining (\ref{EQ:ThP.1}) for $\boldsymbol{\beta}_1 = \widetilde{\boldsymbol{\beta}}_1$ 
		and (\ref{EQ:ThP.2}), on the event $\mathcal{E}_n$, we get that 
		\begin{eqnarray}
			\frac{1}{2} c_0c_1||\widetilde{\boldsymbol{\beta}}_1 - \boldsymbol{\beta}_{10}||_2^2
			&\leq& \left(2n^{-1/2}\sqrt{s\log n} + \lambda_n ||\boldsymbol{w}_0||_2\right) ML
			\nonumber\\
			&\leq& \left(2n^{-1/2}\sqrt{s\log n} + \lambda_n ||\boldsymbol{w}_0||_2\right) \left(2\sqrt{s/n} + \lambda_n ||\boldsymbol{w}_0||_2\right),
			\nonumber
		\end{eqnarray}
		and hence 
		\begin{eqnarray}
			||\widetilde{\boldsymbol{\beta}}_1 - \boldsymbol{\beta}_{10}||_2
			&\leq& O\left(\sqrt{s(\log n)/n} + \lambda_n ||\boldsymbol{w}_0||_2\right),
		\end{eqnarray}
		But $||\widetilde{\boldsymbol{\beta}}_1 - \boldsymbol{\beta}_{10}||_2\leq M/2$, and hence
		$$
		\frac{M ||\widehat{\boldsymbol{\beta}}^o_1 - \boldsymbol{\beta}_{10}||_2}{
			M + ||\widehat{\boldsymbol{\beta}}^o_1 - \boldsymbol{\beta}_{10}||_2} \leq \frac{M}{2},
		~~~~\mbox{ or }~~
		||\widehat{\boldsymbol{\beta}}^o_1 - \boldsymbol{\beta}_{10}||_2 \leq M 
		= O\left(2\sqrt{s/n} + \lambda_n ||\boldsymbol{w}_0||_2\right).
		$$
		Then, it follows that
		$$
		||\widehat{\boldsymbol{\beta}}^o_1 - \boldsymbol{\beta}_{10}||_2
		\leq O\left(\sqrt{s(\log n)/n} + \lambda_n ||\boldsymbol{w}_0||_2\right),
		$$
		on the event $\mathcal{E}_n$ having probability at least $1-n^{-cs}$ with $c=c_0/8$.
		This completes the proof of the first part of the theorem.
	\end{proof}

	\subsection{Proof of Theorem 4.2}
	\label{APP:proof.2}
	
	Consider the set-up and notation of Section 4 and Subsection \ref{APP:proof.1} above.
	Note that, by Assumption (A1), the function $|\psi_{1, \gamma}|$ is bounded and let us denote its upper bound by $R_{\gamma}$.
	For any given $r>0$, define a ball in $\mathbb{R}^s$ around $\boldsymbol{\beta}_0$ as
	$$
	\mathcal{N}(r) = \left\{ \boldsymbol{\beta}=(\boldsymbol{\beta}_1^T, \boldsymbol{\beta}_2^T)^T\in\mathbb{R}^p : \boldsymbol{\beta}_2 = \boldsymbol{0}_{p-s}, 
	||\boldsymbol{\beta}_1 - \boldsymbol{\beta}_{10} ||_2\leq r \right\},
	$$ 
	and the functional space $\Gamma_j(r) = \left\{ h_{j,\boldsymbol{\beta}} : \boldsymbol{\beta} \in \mathcal{N}(r) \right\}$, 
	where
	\begin{align*}
		h_{j,\boldsymbol{\beta}}(\boldsymbol{x}_i, y_i)  = x_{ij} (1+\gamma)\big[&\psi_{1, \gamma}(y_i - \boldsymbol{x}_i^T\boldsymbol{\beta})
		-\psi_{1, \gamma}(y_i - \boldsymbol{x}_i^T\boldsymbol{\beta}_0)\\
		&- E\left(\psi_{1, \gamma}(y_i - \boldsymbol{x}_i^T\boldsymbol{\beta}) - \psi_{1, \gamma}(y_i - \boldsymbol{x}_i^T\boldsymbol{\beta}_0)\right)\big],
	\end{align*}
	
	for $i=1, \ldots, n$ and $j=1, \ldots, p$.
	We again start with an useful Lemma and subsequently prove the main theorem using it.
	
	\begin{lemma}\label{LEM:L2}
		Consider the ball $\mathcal{N}=\mathcal{N}(C_1\delta_n)$ for some sequence $\delta_n \rightarrow 0$ satisfying $\kappa_n\delta_n^2 = o(\lambda_n)$ 
		and assume that, for $j=s+1, \ldots, p$, the functional space $\Gamma_j=\Gamma_j(\delta_n)$ has the covering number 
		$N(\cdot, \Gamma_j, ||\cdot||_2)$ satisfying
		$$
		\log(1+N(2^{-k}R_{\gamma}, \Gamma_j, ||\cdot||_2)) \leq A_{j,n} 2^{2k}, ~~~~~~~~~~~0\leq k \leq (\log_2 n)/2, 
		$$
		for some constant $A_{j,n}>0$ satisfying  $A_{j,n}\log_2n = o(\sqrt{n}\lambda_n)$.
		Further, if Assumptions (A1)--(A4) hold with $\lambda_n > 2\sqrt{(c+1)\log p / n}$ 
		and $\min(|\boldsymbol{w}_1|) >c_3$ for some constant $c, c_3>0$,
		and $\lambda_n\sqrt{n/\log p} \rightarrow \infty$, then there exists some constant $c>0$ such that 
		\begin{eqnarray}
			P\left( \sup\limits_{\boldsymbol{\beta}\in \mathcal{N}}\left|\left| n^{-1}
			\boldsymbol{X}_2^T\boldsymbol{H}_{\gamma}^{(1)}(\boldsymbol{\beta}_1)\right|\right|_\infty
			\geq \lambda_n \min(|\boldsymbol{w}_1|)\right)\leq o(p^{-c}).
		\end{eqnarray}
	\end{lemma}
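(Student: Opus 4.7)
For each $j\in S_0^c$, the $(j-s)$-th component of $n^{-1}\boldsymbol{X}_2^T\boldsymbol{H}_\gamma^{(1)}(\boldsymbol{\beta})$ equals
$$
G_{j,n}(\boldsymbol{\beta}) \;=\; n^{-1}\sum_{i=1}^n x_{ij}(1+\gamma)\,\psi_{1,\gamma}(y_i-\boldsymbol{x}_i^T\boldsymbol{\beta}).
$$
Since the $\boldsymbol{\beta}\in\mathcal{N}$ have zero components outside $S_0$, an integration-by-parts argument on $\int\psi_{1,\gamma}(s)f(s)\,ds=\int f'f^{\gamma}\,ds$ gives $E[\psi_{1,\gamma}(\epsilon_i)]=0$ under the true model, so $EG_{j,n}(\boldsymbol{\beta}_0)=0$ and one may decompose
$$
G_{j,n}(\boldsymbol{\beta}) \;=\; \underbrace{G_{j,n}(\boldsymbol{\beta}_0)}_{T_2^{(j)}} \;+\; \underbrace{n^{-1}\sum_{i=1}^n h_{j,\boldsymbol{\beta}}(\boldsymbol{x}_i,y_i)}_{T_1^{(j)}(\boldsymbol{\beta})} \;+\; \underbrace{E\!\left[G_{j,n}(\boldsymbol{\beta})\right]}_{T_3^{(j)}(\boldsymbol{\beta})}.
$$
The plan is to show that each of the three pieces, taken as a maximum over $j>s$ (and, for $T_1,T_3$, also as a supremum over $\boldsymbol{\beta}\in\mathcal{N}$), is bounded by $\tfrac{1}{3}\lambda_n\min(|\boldsymbol{w}_1|)$ except on an event of probability $o(p^{-c})$.

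For the deterministic bias $T_3^{(j)}$, I would Taylor-expand $E[\psi_{1,\gamma}(\epsilon_i-\boldsymbol{x}_{1i}^T(\boldsymbol{\beta}_1-\boldsymbol{\beta}_{10}))]$ around $\boldsymbol{\beta}_{10}$. The first-order coefficient is $-E[\psi_{1,\gamma}'(\epsilon_i)]=-J_{11,\gamma}^{(0)}$, so the linear part equals exactly $-\bigl[n^{-1}\boldsymbol{X}_2^TE[\boldsymbol{H}_\gamma^{(2)}(\boldsymbol{\beta}_0)]\boldsymbol{X}_1\bigr]_{j-s,\cdot}(\boldsymbol{\beta}_1-\boldsymbol{\beta}_{10})$; Cauchy--Schwarz together with Assumption (A5) bounds its absolute value by $\tfrac{\lambda_n\min(|\boldsymbol{w}_1|)}{2C_1\delta_n}\cdot C_1\delta_n = \tfrac{\lambda_n\min(|\boldsymbol{w}_1|)}{2}$. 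The quadratic remainder is uniformly $O(\kappa_n\delta_n^2)$ by Assumption (A4), hence $o(\lambda_n)$ by the hypothesis $\kappa_n\delta_n^2=o(\lambda_n)$. For $T_2^{(j)}$ I would invoke Hoeffding's inequality: each summand is independent, mean zero, and bounded by $\kappa_n(1+\gamma)R_\gamma$, yielding $P(|T_2^{(j)}|\geq t)\leq 2\exp(-cnt^2/\kappa_n^2)$. Taking $t$ of order $\lambda_n\min(|\boldsymbol{w}_1|)\geq c_3\lambda_n$ and a union bound over $j\in S_0^c$ gives a tail $2p\exp(-c'n\lambda_n^2/\kappa_n^2)$, which is $o(p^{-c})$ under $\lambda_n>2\sqrt{(c+1)\log p/n}$ and the growth condition on $\kappa_n$ in (A2).

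The main obstacle is controlling the centered empirical process $\sup_{\boldsymbol{\beta}\in\mathcal{N}}|T_1^{(j)}(\boldsymbol{\beta})|$ uniformly in $j$. Here I would appeal to standard symmetrization and a Dudley-type chaining argument using the hypothesized bound $\log(1+N(2^{-k}R_\gamma,\Gamma_j,\|\cdot\|_2))\leq A_{j,n}2^{2k}$; peeling over the dyadic grid $0\leq k\leq(\log_2 n)/2$ gives
$$
E\sup_{h\in\Gamma_j}|P_n h| \;\lesssim\; (1+\gamma)R_\gamma\kappa_n\sqrt{A_{j,n}\log_2 n/n},
$$
and a Talagrand/Bousquet concentration inequality upgrades this to an exponential deviation bound whose rate is compatible with $A_{j,n}\log_2 n=o(\sqrt{n}\lambda_n)$. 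A final union bound over $j=s+1,\ldots,p$, together with $\lambda_n>2\sqrt{(c+1)\log p/n}$, yields a tail of order $o(p^{-c})$. The delicate points will be (i) carrying the entropy integral out only to the cutoff $k\leq(\log_2 n)/2$ (the regime in which the hypothesis is stated) and showing that the unresolved residual chain is dominated by the Hoeffding contribution, and (ii) verifying that the product $\kappa_n\sqrt{A_{j,n}\log_2 n/n}$, after multiplication by natural constants, is genuinely smaller than $\lambda_n\min(|\boldsymbol{w}_1|)$ under the assumed rates. Combining Steps 1--3 then gives the claim.
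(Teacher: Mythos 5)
Your decomposition into the bias term, the term at $\boldsymbol{\beta}_0$, and the centered empirical process over $\Gamma_j$ is exactly the paper's decomposition ($I_1+I_2+I_3$), and each piece is handled the same way: Taylor expansion plus Assumption (A5) for the bias, Hoeffding for the term at the truth, and a chaining-plus-concentration bound driven by the hypothesized covering numbers for the supremum (the paper simply invokes Corollary 14.4 of B\"uhlmann and van de Geer rather than redoing Dudley/Talagrand by hand). Two small quantitative corrections: first, in the Hoeffding step you should use the column normalization $\sum_i x_{ij}^2\leq n$ as the variance proxy rather than the crude bound $n\kappa_n^2$ --- with $\kappa_n^2$ in the denominator the tail $\exp(-c\,n\lambda_n^2/\kappa_n^2)$ is not $o(p^{-c})$ under the stated condition $\lambda_n>2\sqrt{(c+1)\log p/n}$ alone, since (A2) permits $\kappa_n\to\infty$; second, your plan to bound each piece by $\tfrac{1}{3}\lambda_n\min(|\boldsymbol{w}_1|)$ is incompatible with (A5), which hands you the bias term only up to $\tfrac{1}{2}\lambda_n\min(|\boldsymbol{w}_1|)$ --- the correct allocation (as in the paper) is $\tfrac{1}{2}\lambda_n\min(|\boldsymbol{w}_1|)+o(\lambda_n)$ for the bias and $O(\lambda_n)$, respectively $o_P(\lambda_n)$, for the other two, which requires $\min(|\boldsymbol{w}_1|)$ bounded away from a constant exceeding $2$. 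Both are local fixes; the architecture of your argument matches the paper's.
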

	\begin{proof}
		We follow the line of arguments used in the proof of Lemma 2 of Fan et al. \cite{Fan/etc:2014} and consider the decomposition
		$$
		\sup\limits_{\boldsymbol{\beta}\in \mathcal{N}}\left|\left|  n^{-1}\boldsymbol{X}_2^T\boldsymbol{H}_{\gamma}^{(1)}(\boldsymbol{\beta}_1)\right|\right|_\infty
		\leq I_1 + I_2 + I_3,
		$$
		where
		\begin{eqnarray}
			I_1 &=& \sup\limits_{\boldsymbol{\beta}\in \mathcal{N}}\left|\left|  \frac{1}{n}\boldsymbol{X}_2^T
			E\left(\boldsymbol{H}_{\gamma}^{(1)}(\boldsymbol{\beta}_{1}) - \boldsymbol{H}_{\gamma}^{(1)}(\boldsymbol{\beta}_{10})\right)
			\right|\right|_\infty,
			\nonumber\\
			I_2 &=& \frac{1}{n} \left|\left| \boldsymbol{X}_2^T\boldsymbol{H}_{\gamma}^{(1)}(\boldsymbol{\beta}_{10})\right|\right|_\infty,
			\nonumber\\
			I_3 &=& \max\limits_{j>s} \sup\limits_{\boldsymbol{\beta}\in \mathcal{N}}\left|  n^{-1}\sum_{i=1}^n h_{j,\boldsymbol{\beta}}(\boldsymbol{x}_i, y_i) \right|.
			\nonumber
		\end{eqnarray}

		We first consider $I_1$. Then, a Taylor series expansion of $\boldsymbol{H}_{\gamma}^{(1)}(\boldsymbol{\beta}_1)$ at any 
		$\boldsymbol{\beta}=(\boldsymbol{\beta}_1^T, \boldsymbol{0}^T)^T\in\mathcal{N}$ around $\boldsymbol{\beta}_{10}$, 
		along with Assumption (A3), yields 
		\begin{eqnarray}
			E\left[\boldsymbol{H}_{\gamma}^{(1)}(\boldsymbol{\beta}_1) - \boldsymbol{H}_{\gamma}^{(1)}(\boldsymbol{\beta}_{10})\right]
			&=& E[\boldsymbol{H}_{\gamma}^{(2)}(\boldsymbol{\beta}_{10})]\boldsymbol{X}_1(\boldsymbol{\beta}_1 - \boldsymbol{\beta}_{10})
			+ O(\delta_n^2), \nonumber
		\end{eqnarray}
		for sufficiently large $n$. Therefore, 
		\begin{eqnarray}
			I_1 &\leq& \sup\limits_{\boldsymbol{\beta}\in \mathcal{N}}\left|\left|  \frac{1}{n}\boldsymbol{X}_2^T
			E[\boldsymbol{H}_{\gamma}^{(2)}(\boldsymbol{\beta}_{10})]\boldsymbol{X}_1(\boldsymbol{\beta}_1 - \boldsymbol{\beta}_{10})\right|\right|_\infty
			+ O(\kappa_n\delta_n^2)
			\nonumber\\
			&\leq& \sup\limits_{\boldsymbol{\beta}\in \mathcal{N}}\left|\left|  \frac{1}{n}\boldsymbol{X}_2^T
			E[\boldsymbol{H}_{\gamma}^{(2)}(\boldsymbol{\beta}_{10})]\boldsymbol{X}_1\right|\right|_{2,\infty} ||\boldsymbol{\beta}_1 - \boldsymbol{\beta}_{10}||_2
			+ O(\kappa_n\delta_n^2)
			\nonumber\\
			&\leq& \frac{\lambda_n\min(|\boldsymbol{w}_1|)}{2C_1\delta_n} C_1\delta_n + O(\kappa_n\delta_n^2),
			\nonumber
		\end{eqnarray}
		where the last inequality follows from Assumption (A5) and the definition of $\mathcal{N}$. Hence, for sufficiently large $n$, we have
		\begin{eqnarray}
			I_1 &\leq& \frac{\lambda_n\min(|\boldsymbol{w}_1|)}{2} + o(\lambda_n).
			\label{EQ:Pf2.1}
		\end{eqnarray}
		
		Next, we consider $I_2$ and use the condition $\lambda_n > 2\sqrt{(c+1)\log p / n}$. An application of Hoeffding's inequality then implies that
		\begin{eqnarray}
			P\left(\left|\left| \boldsymbol{X}_2^T\boldsymbol{H}_{\gamma}^{(1)}(\boldsymbol{\beta}_{10})\right|\right|_\infty \geq n \lambda_n\right)
			&\leq &  \sum_{j=s+1}2\exp\left(-\frac{n^2\lambda_n^2}{R_{\gamma}^2 \sum_{i=1}^nx_{ij}^2}\right)
			\nonumber\\
			&=& 2 \exp\left(\log(p-s)-\frac{n\lambda_n^2}{R_{\gamma}^2}\right) \leq O(p^{-c}).
			\nonumber
		\end{eqnarray}
		Therefore, with probability at least $1-O(p^{-c})$, we have
		\begin{eqnarray}
			I_2 = n^{-1} \left|\left| \boldsymbol{X}_2^T\boldsymbol{H}_{\gamma}^{(1)}(\boldsymbol{\beta}_{10})\right|\right|_\infty \leq \lambda_n.
			\label{EQ:Pf2.2}
		\end{eqnarray}
		
		Finally, considering $I_3$, we recall that $|\psi_{1, \gamma}|$ is bounded and the columns of $\boldsymbol{X}$ 
		are standardized to have $\ell_1$ norm $\sqrt{n}$ so that their $\ell_2$ norm is bounded by $\sqrt{n}$.
		Thus, we get $\sum_{i=1}^n h_{j, \boldsymbol{\beta}}^2(\boldsymbol{x}_i, y_i) \leq n R_{\gamma}^2$ and hence
		$|| h_{j, \boldsymbol{\beta}}||_2 \leq \sqrt{n} R_{\gamma}$. 
		Additionally, in view of the assumption on the covering number of the spaces $\Gamma_j$,
		we can apply Corollary 14.4 of B\"{u}hlmann and van de Geer \cite{buhlmann} to get, for each $j>s$ and any $t>0$,  
		$$
		P\left(\sup\limits_{\boldsymbol{\beta}\in \mathcal{N}}\left|  n^{-1}\sum_{i=1}^n h_{j,\boldsymbol{\beta}}(\boldsymbol{x}_i, y_i) \right|
		\geq \frac{4R_{\gamma}}{\sqrt{n}}(3\sqrt{A_{j,n}}\log_2 n + 4+ 4t)\right)
		\leq 4 \exp\left(-nt^2/8\right).
		$$
		Taking $t=\sqrt{C(\log p )/n}$ with some large enough $C>0$, as $p\rightarrow\infty$, we deduce
		\begin{eqnarray}
			P\left(\max\limits_{j>s}\sup\limits_{\boldsymbol{\beta}\in \mathcal{N}}\left|  n^{-1}\sum_{i=1}^n h_{j,\boldsymbol{\beta}}(\boldsymbol{x}_i, y_i) \right|
			\geq \frac{12R_{\gamma}}{\sqrt{n}}\sqrt{A_{j,n}}\log_2 n\right) 
			&\leq& 4(p-s)\exp\left(-C(\log p)/8\right) 
			\nonumber\\
			&&~
			=4(p-s)p^{-C/8} \rightarrow 0. \nonumber
		\end{eqnarray}
		Since $\sqrt{A_{j,n}}\log_2n = o(\sqrt{n}\lambda_n)$, we finally get that, with probability at least $1- o(p^{-c})$, 
		\begin{eqnarray}
			I_3 = \max\limits_{j>s}\sup\limits_{\boldsymbol{\beta}\in \mathcal{N}}\left|  n^{-1}\sum_{i=1}^n h_{j,\boldsymbol{\beta}}(\boldsymbol{x}_i, y_i) \right|
			= o_P(\lambda_n).
			\label{EQ:Pf2.3}
		\end{eqnarray}
		Now the lemma follows by combining (\ref{EQ:Pf2.1}), (\ref{EQ:Pf2.2}) and (\ref{EQ:Pf2.3}).
	\end{proof}
	
	\begin{proof}[Proof of the theorem]
		The proof follows exactly in the same manner as in the proof of Theorem 2 of Fan et al. \cite{Fan/etc:2014}
		using Theorem 4.1 and Lemma \ref{LEM:L2}.
	\end{proof}
	
	\begin{remark}
		Our proof of Theorem 4.2 is based on Lemma \ref{LEM:L2} which, in turn, depends on an additional assumption 
		about the covering number of spaces $\Gamma_j$. This assumption is a bit restrictive; Fan et al. \cite{Fan/etc:2014}
		proved that this holds for quantile regression loss by  proper choices of the convergence rates of the sequences involved 
		($\delta_n, \lambda_n, \kappa_n$). Although we do not have a proof available right now, 
		we conjecture that it would be possible to prove Lemma \ref{LEM:L2} and hence Theorem 4.2
		for our DPD loss function avoiding this assumption appropriately. 
	\end{remark}
	
	\subsection{Proof of Theorem 4.3}
	\label{APP:proof.3}
	
	Consider the set-up and notation of Section 4 and Subsection \ref{APP:proof.2} above.
	Let us set $\eta_{i0}=\boldsymbol{x}_{1i}^T\boldsymbol{\beta}_{10}$ for $i=1, \ldots, n$ and  
	$\boldsymbol{\theta} = \boldsymbol{V}_n^{-1}(\boldsymbol{\beta}_1 - \boldsymbol{\beta}_{10})$ so that 
	$\boldsymbol{\beta}_1 = \boldsymbol{\beta}_{10} + \boldsymbol{V}_n\boldsymbol{\theta}$. 
	Also define 
	$$R_{n,i}(\boldsymbol{\theta}) = \rho_{\gamma}(\eta_{i0} + \boldsymbol{Z}_{n,i}^T\boldsymbol{\theta}, y_i) - \rho_{\gamma}(\eta_{i0}, y_i)
	+ (1+\gamma)\psi_{1,\gamma}(y_i-\eta_{i0}) \boldsymbol{Z}_{n,i}^T\boldsymbol{\theta},$$ 
	for each $i$ and
	$R_n(\boldsymbol{\theta}) = \sum_{i=1}^n R_{n,i}(\boldsymbol{\theta})$.
	Consider $\boldsymbol{\theta}$ over the convex open set 
	$B_0(n) = \left\{\boldsymbol{\theta}\in \mathbb{R}^s : ||\boldsymbol{\theta}||_2 <c^* \sqrt{s} \right\}$ 
	for some constant $c^*>0$ independent of $s$. 
	The following lemma gives the convergence rate of $r_n(\boldsymbol{\theta}) = R_n(\boldsymbol{\theta}) - E[R_n(\boldsymbol{\theta})]$,
	which will subsequently be used to prove the main theorem.

	\begin{lemma}\label{LEM:L3}
		Under the assumptions of Theorem 4.3, we have, for any $\epsilon>0$ and $\boldsymbol{\theta}\in B_0(n)$, 
		$$
		P\left(|r_n(\boldsymbol{\theta})|\geq \epsilon)\right) \leq \exp\left(-C\epsilon b_ns^2(\log s)\right), 
		$$
		where $b_n$ is some diverging sequence such that $b_ns^{7/2}(\log s) \max_i ||\boldsymbol{Z}_{n,i}||_2 \rightarrow 0$ and
		$C>0$ is some constant.
	\end{lemma}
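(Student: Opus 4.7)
The plan is to recognise $R_{n,i}(\boldsymbol{\theta})$ as the second-order Taylor remainder obtained by expanding the scalar map $x\mapsto \rho_\gamma(x, y_i)$ around $x=\eta_{i0}$ and cancelling the linear part via the identity $\rho_\gamma'(\eta_{i0}, y_i) = (1+\gamma)\psi_{1,\gamma}(y_i - \eta_{i0})$ (which is already used implicitly in the definition of $\boldsymbol{H}_\gamma^{(1)}$), and then to apply Bernstein's inequality to the centred sum $r_n(\boldsymbol{\theta}) = \sum_{i=1}^n\{R_{n,i}(\boldsymbol{\theta}) - E R_{n,i}(\boldsymbol{\theta})\}$, which is a sum of independent random variables since independence is inherited from the $y_i$.

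By Taylor's theorem,
$$R_{n,i}(\boldsymbol{\theta}) = \tfrac{1}{2}\,\rho_\gamma''(\tilde\eta_i, y_i)(\boldsymbol{Z}_{n,i}^T\boldsymbol{\theta})^2$$
for some $\tilde\eta_i$ between $\eta_{i0}$ and $\eta_{i0} + \boldsymbol{Z}_{n,i}^T\boldsymbol{\theta}$; the factor $\rho_\gamma''$ equals, up to the constant $(1+\gamma)$, the function $J_{11,\gamma}(y_i - \cdot)$ that defines $\boldsymbol{H}_\gamma^{(2)}$. Under Assumption (A1) together with the regularity baked into Assumptions (A3)--(A4), $|\rho_\gamma''(x, y)| \le K_\gamma$ uniformly in $(x, y)$. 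Combining this with $\|\boldsymbol{\theta}\|_2 < c^*\sqrt{s}$ and the spectral control on $\boldsymbol{Z}_n^T\boldsymbol{Z}_n$ coming from Assumptions (A2)--(A3) (in particular $\sum_i\|\boldsymbol{Z}_{n,i}\|_2^2 = \mathrm{tr}(\boldsymbol{Z}_n^T\boldsymbol{Z}_n) = O(s)$, since $\boldsymbol{Z}_n^T\boldsymbol{Z}_n$ has bounded spectrum) yields the envelope $|R_{n,i}(\boldsymbol{\theta})| \le M_n := C\, s\, \max_i\|\boldsymbol{Z}_{n,i}\|_2^2$ and the variance bound $V_n := \sum_i\mathrm{Var}(R_{n,i}(\boldsymbol{\theta})) \le C\, s^2\, \max_i\|\boldsymbol{Z}_{n,i}\|_2^2$, obtained by $\sum_i(\boldsymbol{Z}_{n,i}^T\boldsymbol{\theta})^4 \le \max_i(\boldsymbol{Z}_{n,i}^T\boldsymbol{\theta})^2\cdot\boldsymbol{\theta}^T\boldsymbol{Z}_n^T\boldsymbol{Z}_n\boldsymbol{\theta}$.

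Plugging these into Bernstein's inequality for independent centred bounded variables gives
$$P(|r_n(\boldsymbol{\theta})| \ge \epsilon) \le 2\exp\!\left(-\frac{\epsilon^2/2}{V_n + M_n\epsilon/3}\right),$$
and the target bound $\exp(-C\epsilon\, b_n\, s^2\log s)$ is extracted in the exponential regime by choosing $b_n$ as a suitable constant multiple of $(s^3(\log s)\max_i\|\boldsymbol{Z}_{n,i}\|_2^2)^{-1}$; the two stated constraints on $b_n$ then translate into compatibility conditions on the decay rate of $\max_i\|\boldsymbol{Z}_{n,i}\|_2$, which are guaranteed by the design condition in Assumption (A2) together with the rate assumptions of Theorem \ref{THM:FixW_AsymNormal}. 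The hard part will be precisely this simultaneous calibration of $b_n$ between its divergence requirement and the smallness constraint $b_n\, s^{7/2}(\log s)\max_i\|\boldsymbol{Z}_{n,i}\|_2 \to 0$, which forces tight bookkeeping of all $s$-dependent constants in the envelope and variance estimates; a subsidiary technical point is to check that the uniform boundedness of $\rho_\gamma''$ genuinely holds on the (random) interval $[\eta_{i0}, \eta_{i0} + \boldsymbol{Z}_{n,i}^T\boldsymbol{\theta}]$ even when $y_i$ is unbounded, which for the DPD loss follows from the exponential damping factor $f^\gamma$ present in the formulae for its higher-order derivatives.
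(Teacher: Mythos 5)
Your route is genuinely different from the paper's. The paper does not invoke Bernstein's inequality: it runs a bare-hands Chernoff/MGF argument, bounding $E[e^{t r_n(\boldsymbol{\theta})}]$ by Taylor-expanding both $E[R_{n,i}(\boldsymbol{\theta})]$ and $E[e^{tR_{n,i}(\boldsymbol{\theta})}]$ to second order (so that the quadratic terms $\tfrac{t}{2}\|\boldsymbol{\theta}\|_2^2$ cancel exactly, using $\boldsymbol{Z}_n^TE[\boldsymbol{H}_\gamma^{(2)}]\boldsymbol{Z}_n=\boldsymbol{I}$), choosing $t=2s^2(\log s)b_n$, and using the side condition on $b_n$ to kill the cubic remainders $t\sum_i(\boldsymbol{Z}_{n,i}^T\boldsymbol{\theta})^3$. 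That approach only needs moment-level control (Assumption (A4)) and automatically delivers an exponent linear in $\epsilon$, which is what the companion lemma (Lemma \ref{LEM:L4}) requires. Your approach trades this for an off-the-shelf inequality at the cost of needing a \emph{pointwise} uniform bound on $\rho_\gamma''=(1+\gamma)J_{11,\gamma}$ (stronger than the expectation bounds in (A3)--(A4), though true for normal and typical exponential-family errors, as you note), and of landing in the sub-Gaussian regime of Bernstein for fixed $\epsilon$, where the exponent is quadratic rather than linear in $\epsilon$; converting $\exp(-c\epsilon^2/V_n)$ into the stated form $\exp(-C\epsilon b_ns^2\log s)$ then forces the admissible $b_n$ to depend on $\epsilon$, a wrinkle you do not address.

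There is also a concrete error in the calibration you yourself flag as the hard part. With $M_n\asymp s\max_i\|\boldsymbol{Z}_{n,i}\|_2^2$ and $V_n\asymp s^2\max_i\|\boldsymbol{Z}_{n,i}\|_2^2$, your proposed choice $b_n\asymp\bigl(s^3(\log s)\max_i\|\boldsymbol{Z}_{n,i}\|_2^2\bigr)^{-1}$ gives
$$
b_n\,s^{7/2}(\log s)\max_i\|\boldsymbol{Z}_{n,i}\|_2 \;\asymp\; \frac{s^{1/2}}{\max_i\|\boldsymbol{Z}_{n,i}\|_2}\;\longrightarrow\;\infty,
$$
since $\max_i\|\boldsymbol{Z}_{n,i}\|_2\to 0$ under the design conditions; so this $b_n$ violates the very constraint the lemma imposes. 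The fix is not hard -- Bernstein only gives an \emph{upper} bound on the admissible $b_n$, so you may take any more slowly diverging sequence satisfying both $b_n\le c\bigl(s^4(\log s)\max_i\|\boldsymbol{Z}_{n,i}\|_2^2\bigr)^{-1}$ (the correct threshold in the quadratic regime) and $b_ns^{7/2}(\log s)\max_i\|\boldsymbol{Z}_{n,i}\|_2\to0$; both thresholds diverge whenever $s^{7/2}(\log s)\max_i\|\boldsymbol{Z}_{n,i}\|_2\to0$, which is needed for the lemma to be non-vacuous anyway. But as written, the bookkeeping you identify as the crux is done incorrectly, so the sketch does not yet close.
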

	\begin{proof}
		Note that the $R_{n,i}(\boldsymbol{\theta})$'s are independent to each other. So, for any $\epsilon, t >0$, we get by Markov inequality that
		\begin{eqnarray}
			P\left(r_n(\boldsymbol{\theta}) \geq \epsilon \right)
			= P\left(e^{t r_n(\boldsymbol{\theta})} \geq e^{t\epsilon} \right)
			&\leq& e^{-t\epsilon}E\left[e^{t\sum_{i=1}^n \left(R_{n,i}(\boldsymbol{\theta}) - E[R_{n,i}(\boldsymbol{\theta})]\right)}\right]
			\nonumber\\
			&=& e^{-t\epsilon - t \sum_{i=1}^n E[R_{n,i}(\boldsymbol{\theta})]}\prod_{i=1}^{n} E\left[e^{t R_{n,i}(\boldsymbol{\theta})}\right].
			\label{EQ:C.1}
		\end{eqnarray}
		Let us first consider the term $E[R_{n,i}(\boldsymbol{\theta})]$ and use a Taylor series to obtain
		$$
		E[R_{n,i}(\boldsymbol{\theta})] = \frac{1}{2}(1+\gamma)E[J_{11,\gamma}(y_i - \eta_{i0})](\boldsymbol{Z}_{n,i}^T\boldsymbol{\theta})^2 
		+ O((\boldsymbol{Z}_{n,i}^T\boldsymbol{\theta})^3),
		$$
		for $i=1, \ldots, n$, with $O(\cdot)$ being uniform over $i$. Then, we get
		\begin{eqnarray}
			t\sum_{i=1}^n  E[R_{n,i}(\boldsymbol{\theta})] &=&
			\frac{t}{2}\boldsymbol{\theta}^T(\boldsymbol{Z}_n^TE[\boldsymbol{H}_{\gamma}^{(2)}(\boldsymbol{\beta}_0)]\boldsymbol{Z}_n)\boldsymbol{\theta} 
			+ O(t\sum_{i=1}^n (\boldsymbol{Z}_{n,i}^T\boldsymbol{\theta})^3)
			\nonumber\\
			&=& \frac{t}{2}||\boldsymbol{\theta}||_2^2 + O\left(t\sum_{i=1}^n (\boldsymbol{Z}_{n,i}^T\boldsymbol{\theta})^3\right),
			\label{EQ:C.2}
		\end{eqnarray}
		since  $\boldsymbol{Z}_n^TE[\boldsymbol{H}_{\gamma}^{(2)}(\boldsymbol{\beta}_0)]\boldsymbol{Z}_n$ is the identity matrix by definition of $\boldsymbol{Z}_n$.
		\\
		Next consider the term  $E\left[e^{t R_{n,i}(\boldsymbol{\theta})}\right]$ and use Taylor series expansions along with Assumption (A4) to obtain
		$$
		E\left[e^{t R_{n,i}(\boldsymbol{\theta})}\right]= 1 + \frac{t}{2}(1+\gamma)E[J_{11,\gamma}(y_i - \eta_{i0})](\boldsymbol{Z}_{n,i}^T\boldsymbol{\theta})^2 
		+ O(t^2(\boldsymbol{Z}_{n,i}^T\boldsymbol{\theta})^3).
		$$
		Next, using the fact that $\prod_{i=1}^n (1+x_i) \leq e^{\sum_{i=1}^n x_i}$ for large enough $x_i>0$, and the argument similar to (\ref{EQ:C.2}), 
		we get 
		\begin{eqnarray}
			\prod_{i=1}^n E\left[e^{t R_{n,i}(\boldsymbol{\theta})}\right] 
			&\leq& e^{\sum_{i=1}^n E[e^{t R_{n,i}(\boldsymbol{\theta})}  - 1]}
			\nonumber\\
			&\leq& \exp\left[\frac{t}{2}||\boldsymbol{\theta}||_2^2 + O\left(t^2\sum_{i=1}^n(\boldsymbol{Z}_{n,i}^T\boldsymbol{\theta})^3\right)\right].
			\label{EQ:C.3}
		\end{eqnarray}
		Combining (\ref{EQ:C.2}) and (\ref{EQ:C.3}) in (\ref{EQ:C.1}), we finally have
		\begin{eqnarray}
			P\left(r_n(\boldsymbol{\theta}) \geq \epsilon \right)
			&\leq& \exp\left[{-t\epsilon - O\left(t\sum_{i=1}^n(\boldsymbol{Z}_{n,i}^T\boldsymbol{\theta})^3\right) + O\left(t^2\sum_{i=1}^n(\boldsymbol{Z}_{n,i}^T\boldsymbol{\theta})^3\right)}\right].
			\label{EQ:C.4}
		\end{eqnarray}
		Now, taking $t=2s^2(log s)b_n$ with $b_n$ satisfying the conditions of the lemma, for $\boldsymbol{\theta}\in B_0(n)$, we get 
		\begin{eqnarray}
			t\sum_{i=1}^n(\boldsymbol{Z}_{n,i}^T\boldsymbol{\theta})^3 &\leq & 
			t \max_{i}|\boldsymbol{Z}_{n,i}^T\boldsymbol{\theta}| \left(\boldsymbol{Z}_n^TE[\boldsymbol{H}_{\gamma}^{(2)}(\boldsymbol{\beta}_0)]\boldsymbol{Z}_n\right)
			||\boldsymbol{\theta}||_2^2
			\nonumber\\
			&\leq& 2 s^2(\log s)b_n \max_i ||\boldsymbol{Z}_{n,i}||_2 ||\boldsymbol{\theta}||_2 ||\boldsymbol{\theta}||_2^2 
			\nonumber\\
			&\leq& 2(C^*)^3 s^{7/2} (\log s)b_n\max_i ||\boldsymbol{Z}_{n,i}||_2 \rightarrow 0.
			\nonumber
		\end{eqnarray}
		Here, in the last step we have used that $||\boldsymbol{\theta}||_2<C^\ast\sqrt{s}$ and the given conditions in the lemma.
		Then, it follows from (\ref{EQ:C.4}) that 
		\begin{eqnarray}
			P\left(r_n(\boldsymbol{\theta}) \geq \epsilon \right)
			&\leq& \exp\left[-\widetilde{C}t\epsilon \right]
			= \exp\left[-C\epsilon b_n s^2(\log s)\right],
			\nonumber
		\end{eqnarray}
		for appropriate constants $\widetilde{C}$ and $C$.
		The same can also be obtained for $P\left(r_n(\boldsymbol{\theta}) \leq -\epsilon \right)$ in a similar manner,
		which then completes the proof of the lemma.
	\end{proof}
	
	\noindent
	To complete the proof of Theorem 4.3, we further need the following lemma from Fan et al. \cite{Fan/etc:2014},
	which we state here for the sake of completeness. 
	
	\begin{lemma}[Lemma 4, Fan et al. \cite{Fan/etc:2014}, Supplementary material]\label{LEM:L4}
		Let $h(\boldsymbol{\theta})$ be a positive function defined on the convex open set $B_0(n)$,
		and $\left\{ h_n(\boldsymbol{\theta}): \boldsymbol{\theta}\in B_0(n) \right\}$ be a sequence of random convex functions.
		Suppose that there exists a diverging sequence $b_n$ such that, for every $\boldsymbol{\theta}\in B_0(n)$ and for all $\epsilon>0$, 
		$$
		P\left(|h_n(\boldsymbol{\theta}) - h(\boldsymbol{\theta})|\geq \epsilon \right)
		\leq c_4 \exp\left[ - c_5 \epsilon b_n s^2(\log s)\right],
		$$ 
		for some constants $c_4, c_5 >0$. Further, assume that there exists a constant $c_6>0$ such that 
		$
		|h(\boldsymbol{\theta}_1) - h(\boldsymbol{\theta}_2)| \leq c_6 s ||\boldsymbol{\theta}_1 - \boldsymbol{\theta}_2||_{\infty},
		$
		for any $\boldsymbol{\theta}_1, \boldsymbol{\theta}_2 \in B_0(n)$.
		Then, we have 
		$$
		\sup_{K_s}\left| h_n(\boldsymbol{\theta}) - h(\boldsymbol{\theta})\right| = o_P(1),
		$$
		where $K_s$ is any compact set in $\mathbb{R}^s$ defined as 
		$K_s =\left\{ \boldsymbol{\theta}\in \mathbb{R}^s : ||\boldsymbol{\theta}||_2 \leq c_7\sqrt{s} \right\} \subset B_0(n)$ 
		for some $c_7 \in (0, c^*)$.
	\end{lemma}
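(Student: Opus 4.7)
The plan is to combine a covering argument for the compact set $K_s$ with the pointwise concentration bound and then exploit the convexity of $h_n$ to transfer control from a finite grid to the whole set $K_s$. The Lipschitz hypothesis on the deterministic limit $h$ takes care of the ``bias'' side of the grid approximation.

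First I would build a $\delta$-net $\{\boldsymbol{\theta}_1,\ldots,\boldsymbol{\theta}_N\}\subset K_s$ in the $\ell_\infty$ metric; since $K_s$ is contained in an $\ell_2$-ball of radius $c_7\sqrt{s}$, one can choose such a net with cardinality $N\leq(C\sqrt{s}/\delta)^s$ for some absolute constant $C>0$. Applying the pointwise concentration hypothesis together with the union bound then yields
\begin{equation*}
P\!\left(\max_{1\le j\le N}|h_n(\boldsymbol{\theta}_j)-h(\boldsymbol{\theta}_j)|\ge \epsilon\right)
\le c_4 N \exp\!\left(-c_5\epsilon\, b_n s^2\log s\right).
\end{equation*}
Taking, e.g., $\delta=s^{-1}$ gives $\log N=O(s\log s)$, which is dominated by $b_n s^2\log s$ because $b_n\to\infty$; hence the right-hand side tends to zero for every fixed $\epsilon>0$. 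Thus $\max_j|h_n(\boldsymbol{\theta}_j)-h(\boldsymbol{\theta}_j)|=o_P(1)$.

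Next I would transfer this control from the net to all of $K_s$. For $h$ this is immediate: the hypothesized Lipschitz bound gives $|h(\boldsymbol{\theta})-h(\boldsymbol{\theta}_j)|\le c_6 s\delta$ whenever $\|\boldsymbol{\theta}-\boldsymbol{\theta}_j\|_\infty\le\delta$, which, with the choice $\delta=s^{-1}$, is just a constant (and can be made arbitrarily small by replacing $\delta$ with $\epsilon/s$, at the cost of a slightly larger but still manageable net). For $h_n$ the key is convexity: a convex function on a ball is automatically Lipschitz on any strictly smaller ball, with a Lipschitz constant that can be bounded in terms of its oscillation on the larger ball. Concretely, I would thicken the net to a slightly larger compact set $K_s^+\subset B_0(n)$ containing $K_s$ in its interior, apply the concentration bound at finitely many extra anchor points in $K_s^+\setminus K_s$, and deduce that on the event where $h_n$ is close to $h$ at all grid points (including the anchors) the values of $h_n$ on $K_s^+$ are uniformly bounded; convexity then implies that $h_n$ is Lipschitz on $K_s$ with a constant of order $s$ (matching the Lipschitz constant of $h$). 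This gives, on that event,
\begin{equation*}
\sup_{\boldsymbol{\theta}\in K_s}|h_n(\boldsymbol{\theta})-h(\boldsymbol{\theta})|
\le \max_{1\le j\le N}|h_n(\boldsymbol{\theta}_j)-h(\boldsymbol{\theta}_j)|+(\mathrm{Lip}(h_n)+c_6 s)\delta.
\end{equation*}
Choosing $\delta=\delta_n\to 0$ slowly enough that $s\delta_n\to 0$ and the net cardinality is still absorbed by $b_n s^2\log s$ finishes the proof.

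The main obstacle will be the convexity step, i.e., obtaining a deterministic Lipschitz bound for the random convex function $h_n$ on $K_s$ that holds with probability tending to one. This needs to be done without assuming any smoothness of $h_n$, using only the convex-function fact that oscillation on a ball controls the gradient on a concentric smaller ball, and it is why one has to enlarge the net to $K_s^+$ rather than work directly on $K_s$. Once this Lipschitz control is in place, the rest is a routine chaining together of the pointwise concentration, the union bound, and the Lipschitz bound for $h$.
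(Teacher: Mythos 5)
The paper itself does not prove this statement: it is imported verbatim as Lemma~4 from the supplementary material of Fan et al.\ (2014) and stated ``for the sake of completeness,'' so there is no in-paper proof to compare yours against. Judged on its own terms, your proposal follows the standard route for quantitative convexity lemmas (pointwise concentration $+$ union bound over a polynomial-size net $+$ convexity to upgrade to uniformity), and the bookkeeping you do is right: with an $\ell_\infty$-net of mesh $\delta=\epsilon/s$ one has $\log N = O\bigl(s\log s + s\log(1/\epsilon)\bigr) = o(b_n s^2\log s)$, so the grid maximum is $o_P(1)$, and the Lipschitz hypothesis on $h$ handles the deterministic side of the interpolation.

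The one step you correctly flag as the obstacle is also the one you have not actually carried out: deducing that $h_n$ is \emph{uniformly bounded} on the enlarged set $K_s^+$ from its values at ``finitely many extra anchor points.'' For a convex function, being controlled at anchor points only controls it on the \emph{convex hull} of those anchors, and in dimension $s$ the convex hull of polynomially many points in a Euclidean ball of radius $R$ cannot contain a concentric ball of radius $cR$; you need the anchors to form a fine net of the bounding sphere, which has cardinality $e^{O(s)}$. This is not fatal --- $e^{O(s)}\exp\bigl(-c_5\epsilon b_n s^2\log s\bigr)\to 0$ still, so the union bound absorbs exponentially many anchors --- but it must be said, together with the companion lower bound obtained by the reflection identity $h_n(\boldsymbol{\theta})\geq 2h_n(\boldsymbol{\theta}_0)-h_n(2\boldsymbol{\theta}_0-\boldsymbol{\theta})$ about an interior anchor $\boldsymbol{\theta}_0$. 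Once the oscillation of $h_n$ on $K_s^+$ is bounded by $O(s^{3/2})+o_P(1)$ (the oscillation of $h$ plus the grid error), the gap between $K_s$ and $\partial K_s^+$ is of order $\sqrt{s}$, so the convexity-Lipschitz bound gives $\mathrm{Lip}_{\ell_2}(h_n)=O(s)$ on $K_s$ and your final chaining display closes with $\delta=o(s^{-3/2})$. An alternative that avoids the boundedness step altogether is the two-point interpolation trick: write $\boldsymbol{\theta}=(1-t)\boldsymbol{\theta}_j+t\boldsymbol{\theta}^\ast$ and $\boldsymbol{\theta}_j=\tfrac{1}{1+t}\boldsymbol{\theta}+\tfrac{t}{1+t}\boldsymbol{\theta}^{\ast\ast}$ with $\boldsymbol{\theta}^\ast,\boldsymbol{\theta}^{\ast\ast}$ in the enlarged set, which yields a self-bounding inequality of the form $D_n\leq(1+2t)A_n+tD_n+Cs\delta$ for the uniform error $D_n$ and grid error $A_n$; solving it with $t$ a fixed constant in $(0,1)$ gives the result directly. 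Either way, the approach is sound; only this boundedness-from-anchors step needs to be written out rather than asserted.
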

	
	\begin{proof}[Proof of the theorem]
		Let us now prove our main theorem, 
		extending the line of arguments in the proof of Theorem 3 of Fan et al. \cite{Fan/etc:2014}. 
		Define $Q_{n,\gamma}^\ast(\boldsymbol{\theta}) 
		= n \left[Q_{n,\gamma,\lambda_n}(\boldsymbol{\beta}_1, \boldsymbol{0}) - Q_{n,\gamma, \lambda_n}(\boldsymbol{\beta}_{01}, \boldsymbol{0})\right]$,
		which is then minimized at $\widehat{\boldsymbol{\theta}}_n 
		= \boldsymbol{V}_n^{-1}\left(\widehat{\boldsymbol{\beta}}_1^o - \boldsymbol{\beta}_{01}\right)$,
		because $Q_{n,\gamma,\lambda_n}(\boldsymbol{\beta}_1, \boldsymbol{0})$ is minimized at 
		$\boldsymbol{\beta}_1=\widehat{\boldsymbol{\beta}}_1^o$.
		Now, we study $Q_{n,\gamma}^\ast(\boldsymbol{\theta})$ by decomposing it into 
		its mean and centralized stochastic component as 
		$
		Q_{n,\gamma}^\ast(\boldsymbol{\theta}) = M_n(\boldsymbol{\theta}) + T_n(\boldsymbol{\theta}),\nonumber
		$
		where 
		\begin{equation}
			\begin{aligned}
				M_n(\boldsymbol{\theta}) &=E[Q_{n,\gamma}^\ast(\boldsymbol{\theta})] 
				= nE\left[L_{n,\gamma}(\boldsymbol{\beta}_1, \boldsymbol{0}) - L_{n,\gamma}(\boldsymbol{\beta}_{10}, \boldsymbol{0})\right] \\
				&~~~~~~~~~~~~~~~~~~~~~~~~~~+ n\lambda_n\sum\limits_{j=1}^{p} w_{j}\left(\left\vert \beta_{0j} + (\boldsymbol{V}_n\boldsymbol{\theta})_j\right\vert 
				- \left\vert \beta_{0j}\right\vert\right),
				\nonumber\\
				T_n (\boldsymbol{\theta}) &=  Q_{n,\gamma}^\ast(\boldsymbol{\theta}) - E[Q_{n,\gamma}^\ast(\boldsymbol{\theta})]
				= r_n(\boldsymbol{\theta}) - \boldsymbol{H}_{\gamma}^{(1)}(\boldsymbol{\theta}_0)\boldsymbol{Z}_n\boldsymbol{\theta}.
			\end{aligned}
		\end{equation}
		Here, we have used $E[\boldsymbol{H}_{\gamma}^{(1)}(\boldsymbol{\theta}_0)\boldsymbol{Z}_n\boldsymbol{\theta}]=\boldsymbol{0}$
		and $r_n(\boldsymbol{\theta})$ as in Lemma \ref{LEM:L3}.
		
		Let us first consider the first term in $M_n(\boldsymbol{\theta})$ for $\boldsymbol{\theta}\in B_0(n)$ 
		and proceed as in the proof of (\ref{EQ:A.1}) to get
		\begin{eqnarray}
			nE\left[L_{n,\gamma}(\boldsymbol{\beta}_1, \boldsymbol{0}) - L_{n,\gamma}(\boldsymbol{\beta}_{10}, \boldsymbol{0})\right]  
			= \frac{1}{2}  \boldsymbol{\theta}^T(\boldsymbol{Z}_n^TE[\boldsymbol{H}_{\gamma}^{(2)}(\boldsymbol{\beta}_0)]\boldsymbol{Z}_n)\boldsymbol{\theta}  +o(1)
			= \frac{1}{2}||\boldsymbol{\theta}||_2^2 + o(1).
			\nonumber
		\end{eqnarray}
		Next, considering the second term in $M_n(\boldsymbol{\theta})$ for $\boldsymbol{\theta}\in B_0(n)$, 
		we note that Assumption (A2), (A3) and (A6) implies
		$$
		||\boldsymbol{V}_n\boldsymbol{\theta}||_{\infty} \leq ||\boldsymbol{V}_n\boldsymbol{\theta}||_2
		\leq C n^{-1/2} ||\boldsymbol{\theta}||_2 = o\left(\min\limits_{1\leq j \leq s}|\beta_{j0}|\right), 
		$$
		and hence $sign(\boldsymbol{\beta}_{01}+ \boldsymbol{V}_n\boldsymbol{\theta}) = sign(\boldsymbol{\beta}_{01})$ and 
		$$
		\sum\limits_{j=1}^{p} w_{j}\left(\left\vert \beta_{0j} + (\boldsymbol{V}_n\boldsymbol{\theta})_j\right\vert 
		- \left\vert \beta_{0j}\right\vert\right) = \widetilde{\boldsymbol{w}_0}^T\boldsymbol{V}_n\boldsymbol{\theta}.
		$$
		Therefore, for any $\boldsymbol{\theta}\in B_0(n)$, we have
		\begin{eqnarray}
			M_n(\boldsymbol{\theta}) = \frac{1}{2}||\boldsymbol{\theta}||_2^2 + n \lambda_n\widetilde{\boldsymbol{w}_0}^T\boldsymbol{V}_n\boldsymbol{\theta} + o(1).
			\label{EQ:Ct.1}
		\end{eqnarray}
		This leads to the expression
		\begin{eqnarray}
			Q_{n,\gamma}^\ast(\boldsymbol{\theta}) 
			&=& \frac{1}{2}||\boldsymbol{\theta}||_2^2 + n \lambda_n\widetilde{\boldsymbol{w}_0}^T\boldsymbol{V}_n\boldsymbol{\theta} 
			- \boldsymbol{H}_{\gamma}^{(1)}(\boldsymbol{\theta}_0)\boldsymbol{Z}_n\boldsymbol{\theta}  + r_n(\boldsymbol{\theta}) + o(1)
			\nonumber\\
			&=& \frac{1}{2} ||\boldsymbol{\theta} - \boldsymbol{\eta}_n||_2^2 - \frac{1}{2}||\boldsymbol{\eta}_n||_2^2 + r_n(\boldsymbol{\theta}) + o(1),
			\label{EQ:Ct.2}
		\end{eqnarray}
		with $\boldsymbol{\eta}_n = 
		\left[\boldsymbol{H}_{\gamma}^{(1)}(\boldsymbol{\theta}_0)\boldsymbol{Z}_n - n \lambda_n\boldsymbol{V}_n\widetilde{\boldsymbol{w}_0}\right]$,
		which has an asymptotic normal distribution. 
		This is because, by the central limit theorem we have 
		$\boldsymbol{u}^T[\boldsymbol{Z}_n^T\boldsymbol{\Omega}_n\boldsymbol{Z}_n]^{-1/2} \boldsymbol{H}_{\gamma}^{(1)}(\boldsymbol{\theta}_0)\boldsymbol{Z}_n$
		asymptotically follows a standard normal distribution for any $\boldsymbol{u}\in \mathbb{R}^s$ with $\boldsymbol{u}^T\boldsymbol{u}=1$, 
		and hence 
		\begin{eqnarray}
			\boldsymbol{u}^T[\boldsymbol{Z}_n^T\boldsymbol{\Omega}_n\boldsymbol{Z}_n]^{-1/2} \left[ \boldsymbol{\eta}_n 
			+ n\lambda_n\boldsymbol{V}_n\widetilde{\boldsymbol{w}_0} \right]
			\mathop{\rightarrow}^{\mathcal{L}} N(0,1).
			\label{EQ:Ct.3}
		\end{eqnarray}
		This also ensures that $\boldsymbol{\eta}_n$ is bounded in $\ell_2$-norm as 
		\begin{eqnarray}
			||\boldsymbol{\eta}_n||_2 &\leq&  
			||\boldsymbol{H}_{\gamma}^{(1)}(\boldsymbol{\theta}_0)\boldsymbol{Z}_n||_2 + n \lambda_n||\boldsymbol{V}_n\widetilde{\boldsymbol{w}_0}||_2
			\nonumber\\
			&\leq& O_P(\sqrt{s}) + C\lambda_n\sqrt{n}||\widetilde{\boldsymbol{w}_0}||_2
			~~~~~~\mbox{[Using Assumptions (A2)-(A3)]}
			\nonumber\\
			&=& C\sqrt{s}\left[1 + O_P(1)\right],
			~~~~~~~~~~~~\mbox{[Using Assumption (A6)].}
			\label{EQ:Ct.3a}
		\end{eqnarray}
		
		Finally, we will now show that this quantity $\boldsymbol{\eta}_n$ is close to the minimizer $\widehat{\boldsymbol{\theta}}_n$ 
		of $Q_{n,\gamma}^\ast(\boldsymbol{\theta})$, with probability tending to one. 
		To this end, we need to investigate the quantity $r_n(\boldsymbol{\theta})$ 
		in the stochastic term $T_n(\boldsymbol{\theta})$ for $\boldsymbol{\theta}\in B_0(n)$ with a large enough $c^* \gg C$. 
		For this purpose, using Lemma \ref{LEM:L3}, 
		we get a sequence $b_n \rightarrow \infty$ such that, for any $\epsilon>0$, 
		$$
		P\left(|r_n(\boldsymbol{\theta})| \geq \epsilon \right) \leq \exp\left[-C\epsilon b_n s (\log s)\right].
		$$
		Then, to apply Lemma \ref{LEM:L4}, we note that $r_n(\boldsymbol{\theta})$ can be written as 
		$r_n(\boldsymbol{\theta}) = h_n(\boldsymbol{\theta}) - h(\boldsymbol{\theta}) +o(1)$ for 
		$$
		h_n(\boldsymbol{\theta}) = Q_{n,\gamma}^\ast(\boldsymbol{\theta}) - n \lambda_n\widetilde{\boldsymbol{w}_0}^T\boldsymbol{V}_n\boldsymbol{\theta} 
		+ \boldsymbol{H}_{\gamma}^{(1)}(\boldsymbol{\theta}_0)\boldsymbol{Z}_n\boldsymbol{\theta},
		~~~~\mbox{ and }~~
		h(\boldsymbol{\theta}) = ||\boldsymbol{\theta}||_2^2.
		$$
		By definition, these functions $h_n(\boldsymbol{\theta})$ and $h(\boldsymbol{\theta})$ are convex on $B_0(n)$ and,
		for any $\boldsymbol{\theta}_1, \boldsymbol{\theta}_2 \in B_0(n)$, we have (using Assumption (A3))
		\begin{eqnarray}
			\left|h(\boldsymbol{\theta}_1) - h(\boldsymbol{\theta}_2)\right| &=& 
			\left|(\boldsymbol{\theta}_1+ \boldsymbol{\theta}_2)^T(\boldsymbol{\theta}_1 -  \boldsymbol{\theta}_2)\right|
			\nonumber\\
			&\leq& ||\boldsymbol{\theta}_1+ \boldsymbol{\theta}_2||_2 ||\boldsymbol{\theta}_1 - \boldsymbol{\theta}_2||_2
			\leq C s ||\boldsymbol{\theta}_1 -  \boldsymbol{\theta}_2||_{\infty}.\nonumber
		\end{eqnarray}
		Thus, all the conditions of Lemma \ref{LEM:L4} are satisfied and we have 
		$$
		\sup_{K_s}\left| r_n(\boldsymbol{\theta})\right| = o_P(1),
		$$
		for any compact set  $K_s =\left\{ \boldsymbol{\theta}\in \mathbb{R}^s : ||\boldsymbol{\theta}||_2 \leq c_7\sqrt{s} \right\} \subset B_0(n)$ 
		with $c_7 \in (0, c^*)$.
		We choose $c_7$ large enough such that, for each $s$, the corresponding set $K_s$ cover the ball 
		$B_1(n)$ centered at $\boldsymbol{\eta}_n$ and radius $\epsilon$, an arbitrary fixed positive constant, with probability tending to one.
		Then, we get 
		\begin{eqnarray}
			\Delta_n = \sup\limits_{\boldsymbol{\theta}\in B_1(n)} |r_n(\boldsymbol{\theta})| 
			\leq \sup_{K_s}\left| r_n(\boldsymbol{\theta})\right| = o_P(1).
			\label{EQ:Ct.4}
		\end{eqnarray}
		Next, to study the behavior of $Q_{n,\gamma}^\ast(\boldsymbol{\theta})$ outside $B_1(n)$, 
		we write a vector outside $B_1(n)$ as $\boldsymbol{\theta} = \boldsymbol{\eta}_n + a \boldsymbol{e} \in \mathbb{R}^s$ 
		for $\boldsymbol{e}\in \mathbb{R}^s$ is a unit vector and $a$ is a constant satisfying $a>\epsilon$, the radius of $B_1(n)$.
		Let $\boldsymbol{\theta}^\ast$ be the boundary point of $B_1(n)$ that lies on the line segment joining $\boldsymbol{\eta}_n$ and $\boldsymbol{\theta}$
		so that $\boldsymbol{\theta}^\ast = \boldsymbol{\eta}_n + \epsilon \boldsymbol{e} = (1 - \epsilon/a) \boldsymbol{\eta}_n + (\epsilon/a)\boldsymbol{\theta}$.
		Now, by convexity of $Q_{n,\gamma}^\ast(\boldsymbol{\theta})$, along with (\ref{EQ:Ct.2}), we get \\
		\begin{eqnarray}
			\frac{\epsilon}{a} Q_{n,\gamma}^\ast(\boldsymbol{\theta}) + \left(1 - \frac{\epsilon}{a}\right) Q_{n,\gamma}^\ast(\boldsymbol{\eta}_n)
			\geq Q_{n,\gamma}^\ast(\boldsymbol{\theta}^\ast) 
			&\geq& \frac{1}{2}\epsilon^2 - \frac{1}{2}||\boldsymbol{\eta}_n||_2^2 - \Delta_n 
			\nonumber\\
			&\geq & \frac{1}{2}\epsilon^2 + Q_{n,\gamma}^\ast(\boldsymbol{\eta}_n) - 2\Delta_n.
			\nonumber
		\end{eqnarray}
		Since $\epsilon<a$, using  (\ref{EQ:Ct.4}) we get, for large enough $n$, 
		$$
		\inf\limits_{||\boldsymbol{\theta} - \boldsymbol{\eta}_n||_2>\epsilon} Q_{n,\gamma}^\ast(\boldsymbol{\theta})
		\geq Q_{n,\gamma}^\ast(\boldsymbol{\eta}_n) + \frac{a}{\epsilon}\left[\frac{\epsilon^2}{2} - o_P(1)\right] 
		> Q_{n,\gamma}^\ast(\boldsymbol{\eta}_n).
		$$
		Therefore, the minimizer $\widehat{\boldsymbol{\theta}}_n$ of $Q_{n,\gamma}^\ast(\boldsymbol{\theta})$ must
		lie within the ball $B_1(n)$ with probability tending to one, i.e., for any arbitrarily chosen $\epsilon>0$,
		$$
		P\left(||\widehat{\boldsymbol{\theta}}_n - \boldsymbol{\eta}_n||_2 > \epsilon\right) \rightarrow 0.
		$$
		This result, combined with (\ref{EQ:Ct.3}) via Slutsky's theorem, yields
		\begin{eqnarray}
			\boldsymbol{u}^T[\boldsymbol{Z}_n^T\boldsymbol{\Omega}_n\boldsymbol{Z}_n]^{-1/2} \left[ \widehat{\boldsymbol{\theta}}_n 
			+ n\lambda_n\boldsymbol{V}_n\widetilde{\boldsymbol{w}_0} \right]
			\mathop{\rightarrow}^{\mathcal{L}} N(0,1),
			\nonumber
		\end{eqnarray}
		from which the theorem follows by the fact that $\widehat{\theta}_n = \boldsymbol{V}_n^{-1}\left(\widehat{\beta}_1^o - \beta_{01}\right)$.
	\end{proof}

	\subsection{Proof of Theorem 4.5}
	\label{APP:proof.4}
	
	Following the idea of the proof of Theorems 4.1 and 4.2,
	let us first consider the minimization of $\widehat{Q}_{n,\gamma, \lambda}(\boldsymbol{\beta})$ 
	over the oracle subspace $\left\{ \boldsymbol{\beta} = (\boldsymbol{\beta}_1^T, \boldsymbol{\beta}_2^T)^T \in \mathbb{R}^p
	:  \boldsymbol{\beta}_2 = \boldsymbol{0}_{p-s} \right\}$.
	
	Let us start with a $=\boldsymbol{\beta} = (\boldsymbol{\beta}_1^T, \boldsymbol{0}_{p-s})^T$
	with $\boldsymbol{\beta}_1 = \boldsymbol{\beta}_{01} + \delta_n^* \boldsymbol{v}_1 \in \mathbb{R}^s$
	where $||\boldsymbol{v}_1||_2 = C$ for some large enough constant $C>0$.
	For this particular $\boldsymbol{\beta}$, we have 
	\begin{eqnarray}
		n\left[\widehat{Q}_{n,\gamma, \lambda}(\boldsymbol{\beta}) - \widehat{Q}_{n,\gamma, \lambda}(\boldsymbol{\beta}_0)\right] 
		= I_1(\boldsymbol{v}_1) + I_2(\boldsymbol{v}_1),
		\label{EQ:ThP4.1}
	\end{eqnarray}
	where 
	\begin{eqnarray}
		I_1(\boldsymbol{v}_1) &=& n \left[L_{n,\gamma}(\boldsymbol{\beta}_{10}+\delta_n^\ast\boldsymbol{v}_1 , \boldsymbol{0}) 
		- L_{n,\gamma}(\boldsymbol{\beta}_{10}, \boldsymbol{0})\right], 
		\nonumber\\
		I_2 (\boldsymbol{v}_1) &=&  n\lambda_n\sum\limits_{j=1}^{p} \widehat{w}_{j}\left(\left\vert \beta_{0j} + \delta_n^*(\boldsymbol{v}_1)_j\right\vert 
		- \left\vert \beta_{0j}\right\vert\right).
		\nonumber
	\end{eqnarray}
	But, as in the proof of Theorem 4.1 in Subsection \ref{APP:proof.1}, we have 
	\begin{eqnarray}
		E\left[I_1(\boldsymbol{v}_1)\right]
		&\geq& \frac{1}{2} c_0c_1n||\delta_n^* \boldsymbol{v}_1||_2^2, 
		\label{EQ:ThP4.2}
	\end{eqnarray}
	and also, with probability at least $1 - n ^{-cs}$,  we have
	\begin{eqnarray}
		\left|I_1(\boldsymbol{v}_1) - E\left[I_1(\boldsymbol{v}_1)\right]\right|
		\leq nZ_n(C\delta_n^*) \leq 2L\delta_n^*\sqrt{sn(\log n)} ||\boldsymbol{v}_1||_2, 
		\nonumber
	\end{eqnarray}
	since $\delta_n^* = o(k_n^{-1}s^{-1/2})$ by our assumptions.
	Then, by triangle inequality, we get the lower bound of $I_1$ as
	\begin{eqnarray}
		I_1(\boldsymbol{v}_1) \geq \frac{1}{2} c_0c_1n(\delta_n^*)^2 ||\boldsymbol{v}_1||_2^2
		- 2L\delta_n^*\sqrt{sn(\log n)} ||\boldsymbol{v}_1||_2. 
		\label{EQ:ThP4.3}
	\end{eqnarray}
	Next, we can bound the term $I_2$ using the triangle and Cauchy-Swartz inequalities as  
	\begin{eqnarray}
		|I_2(\boldsymbol{v}_1)| \leq n\lambda_n\sum\limits_{j=1}^{p} \left\vert\widehat{w}_{j} \delta_n^*(\boldsymbol{v}_1)_j\right\vert 
		\leq n \lambda_n\delta_n^*||\widehat{\boldsymbol{w}}_0||_2||\boldsymbol{v}_1||_2. 
		\label{EQ:ThP4.4}
	\end{eqnarray}
	However, by Assumptions (A7)-(A8), we have
	\begin{eqnarray}
		||\widehat{\boldsymbol{w}}_0||_2 &\leq& ||\widehat{\boldsymbol{w}}_0 - \boldsymbol{w}_0^*||_2 + ||\boldsymbol{w}_0^\ast||_2
		\nonumber\\
		&\leq& c_5 ||\widetilde{\boldsymbol{\beta}}_1 - \boldsymbol{\beta}_1||_2 + ||\boldsymbol{w}_0^*||_2 
		\leq C_2c_5\sqrt{s (\log p)/n} + ||\boldsymbol{w}_0^*||_2.
	\end{eqnarray}
	Combining all the above results, we finally get 
	\begin{eqnarray}
		n\left[\widehat{Q}_{n,\gamma, \lambda}(\boldsymbol{\beta}) - \widehat{Q}_{n,\gamma, \lambda}(\boldsymbol{\beta}_0)\right] 
		&\geq& \frac{1}{2} c_0c_1n(\delta_n^*)^2 ||\boldsymbol{v}_1||_2^2
		- 2L\delta_n^*\sqrt{sn(\log n)} ||\boldsymbol{v}_1||_2 
		\nonumber\\
		&&~~ -  n \lambda_n\delta_n^*\left(C_2c_5\sqrt{s (\log p)/n} + + ||\boldsymbol{w}_0^*||_2\right)||\boldsymbol{v}_1||_2.
		\nonumber
	\end{eqnarray}
	Taking $||\boldsymbol{v}_1||_2 = C$ large enough, with probability tending to one, we can obtain
	$$
	n\left[\widehat{Q}_{n,\gamma, \lambda}(\boldsymbol{\beta}) - \widehat{Q}_{n,\gamma, \lambda}(\boldsymbol{\beta}_0)\right] >0,
	~~\mbox{ i.e., }~
	\widehat{Q}_{n,\gamma, \lambda}(\boldsymbol{\beta}_{10}+\delta_n^\ast\boldsymbol{v}_1, \boldsymbol{0}) 
	> \widehat{Q}_{n,\gamma, \lambda}(\boldsymbol{\beta}_{01}, \boldsymbol{0}).
	$$
	Therefore, with probability tending to one, there exists a minimizer $\widehat{\boldsymbol{\beta}}_1$ 
	of $\widehat{Q}_{n,\gamma, \lambda}(\boldsymbol{\beta}_{1}, \boldsymbol{0})$ such that
	$||\widehat{\boldsymbol{\beta}}_1 - \boldsymbol{\beta}_1||_2 \leq C_3\delta_n^*$ for some constant $C_3>0$.
	Then, to complete the proof of the theorem, it is enough to show that 
	$(\widehat{\boldsymbol{\beta}}_1^T, \boldsymbol{0})^T$ is indeed also a global minimizer of 
	$\widehat{Q}_{n,\gamma,\lambda}(\boldsymbol{\beta})$ over the whole $\mathbb{R}^p$.
	
	To this end, as in Theorem 4.2, it is enough to show via the KKT condition that, 
	with probability tending to one,  
	\begin{eqnarray}
		\left|\left| n^{-1} \boldsymbol{X}_2^T\boldsymbol{H}_{\gamma}^{(1)}(\widehat{\boldsymbol{\beta}}_1)\right|\right|_\infty
		< \lambda_n \min(|\widehat{\boldsymbol{w}}_1|).
		\label{EQ:ThP4.5}
	\end{eqnarray}
	To prove (\ref{EQ:ThP4.5}), we note that $\beta_{0j} = 0$ for all $j=s+1, \ldots, p$, 
	and hence ${w}_j^\ast =w(0+)$. But, by Assumption (A7), we have $|\widetilde{\beta}_{j}|\leq C_2 \sqrt{s(\log p)/n}$.
	Therefore, using Assumption (A8), we get
	$$
	\min(|\widehat{\boldsymbol{w}}_1|) =\min_{j>s} w(|\widetilde{\beta}_{j}|)
	\geq w(C_2 \sqrt{s(\log p)/n})  > \frac{1}{2} w(0+) = \frac{1}{2} \min(|{\boldsymbol{w}}_1^\ast|).
	$$  
	Next, we apply Lemma \ref{LEM:L2} under the assumptions of Theorem 4.2 with $\delta_n=\delta_n^*$ 
	to get, with probability at least $1- o(p^{-c})$,
	\begin{eqnarray}
		\sup_{||\boldsymbol{\beta}_1 - \boldsymbol{\beta}_{01}||_2 \leq C_3\delta_n^*}
		\left|\left| n^{-1} \boldsymbol{X}_2^T\boldsymbol{H}_{\gamma}^{(1)}({\boldsymbol{\beta}}_1)\right|\right|_\infty
		< \lambda_n \min(|{\boldsymbol{w}}_1^\ast|)
		< \lambda_n \min(|\widehat{\boldsymbol{w}}_1|).\nonumber
	\end{eqnarray}
	But, since $||\widehat{\boldsymbol{\beta}}_1 - \boldsymbol{\beta}_{01}||_2 \leq C_3\delta_n^*$
	with probability tending to one, the above results also implies (\ref{EQ:ThP4.5})
	completing the proof of the theorem.
	
	\subsection{Proof of Theorem 4.6}
	\label{APP:proof.5}
	
	First, we note that, by Theorem 4.5, with probability tending to one, 
	there exists a global minimizer $(\widehat{\boldsymbol{\beta}}_1^T, \boldsymbol{0})^T$  of 
	$\widehat{Q}_{n,\gamma,\lambda}(\boldsymbol{\beta})$ such that 
	$||\widehat{\boldsymbol{\beta}}_1 - \boldsymbol{\beta}_{01}||_2 \leq C_3\delta_n^*$.
	
	Then, to show the asymptotic normality of $\widehat{\boldsymbol{\beta}}_1$,
	we reconsider the notation used in Theorem 4.3 and its proof and define 
	$$
	\widehat{Q}_{n,\gamma}^\ast(\boldsymbol{\theta}) 
	= n \left[\widehat{Q}_{n,\gamma,\lambda_n}(\boldsymbol{V}_n\boldsymbol{\theta} + \boldsymbol{\beta}_{10}, \boldsymbol{0}) 
	- \widehat{Q}_{n,\gamma, \lambda_n}(\boldsymbol{\beta}_{01}, \boldsymbol{0})\right].
	$$
	Then, clearly a global minimizer of $\widehat{Q}_{n,\gamma}^\ast(\boldsymbol{\theta}) $ 
	is $\widehat{\boldsymbol{\theta}}_n = \boldsymbol{V}_n^{-1}\left(\widehat{\boldsymbol{\beta}}_1 - \boldsymbol{\beta}_{01}\right)$.
	
	Then, under Assumption (A9), one can follow the arguments of the proof of Theorem 5 of Fan et al. \cite{Fan/etc:2014}
	to show that 
	$\sup\limits_{\boldsymbol{\theta}\in B_0(n)} 
	\left|\widehat{Q}_{n,\gamma}^\ast(\boldsymbol{\theta}) - {Q}_{n,\gamma}^\ast(\boldsymbol{\theta}) \right| = o_P(1),$
	and then $\widehat{\boldsymbol{\theta}} - \boldsymbol{\eta}_n = o_P(1)$, 
	where $\boldsymbol{\eta}_n$ is as defined in the proof of Theorem 4.3 in Subsection \ref{APP:proof.3}.
	Then, the present theorem follows from the asymptotic properties of $\boldsymbol{\eta}_n$ 
	as studied in this supplementary material.
	
	\section{Computational Algorithm}
	We develop in this section the computational algorithm used to fit the general AW-DPD-LASSO estimator. We propose an iterative optimization algorithm in two steps. For any given solution $(\boldsymbol{\beta}^{(m)},\sigma^{(m)})$, we first update the  value of $\boldsymbol{\beta}$ and secondly we update the value of $\sigma.$ The first step is computed using MM-algorithm and the second one is carried out by by approximating the solution of the estimating equations of $\sigma$.
	\subsection{MM-algorithm for miization of the DPD loss in $\bold\beta$} \label{subsection5.1}
	MM-algorithm (Hunter and Lange \cite{Hunter}) is one of the most common strategies for  constructing optimization algorithms. 
	It operates by creating a surrogate function that minorizes (or majorizes) the objective function. 
	When the surrogate function is optimized, the objective function is driven uphill or downhill as needed. 
	Let be $h(\nu)$ a real-valued  objective function.  The function $h_{MM}(\nu|\nu^{(m)})$  is said to majorize $h(\nu)$ 
	at the point $\nu^{(m)}$ if 
	\begin{equation} \label{MMeq}
		h_{MM}(\nu^{(m)}|\nu^{(m)}) = h(\nu^{(m)}), 
		~~~~\mbox{ and }~~~~ h_{MM}(\nu|\nu^{(m)}) \geq h(\nu).
	\end{equation}
	%
	If $\nu^{(m)}$ is the current solution at step $m$, for $m=0,1...$, then MM-algorithm proposes to iteratively update the solution as 
	$$ 
	\nu^{(m+1)} = \operatorname{arg}\operatorname{min}_{\nu} h_{MM}(\nu | \nu^{(m)}). 
	$$
	From conditions (\ref{MMeq}), the descending property can be drawn
	\begin{equation} \label{descentprop}
		h (\nu^{(m+1)}) \leq h_{MM}(\nu^{(m+1)} | \nu^{(m)}) \leq  h_{MM}(\nu^{(m)} | \nu^{(m)}) = h(\nu^{(m)}),
	\end{equation}  
	and hence, it can be shown that MM-algorithm converges to the required minimizer of $h(\nu)$ (Proposition 11, Castilla et al. \cite{castilla}).
	The descent property (\ref{descentprop}) lends an MM-algorithm remarkable numerical stability. 
	Furthermore, it is not necessary to obtain a strict local minimum of the majorization function to ensure the descent property (\ref{descentprop}), 
	but it suffices to downward $ h_{MM}(\nu | \nu^{(m)}) $ and so any optimization algorithm could be used to minimize it. 
	To apply this technique to our DPD loss given in Equation (7) of the main paper, as a function of $\bold\beta$ with a fixed $\sigma$, 
	we consider an alternative equivalent objective function given by
	\begin{equation}\label{dpdloss2}
		\tilde{L}_{n,\gamma,\lambda}(\boldsymbol{\beta}) = -\log\left[\frac{1}{n}\sum_{i=1}^{n}\exp\left(
		-\frac{\gamma}{2}\left(  \frac{y_{i}-\boldsymbol{x}_{i}^{T}\boldsymbol{\beta}%
		}{\sigma}\right)  ^{2}\right) \right].
	\end{equation}
	Note that, minimizing $\tilde{L}_{n,\gamma,\lambda}(\boldsymbol{\beta})$ with respect to $\bold\beta$ is equivalent 
	to minimizing the DPD loss $L_{n,\gamma,\lambda}(\boldsymbol{\beta},\sigma)$ given in Equation (7) of the main paper as a function of $\bold\beta$ with fixed $\sigma$. 
	Then, we can equivalently apply the MM-algorithm to $\tilde{L}_{n,\gamma,\lambda}(\boldsymbol{\beta})$ in (\ref{dpdloss2}),
	which can be further bounded by a quadratic function using the following Jensen inequality.
	$$
	\kappa(\boldsymbol{z}^T \boldsymbol{\nu}) = \kappa\left( \frac{z_i\nu_i^{(m)}}{\boldsymbol{z}^T \boldsymbol{\nu}^{(m)}} \nu_i \frac{\boldsymbol{z}^T \boldsymbol{\nu}^{(m)}}{\nu_i^{(m)}}\right) \leq \sum_{i} \frac{z_i \nu_i^{(m)}}{\boldsymbol{z}^T \boldsymbol{\nu}^{(m)}} \kappa\left(\nu_i \frac{\boldsymbol{z}^T \boldsymbol{\nu}^{(m)}}{\nu_i^{(m)}}\right),
	$$
	where $\kappa(\nu)$ is a convex function and  $\boldsymbol{z}$, $\boldsymbol{\nu}$ and $\boldsymbol{\nu^{(m)}}$ are positive vectors.
	For our purpose, let us take $\kappa(u)= - \log (u)$, $\boldsymbol{z} = \left(\frac{1}{n},..,\frac{1}{n}\right)$,
	\begin{align*}
		\nu_i & = \exp\left(\frac{-\gamma}{2} \left(\frac{y_i-\boldsymbol{x}_i^T\boldsymbol{\beta}}{\sigma^{(m)}}\right)^2\right),
		~~~~~\mbox{ and }~~
		\nu^{(m)}_i = \exp\left(\frac{-\gamma}{2} \left(\frac{y_i-\boldsymbol{x}_i^T\widehat{\boldsymbol{\beta}}^{(m)}}{\widehat{\sigma}^{(m)}}\right)^2\right),
	\end{align*}
	so that the above-mentioned Jensen inequality yields
	\begin{equation} \label{MMalg}
		\begin{aligned}
			\tilde{L}_{n,\gamma,\lambda}(\boldsymbol{\beta}) 
			& = -\log\left[\frac{1}{n}\sum_{i=1}^{n}\exp\left(
			-\frac{\gamma}{2}\left(  \frac{y_{i}-\boldsymbol{x}_{i}^{T}\boldsymbol{\beta}%
			}{\sigma^{(m)}}\right)  ^{2}\right) \right] \\
			&\leq  \sum_{i=1}^n \mu_i^{(m)}  \frac{1}{2} \left( \frac{y_i-\boldsymbol{x}_i^T \boldsymbol{\beta}}{\sigma^{(m)}}\right)^2 + \sum_{i=1}^n \mu_i^{(m)}  \log \left(\frac{1}{\mu_i^{(m)}}\right), 
		\end{aligned}
	\end{equation}
	with $$\mu_i^{(m)} = \exp\left(-\frac{\gamma}{2} \left(\frac{y_i-\boldsymbol{x}_i^T\widehat{\boldsymbol{\beta}}^{(m)}}{\widehat{\sigma}^{(m)}}\right)^2\right) \left[\sum_{l=1}^n \exp\left(-\frac{\gamma}{2} \left(\frac{y_l-\boldsymbol{x}_l^T\widehat{\boldsymbol{\beta}}^{(m)}}{\widehat{\sigma}^{(m)}}\right)^2\right)\right]^{-1}.$$
	Note that only the first term in (\ref{MMalg}) depends on $\boldsymbol{\beta}$, and so it suffices to minimize the quadratic loss
	\begin{equation}\label{surrogate1}
		U_{MM}(\boldsymbol{\beta}| \boldsymbol{\beta}^{(m)}, \sigma^{(m)}) =\sum_{i=1}^n \mu_i^{(m)}  \left( \frac{y_i-\boldsymbol{x}_i^T \boldsymbol{\beta}}{\sigma^{(m)}}\right)^2.
	\end{equation}
	with $$\mu_i^{(m)} = \exp\left(-\frac{\gamma}{2} \left(\frac{y_i-\boldsymbol{x}_i^T\widehat{\boldsymbol{\beta}}^{(m)}}{\widehat{\sigma}^{(m)}}\right)^2\right) \left[\sum_{l=1}^n \exp\left(-\frac{\gamma}{2} \left(\frac{y_l-\boldsymbol{x}_l^T\widehat{\boldsymbol{\beta}}^{(m)}}{\widehat{\sigma}^{(m)}}\right)^2\right)\right]^{-1}.$$
	Furthermore, if we transfer the data $(y_i, \boldsymbol{x}_i)$ to their weighted versions   
	$
	y_i^* = \sqrt{\frac{\mu_i^{(m)}}{\widehat{\sigma}^{(m)}}} y_i, 
	$ and 
	$\boldsymbol{x}_i^* = \sqrt{\frac{\mu_i^{(m)}}{\widehat{\sigma}^{(m)}}} \boldsymbol{x}_i,
	$
	then the function $U_{MM}(\boldsymbol{\beta}| \boldsymbol{\beta}^{(m)}, \sigma^{(m)})$ becomes just  the least-square loss for $y_i^*$ and $\boldsymbol{x}_i^*$, i.e., 
	$
	U_{MM}(\boldsymbol{\beta}| \boldsymbol{\beta}^{(m)}, \sigma^{(m)}) =\sum_{i=1}^n \left({y_i^* - \boldsymbol{x}_i^{*T}\boldsymbol{\beta}}\right)^2. 
	$
	This final function, even after considering the additional penalty term,  can easily be minimized by existing algorithms.
	
	\subsection{Computation of General Adaptive  LASSO with Least Squares Loss}
	\label{subsection5.2}
	We now discuss an algorithm for general adaptive LASSO with the least squares loss. 
	So, we consider the objective function
	\begin{equation}\label{problemU}
		U_{n,\gamma,\lambda}(\boldsymbol{\beta}) = \sum_{i=1}^n\left(y_i - \boldsymbol{x}_i^T\boldsymbol{\beta}\right)^2 + \lambda \sum_{j=1}^{p} \omega(| \tilde{\beta}_j|)\cdot|\beta_j|,
	\end{equation}
	where $\omega(\cdot)$ is a general weight function and $\tilde{\boldsymbol{\beta}}$ is a robust initial estimator. Note that the function $U_{n,\gamma,\lambda}$  depends only on $\beta$ (there is no $\sigma$).
	Zou  \cite{zou} showed the minimization problem, 
	$\operatorname{min}_{\boldsymbol{\beta}} U_{n,\gamma,\lambda}(\boldsymbol{\beta})$, 
	can be solved  by the standard LASSO algorithm applied to a weighted design matrix, 
	which is explicitly exhibited in the following Algorithm S1.\\
	
	\noindent \textbf{Algorithm S1} (Computing general adaptive LASSO estimator)
	\begin{enumerate}
		\item Define $\widehat{\omega}_j = \omega(| \tilde{\beta}_j|) $ and $\boldsymbol{x}_j^{*} =  \frac{\boldsymbol{x}_j}{\widehat{\omega}_j}$, $j = 1,2,...,p$
		\item Solve the LASSO problem 
		\begin{equation}\label{lassoclassic}
			\widehat{\boldsymbol{\beta}}^{*} = \operatorname{arg min}_{\boldsymbol{\beta}} \sum_{i=1}^n \left(y_i- \boldsymbol{x}_i^{*T}\boldsymbol{\beta} \right)^2 + \lambda \sum_{j=1}^p |\beta_j|.
		\end{equation}
		\item Output $\widehat{\boldsymbol{\beta}} = \widehat{\boldsymbol{\beta}}^{*}/\widehat{\omega}_j$.
	\end{enumerate}
	
	The problem in (\ref{lassoclassic}) is a standard one and can be solve by using the existing methods such as the coordinate descent approach 
	(Friedman et al.  \cite{Friedman}).
	Also, the selection of optimal $\lambda$ can be done through the Least Angle Regression (LARS) algorithm (Efron et al. \cite{efron}), 
	or using cross validation or any information criterion.
	
	\subsection{Computing the proposed AW-DPD-LASSO Estimator}
	
	We now explain the final computational algorithm for our robust  generalized AW-DPD-LASSO estimator  by minimizing  (8),
	Since a direct minimization of $Q_{n,\gamma,\lambda}(\boldsymbol{\beta}, \sigma)$ is computationally inefficient, an alternative optimization technique is proposed, combining the  MM-algorithm and 
	the computational algorithm for  general adaptive LASSO (as described in Algorithm S1 in Section \ref{subsection5.2}).
	
	Our algorithm for the computation of AW-DPD-LASSO estimator proceeds iteratively as follows. 
	Suppose that  a current solution at step $m$ is given by $(\widehat{\boldsymbol{\beta}}^{(m)}, \widehat{\sigma}^{(m)})$. 
	First, we fix $\widehat{\sigma}^{(m)}$ and we consider the minimization problem with respect to $\bold\beta$ only, following the discussions in Section \ref{subsection5.1},  as
	\begin{equation}
		\operatorname{min}_{\boldsymbol{\beta}} \tilde{Q}_{n,\gamma,\lambda}\left(\boldsymbol{\beta}| \boldsymbol{\beta}^{(m)}, \sigma^{(m)}\right) = \operatorname{min}_{\boldsymbol{\beta}} \left( \tilde{L}_{n,\gamma,\lambda}(\widehat{\sigma}^{(m)},\boldsymbol{\beta}) + \lambda \sum_{j=1}^{p} \omega(| \tilde{\beta}_j|)\cdot|\beta_j| \right).
	\end{equation}
	Using Jensen's inequality as in Section \ref{subsection5.1},
	$\tilde{Q}_{n,\gamma,\lambda}\big(\boldsymbol{\beta}| \boldsymbol{\beta}^{(m)}, \sigma^{(m)}\big)$ can be majorized by 
	$$ 
	U_{MM}\left(\boldsymbol{\beta}| \boldsymbol{\beta}^{(m)}, \sigma^{(m)}\right) + \lambda \sum_{j=1}^{p} \omega(| \tilde{\beta}_j|)\cdot|\beta_j|  
	=\sum_{i=1}^n \left({y_i^* - \boldsymbol{x}_i^{*T}\boldsymbol{\beta}}\right)^2 + \lambda \sum_{j=1}^{p} \omega(| \tilde{\beta}_j|)\cdot|\beta_j|, 
	$$
	with $y_i^*$ and $\boldsymbol{x}_i^*$ being as defined in Section \ref{subsection5.1}. 
	This final majorization function can easily be minimized using Algorithm S1 of Section 2 of the Appendix. 
	Thus, MM-algorithm can be applied to optimize $Q_{n,\gamma,\lambda}\left(\boldsymbol{\beta}| \boldsymbol{\beta}^{(m)}, \sigma^{(m)}\right)$
	iteratively until convergence.
	Once this MM algorithm converges, we update the solution 	for $\bold\beta$ as
	\begin{equation} \label{betaupdate}
		\widehat{\boldsymbol{\beta}}^{(m+1)} = \operatorname{arg min} 
		Q_{n,\gamma,\lambda}\left(\boldsymbol{\beta}| \boldsymbol{\beta}^{(m)}, \sigma^{(m)}\right).
	\end{equation} 
	
	It is important to note that the robust initial estimate $\tilde{\boldsymbol{\beta}}$ can also be updated at each iteration 
	by considering $\tilde{\boldsymbol{\beta}}=\widehat{\boldsymbol{\beta}}^{(m)}$, when the solution for $\boldsymbol{\beta}$ is updated as
	$\widehat{\boldsymbol{\beta}}^{(m+1)}$.

	Next, we consider the objective function corresponding to the AW-DPD-LASSO as a function of $\sigma$ only 
	(with $\bold\beta$ being fixed at the solution $\beta^{(m+1)}$) as given by  
	\begin{equation}
		\overline{Q_{n,\gamma,\lambda}}(\sigma) = L_{n,\gamma,\lambda}(\sigma,\widehat{\boldsymbol{\beta}}^{(m+1)}) + \lambda \sum_{j=1}^{p} \omega(| \tilde{\beta}_j|)\cdot|\widehat{\beta}^{(m+1)}_j| .
	\end{equation}
	The minimizer  $\widehat{\sigma}$ of $Q_{n,\gamma,\lambda}(\sigma)$ should make its first derivative zero. But,
	\begin{align*}
		\frac{\partial \overline{Q_{n,\gamma,\lambda}}(\sigma)}{\partial \sigma} 
		=  \frac{\sigma^{-\gamma-1}}{(2\pi)^{\gamma/2}} \bigg[&\frac{-\gamma}{\gamma+1} + \frac{\gamma+1}{n} \sum_{i=1}^n \exp\left(-\frac{\gamma}{2}u_i^{(m+1)2} \right)\\
		& - \frac{\gamma+1}{n} \sum_{i=1}^n \exp\left(-\frac{\gamma}{2} u_i^{(m+1)2} \right) u_i^{(m+1)2}  \bigg].
	\end{align*}
	where $ u_i^{(m+1)} = \left(\frac{y_i-\boldsymbol{x}_i^T\widehat{\boldsymbol{\beta}}^{(m+1)}}{\sigma}\right).$
	Thus, $\hat\sigma$ should satisfy the estimating equation 
	\begin{equation} 
		\frac{1}{n} \sum_{i=1}^n \exp\left(-\frac{\gamma}{2}u_i^{(m+1)2} \right) 
		-\frac{1}{n} \sum_{i=1}^n \exp\left(-\frac{\gamma}{2}u_i^{(m+1)2}  \right) 
		u_i^{(m+1)2}  =  \frac{\gamma}{(\gamma+1)^{3/2}}.
		\label{5.41}
	\end{equation}
	This equation (\ref{5.41}) has no explicit solution. 
	However, as in Ghosh and Majumdar \cite{ghosh2}, the solution of (\ref{5.41}) can be approximated  as
	\begin{equation}\label{sigmaupdate}
		(\widehat{\sigma}^{(m+1)})^2 = \left[\frac{1}{n} \sum_{i=1}^n w_i^{(m)} - \frac{\gamma}{(\gamma+1)^{3/2}} \right]^{-1}  \frac{1}{n} \sum_{i=1}^n w_i^{(m)} \left(y_i-\boldsymbol{x}_i^T\widehat{\boldsymbol{\beta}}^{(m+1)}\right)^2, 
	\end{equation}
	where 
	$ w_i^{(m)} 
	=  \exp\left(-\frac{\gamma}{2}\left(\frac{y_i-\boldsymbol{x}_i^T\widehat{\boldsymbol{\beta}}^{(m+1)}}{\widehat{\sigma}^{(m)}}\right)^2 \right).
	$
	We alternatively repeat the above procedures of separately updating the parameters $\bold\beta$ and $\sigma$
	until both the parameter values converges simultaneously in the same iteration, 
	and the final solution (after convergence) is then nothing but their AW-DPD-LASSO estimator. 
	
	\section{Additional tables: Simulations and data application}
	\label{APPB:Additional_results}

	We report our simulation results for Setting A with $p=1000$ and $p=500$, and Setting B with $p=500$ in Tables \ref{p1000e0signal1}-\ref{p500ex1signal3}, respectively. Table \ref{TAB:data1} contains the error measures for X-ray microanalysis (EPXMA) with the different proposed methods and Table \ref{tauscale} contains the median of $\tau$-scale of its residuals.

	\begin{table}[h]
		\centering
		\caption{Performance measures obtained by different methods for $p=1000$, Setting A and no outliers}
		\resizebox{\textwidth}{!}{
}
		\label{tauscale}
	\end{table}
	
	\section{Comparing Ad-DPD-LASSO and AW-DPD-LASSO estimators in terms of prediction performance}
	
	We have seen from our simulations presented in Section 6 that the Ad-DPD-LASSO estimator selects pretty better the model size than AW-DPD-LASSO. 
	In contrast, AW-DPD-LASSO is more accurate in estimating the regression coefficient, which is shown by generally lower MSES and MSEN values. 
	Finally, both performs similarly in estimating $\sigma$ which is significantly better than other existing algorithms considered. 
	We further compare the performance of Ad-DPD-LASSO and AW-DPD-LASSO, in terms of their prediction accuracy, 
	for varying  number of covariates and  contamination on proportions in the data. 
	
	For this purpose, we repeat the previous simulation exercise for Setting A, with varying $p$ and contamination proportion,
	and including similarly generated test sample to examine the accuracy of the prediction. 
	Once the model is fitted and the regression parameters are estimated based on the training data, 
	we evaluate the mean root square error of prediction based on the test data as $\text{RMSE} = \sqrt{\frac{1}{n}\sum_{i=1}^n\left(y_{\text{test},i}-\boldsymbol{x}_{\text{test},i}^T\widehat{\boldsymbol{\beta}}\right)^2}$. 
	for both Ad-DPD-LASSO and AW-DPD-LASSO estimators. 
	Figure \ref{covVSerror} shows the RSME values against the number of covariates for pure data, 
	$10\%$ of $Y$-outliers and $10\%$ of $\boldsymbol{X}$-outliers.
	Both methods perform similarly in all settings, with or without contamination, with the Ad-DPD-LASSO being slightly more steady. 
	By contrast AW-DPD-LASSO  produces lower RMSE when the number of covariates is lower, showing a greater  prediction accuracy.  
	
	\begin{figure}[htb]
		\centering
		\begin{subfigure}[]{\textwidth}
			\centering
			\includegraphics[height=3.9cm, width=5.6cm]{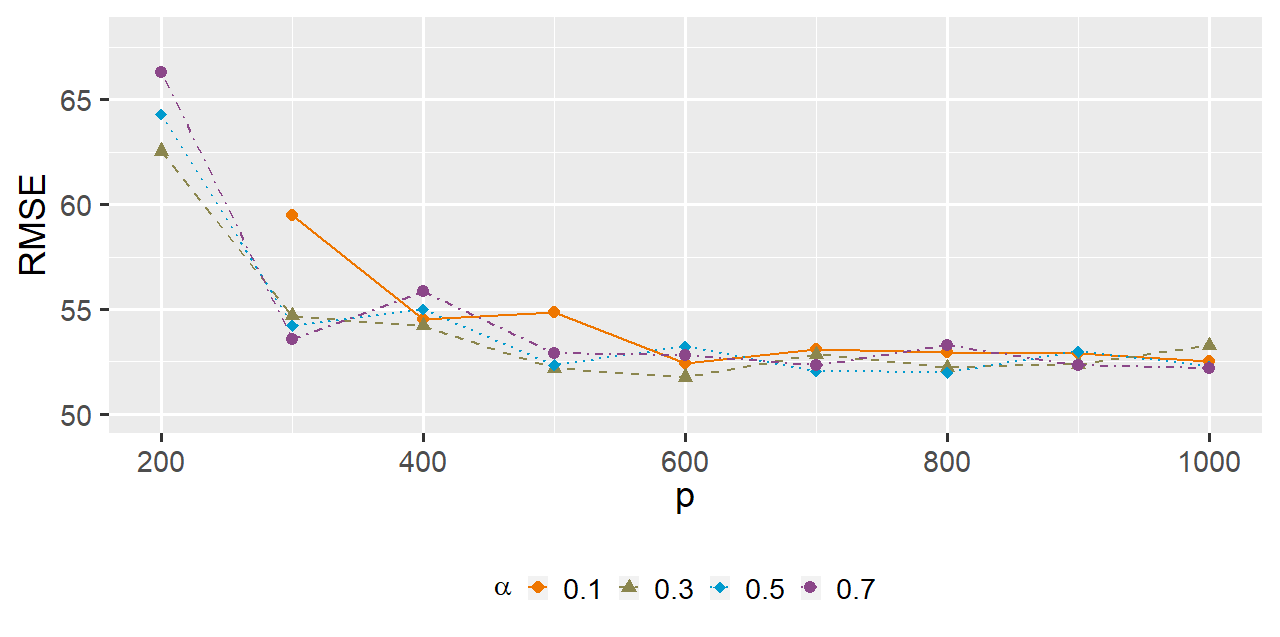}
			\includegraphics[height=3.9cm, width=5.6cm]{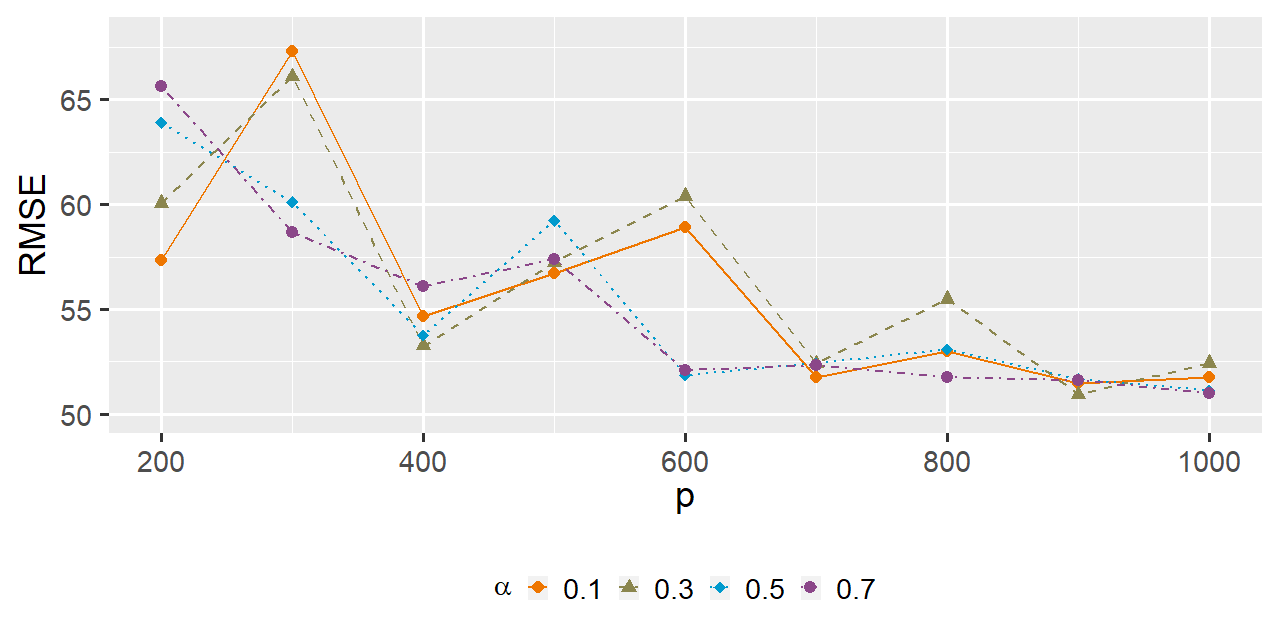}
			\caption{Pure data}
		\end{subfigure}
		\begin{subfigure}[b]{\textwidth}
			\centering
			\includegraphics[height=3.9cm, width=5.6cm]{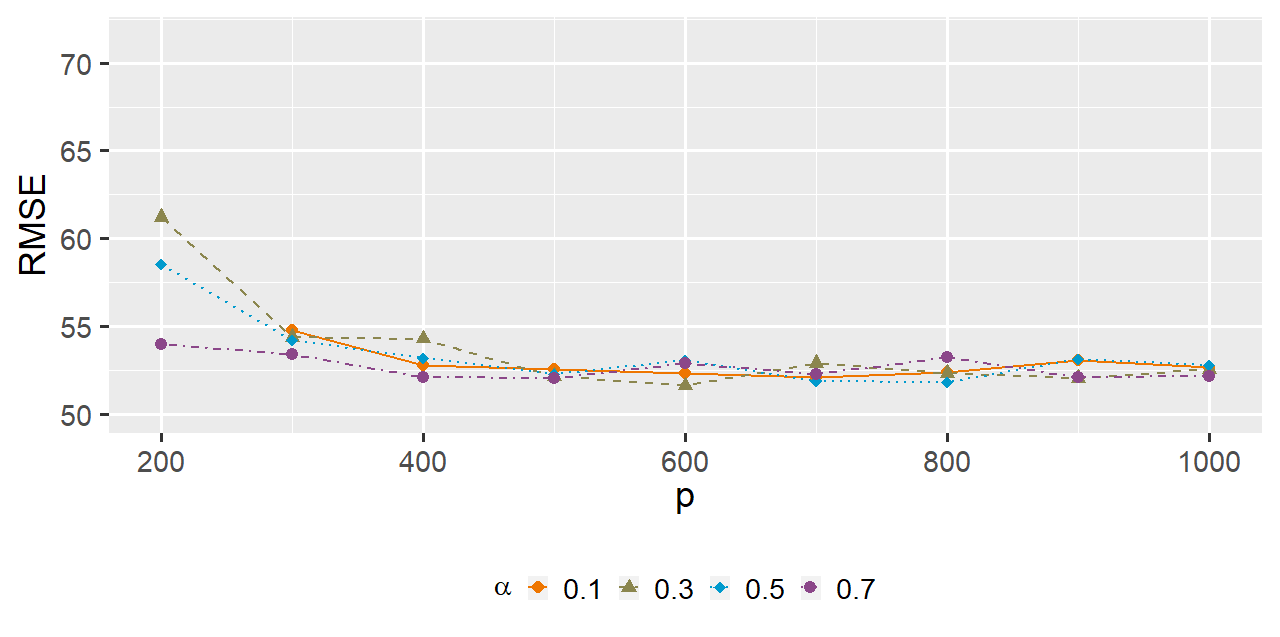}
			\includegraphics[height=3.9cm, width=5.6cm]{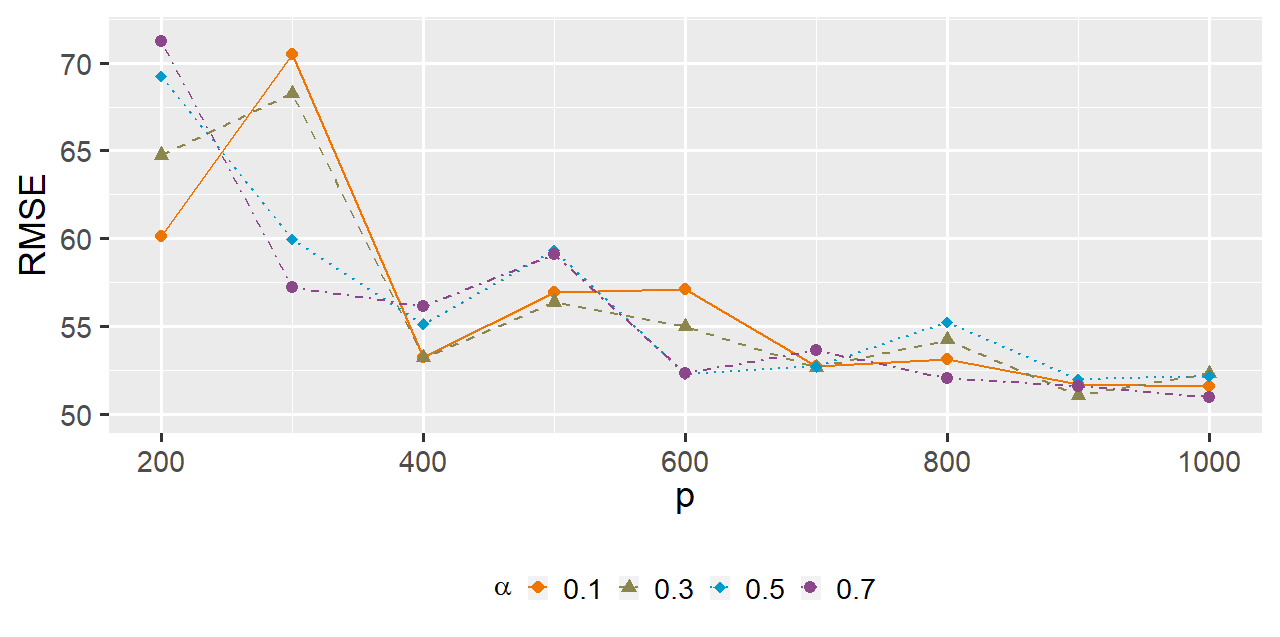}
			\subcaption{10$\%$ of $Y$-outliers}
		\end{subfigure}
		\begin{subfigure}[b]{\textwidth}
			\centering
			\includegraphics[height=3.9cm, width=5.6cm]{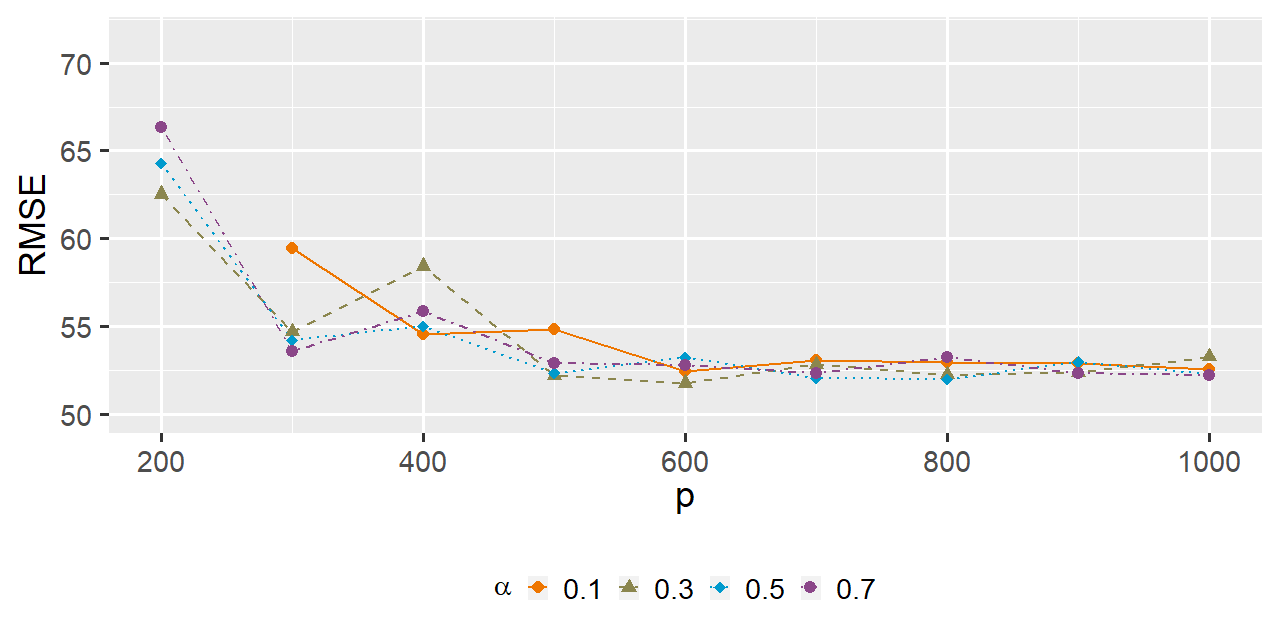}
			\includegraphics[height=3.9cm, width=5.6cm]{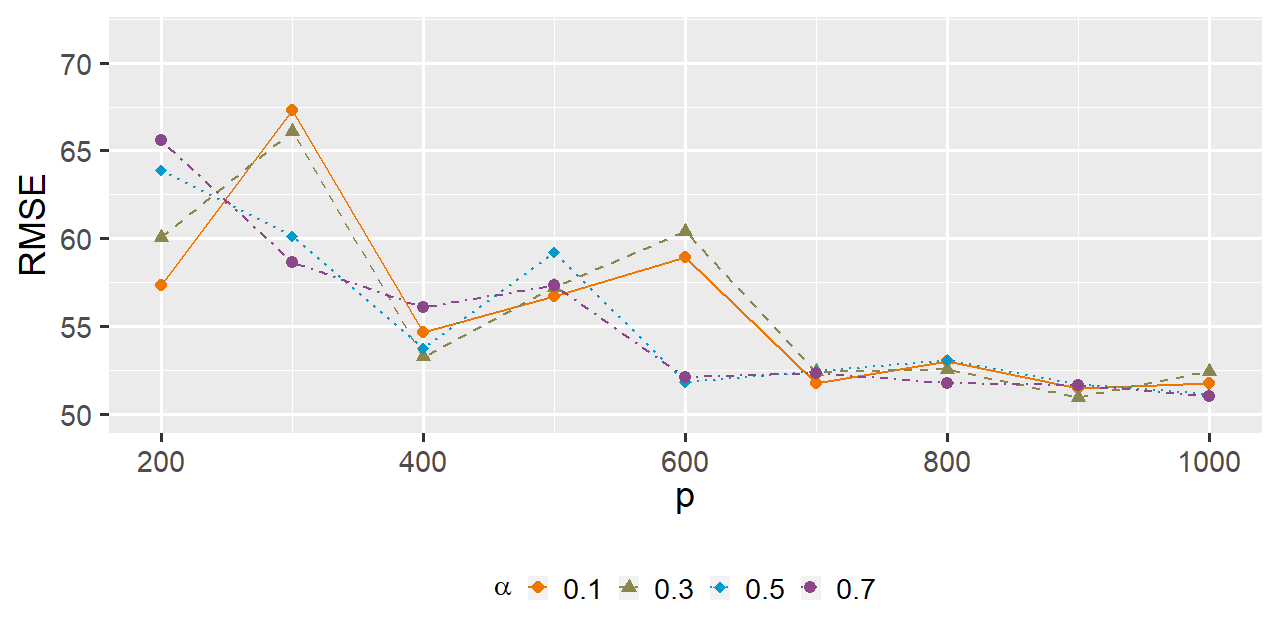}
			\subcaption{10$\%$ of $\boldsymbol{X}$-outliers}
		\end{subfigure}
		\caption{Number of covariates against RMSE obtained with Ad-DPD-LASSO (left panel) and AW-DPD-LASSO (right panel) methods under different simulation settings.}
		\label{covVSerror}
	\end{figure}
	
	Finally, we plot the RMSE  over the contamination level on $Y$-outliers from $0\%$ (pure data) to $25\%$, 
	for the case with  $p=500$ and Setting A in Figure \ref{contVSerror}; 
	the same for the $\boldsymbol{X}$-outliers are shown in Figure \ref{contVSerrorex1}. 
	For both methods, the robustness increases with greater values of $\gamma$, 
	but they perform quite similarly in terms of prediction error for the same values of $\gamma$. 
	For a greater contamination level (more than $15\%$ contamination), the RMSE at $\gamma=0.1$ grows exponentially.
	For covariate contamination, low values of $\gamma$ produce greater RMSE even under small contamination percentages.
	
	\begin{figure}[hbt]
		\begin{subfigure}[b]{\textwidth}
			\centering
			\includegraphics[height=5cm, width=5.6cm]{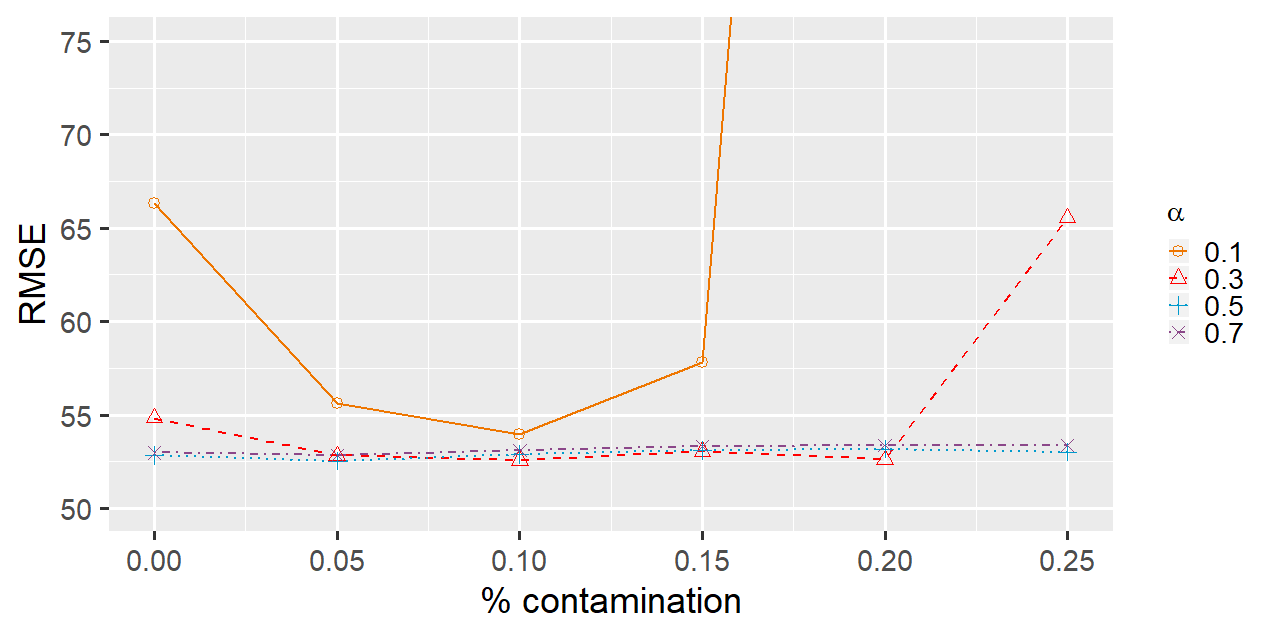}
			\includegraphics[height=5cm, width=5.6cm]{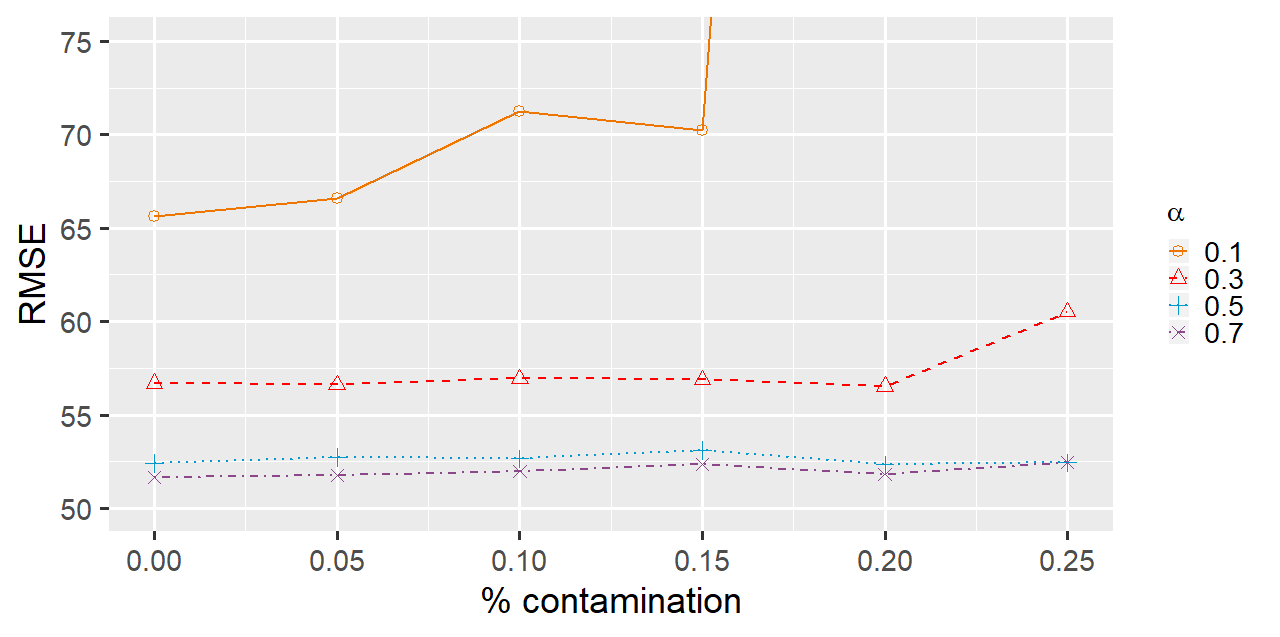}
			\caption{$Y$-outliers}
			\label{contVSerror}
		\end{subfigure}
	%
		\begin{subfigure}[b]{\textwidth}
			\centering
			\includegraphics[height=5cm, width=5.6cm]{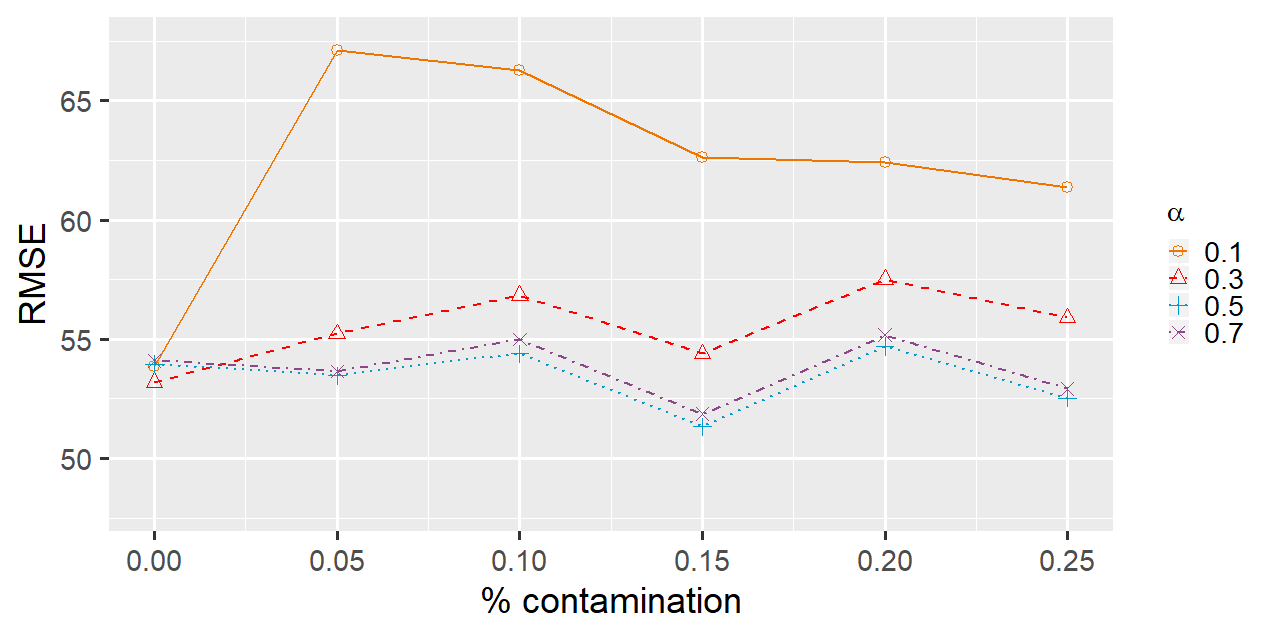}
			\includegraphics[height=5cm, width=5.6cm]{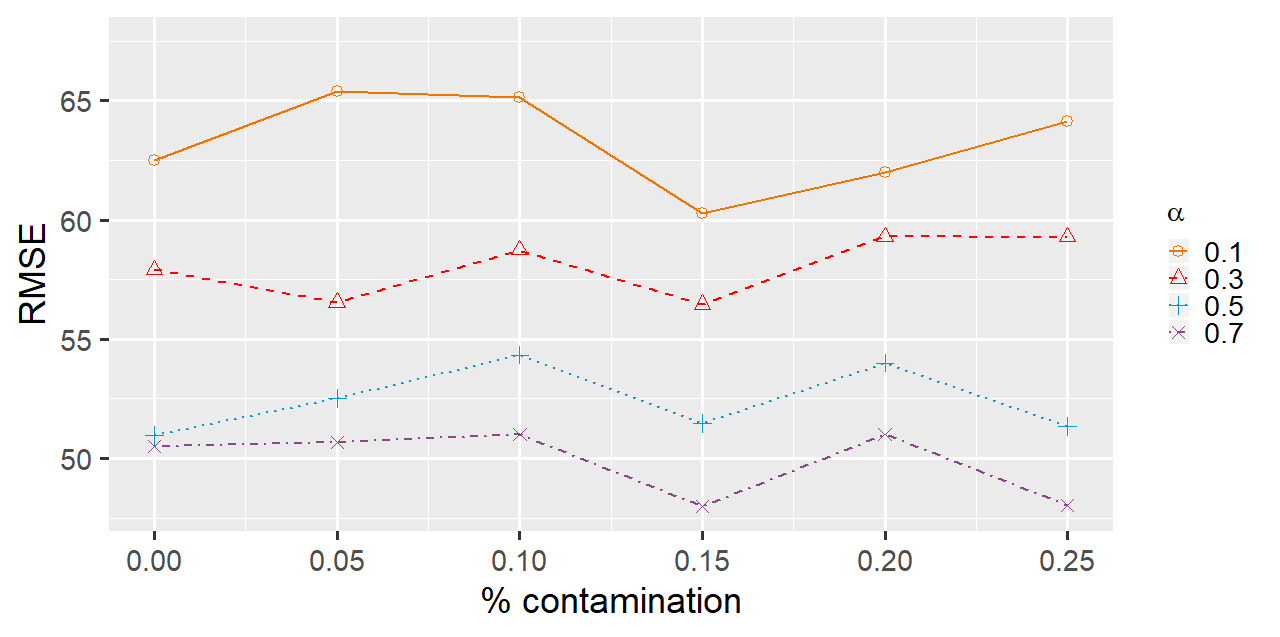}
			\caption{ $\boldsymbol{X}$-outliers}
			\label{contVSerrorex1}
		\end{subfigure}
		\caption{Contamination level of $Y$ and $\boldsymbol{X}$-outliers against RMSE with Ad-DPD-LASSO (left) and AW-DPD-LASSO (right) methods}
		\label{contVSerrore}
	\end{figure}

\end{document}